\providecommand{\U}[1]{\protect\rule{.1in}{.1in}}
\newtheorem{theorem}{Theorem}
\newtheorem{algorithm}[theorem]{Algorithm}
\newtheorem{corollary}[theorem]{Corollary}
\newenvironment{proof}[1][Proof]{\noindent\textbf{#1.} }{\ \rule{0.5em}{0.5em}}
\begin{document}

\title{\textbf{Phi-divergence statistics for the likelihood ratio order: an approach
based on log-linear models}}
\author{Martin, N.$^{1}$, Mata, R.$^{2}$ and Pardo, L.$^{2}$\thanks{Corresponding
author. E-mail: lpardo@mat.ucm.es, Tel: (+34) 91 394 4425, Fax: (+34) 91 394
4606.}\\$^{1}${\small Department of Statistics, Carlos III University of Madrid, 28903
Getafe (Madrid), Spain}\\$^{2}${\small Department of Statistics and O.R., Complutense University of
Madrid, 28040 Madrid, Spain} }
\date{\today}
\maketitle

\begin{abstract}
When some treatments are ordered according to the categories of an ordinal
categorical variable (e.g., extent of side effects) in a monotone order, one
might be interested in knowing wether the treatments are equally effective or
not. One way to do that is to test if the likelihood ratio order is strictly
verified. A method based on log-linear models is derived to make statistical
inference and phi-divergence test-statistics are proposed for the test of
interest. Focussed on loglinear modeling, the theory associated with the
asymptotic distribution of the phi-divergence test-statistics is developed. An
illustrative example motivates the procedure and a simulation study for small
and moderate sample sizes shows that it is possible to find phi-divergence
test-statistic with an exact size closer to nominal size and higher power in
comparison with the classical likelihood ratio.

\end{abstract}

\bigskip

\noindent\emph{Keywords and phrases}\textbf{:} Phi-divergence test statistics,
Inequality constrains, Likelihood ratio ordering, Loglinear modeling.

\section{Introduction\label{Intro}}

In this paper we are interested in comparing $I$ treatments when the response
variable is ordinal with $J$ categories. We can consider each treatment type
to be each of the $I$ ordinal categories of a variable $X$. We shall denote by
$Y$ the response variable and its conditional probabilities by%
\[
\boldsymbol{\pi}_{i}=(\pi_{i1},...,\pi_{iJ})^{T},\quad i=1,..,I,
\]
with
\[
\pi_{ij}=\Pr\left(  Y=j|X=i\right)  ,\quad j=1,...,J.
\]
For the $i$-th treatment and for each individual taken independently from a
sample of size $n_{i}$ its response is classified to be $\{1,...,J\}$
according to the conditional distribution of $Y|X=i$, $\boldsymbol{\pi}_{i}$.
In this setting the $J$-dimensional random variable associated with the
observed frequencies,%
\[
\boldsymbol{N}_{i}=(N_{i1},...,N_{iJ})^{T},
\]
is multinomially distributed with parameters $n_{i}$\ and $\boldsymbol{\pi
}_{i}$. Assuming that the different treatments are independent, the
probability distribution of the $I\times J$ dimensional random variable
$\boldsymbol{N}=(\boldsymbol{N}_{1}^{T},...,\boldsymbol{N}_{I}^{T})^{T}$ is
product-multinomial. We are going to consider a motivation example, taken from
Section 5 in Dardanoni and Forcina (1998), in order to clarify the problem
considered in this paper. In Table \ref{tt1}, duodenal ulcer patients of a
hospital are cross-classified according to an increasing order of $I=4$
severity degrees of the operation, and the extent of side effects, categorized
as None, Slight and Moderate ($J=3$).%

\begin{table}[htbp]  \tabcolsep2.8pt  \centering
$%
\begin{tabular}
[c]{p{2.5cm}p{2.5cm}p{2.5cm}p{2.5cm}}\hline
& None & Slight & Moderate\\\hline
Treatment $1$ & $61$ & $28$ & $7$\\
Treatment $2$ & $68$ & $23$ & $13$\\
Treatment $3$ & $58$ & $40$ & $12$\\
Treatment $4$ & $53$ & $38$ & $16$\\\hline
\end{tabular}
\ \ \ \ \ \ \ \ \ $%
\caption{Extent of size effect of four treatments.\label{tt1}}%
\end{table}%

We shall consider that Treatment $i+1$ is as good as Treatment $i$, for
$i=1,...,I-1$ simultaneously, if $\tfrac{\Pr(Y=j|X=i+1)}{\Pr(Y=j|X=i)}$ is
non-decreasing for all $j\in\{1,...,J\}$, i.e.%
\begin{equation}
\tfrac{\Pr(Y=j|X=i+1)}{\Pr(Y=j|X=i)}\leq\tfrac{\Pr(Y=j+1|X=i+1)}%
{\Pr(Y=j+1|X=i)},\qquad\text{for every }(i,j)\in\{1,...,I-1\}\times
\{1,...,J-1\}. \label{eq1}%
\end{equation}
This is the so called \textquotedblleft likelihood ratio
ordering\textquotedblright, sometimes called also \textquotedblleft local
ordering\textquotedblright\ (see Silvapulle and Sen (2005), Chapter 6). It is
very important to clarify that the likelihood ratio ordering is more
thoroughly referred to $J$ independendent multinomial samples of sizes equal
to $1$, in such a way that $\pi_{ij}=\Pr\left(  Y=j|X=i\right)  =\Pr\left(
\mathcal{M}(1,\boldsymbol{\pi}_{i})=\boldsymbol{e}_{j}\right)  $, where
$\boldsymbol{e}_{j}$ is the $j$-th unit vector. In a similar way, Treatment
$i+1$ is better than Treatment $i$, for $i=1,...,I-1$ simultaneously, if
(\ref{eq1}) holds with at least one strict inequality. For testing that
Treatment $i+1$ is better than Treatment $i$, for $i=1,...,I-1$ we can
consider%
\begin{subequations}
\begin{align}
&  H_{0}:\;\tfrac{\Pr(Y=j|X=i+1)}{\Pr(Y=j|X=i)}=\tfrac{\Pr(Y=j+1|X=i+1)}%
{\Pr(Y=j+1|X=i)}\quad\text{for every }(i,j)\in\{1,...,I-1\}\times
\{1,...,J-1\}\text{,}\label{eq2}\\
&  H_{1}:\;\tfrac{\Pr(Y=j|X=i+1)}{\Pr(Y=j|X=i)}\leq\tfrac{\Pr(Y=j+1|X=i+1)}%
{\Pr(Y=j+1|X=i)}\quad\text{for every }(i,j)\in\{1,...,I-1\}\times
\{1,...,J-1\}\label{eq3}\\
&  \qquad\text{and}\quad\tfrac{\Pr(Y=j|X=i+1)}{\Pr(Y=j|X=i)}<\tfrac
{\Pr(Y=j+1|X=i+1)}{\Pr(Y=j+1|X=i)}\quad\text{for at least one }(i,j)\in
\{1,...,I-1\}\times\{1,...,J-1\}\text{.}\nonumber
\end{align}
For the motivation example, the null hypothesis means that all the treatments
have equal side effects, while the alternative hypothesis means that as the
more severe treatment is, greater is the probability of having side effects.
Note that if we multiply on the left and right hand side of (\ref{eq2}) and
(\ref{eq3}) by $\left(  \tfrac{\Pr(Y=j|X=i+1)}{\Pr(Y=j|X=i)}\right)  ^{-1}$ we
obtain%
\end{subequations}
\begin{subequations}
\begin{align}
&  H_{0}:\;\vartheta_{ij}=1,\quad\forall(i,j)\in\{1,...,I-1\}\times
\{1,...,J-1\},\label{eq4}\\
&  H_{1}:\;\vartheta_{ij}\geq1\quad\text{for every }(i,j)\in
\{1,...,I-1\}\times\{1,...,J-1\}\label{eq5}\\
&  \qquad\text{and}\quad\vartheta_{ij}>1\quad\text{for at least one }%
(i,j)\in\{1,...,I-1\}\times\{1,...,J-1\}\text{,}\nonumber
\end{align}
where $\vartheta_{ij}=\dfrac{\pi_{ij}\pi_{i+1,j+1}}{\pi_{i+1,j}\pi_{i,j+1}}%
$\ represent the \textquotedblleft local odds ratios\textquotedblright, also
called cross-product ratios.

If we denote by $n=%
{\textstyle\sum\nolimits_{i=1}^{I}}
n_{i}$ the total of the sample sizes, we can consider the joint distribution
to be%
\end{subequations}
\[
p_{ij}=\Pr\left(  X=i,Y=j\right)  =\Pr\left(  X=i\right)  \Pr\left(
Y=j|X=i\right)  =\frac{n_{i}}{n}\pi_{ij},\quad i=1,...,I,\;j=1,...,J.
\]
We display such a distribution in a rectangular table having $I$ rows for the
categories of $X$ and $J$ columns for the categories of $Y$, and we denote
$\boldsymbol{P}=(\mathbf{p}_{1},...,\mathbf{p}_{I})^{T}$, with $\mathbf{p}%
_{i}=(p_{i1},...,p_{iJ})^{T}$, $i=1,...,I$, the corresponding $I\times J$
matrix and
\begin{equation}
\boldsymbol{p}=\mathrm{vec}(\boldsymbol{P}^{T})=(\mathbf{p}_{1}^{T}%
,...,\mathbf{p}_{I}^{T})^{T} \label{0}%
\end{equation}
a vector obtained by stacking the columns of $\boldsymbol{P}^{T}$\ (i.e., the
rows of matrix $\boldsymbol{P}$). Note that the components of $\boldsymbol{P}$
are ordered in lexicographical order in $\boldsymbol{p}$. The local odds
ratios can be expressed only in terms of joint probabilities\
\begin{equation}
\vartheta_{ij}=\dfrac{p_{ij}p_{i+1,j+1}}{p_{i+1,j}p_{i,j+1}}=\dfrac{\pi
_{ij}\pi_{i+1,j+1}}{\pi_{i+1,j}\pi_{i,j+1}},\quad\forall(i,j)\in
\{1,...,I-1\}\times\{1,...,J-1\}. \label{2}%
\end{equation}

The likelihood ratio ordering has been extensely studied in order statistics.
In the literature related to order restricted inference for categorical data
analysis, the likelihood ratio ordering has received little attention. The
definition given in (\ref{eq1}) is not specific for multinomial random
variables, actually is very similar for any random variable, not neccesarily
discrete. In Bapar and Kochar (1994), it is mentioned that very important
families of random variables, such as the one-parameter exponential family of
distributions, have the likelihood ratio ordering property with respect to the
parameter. For two independent multinomial samples ($I=2$), Dykstra et al.
(1995) established the asymptotic distribution of the likelihood ratio
test-statistic and Dardanoni and Forcina extended it for a general problem of
$I$ independent multinomial samples. Recently, Davidov et al. (2010) have
highlighted its importance by considering it as a particular case of the power
bias model. This ordering is important not only in fields such as order
statistics and its consideration cannot be avoided in categorical data
analysis. Dykstra et al. (1995) argued that the likelihood ratio ordering,
applied in an adapted product-multinomial sampling context, is a useful method
for making statistical inference related to trend comparison of Poisson
processes. The merit of the work of Dardanoni and Forcina (1998) is not only
in the results obtained for a more general case, but also in the
parametrization used for the method developed to find the asymptotic
distribution of the likelihood ratio-test. They studied three kinds of
ordering in the same parametrization setting but any of them is considered to
be superior, for example the likelihood ratio ordering is stronger in
comparison with the stochastic one.

In Section \ref{Sec:LM} of this paper a new method is proposed, based on
log-linear modeling, for characterizing the likelihood ratio test for the
likelihood ratio ordering in several independent multinomial samples. None
paper has considered any alternative test-statistic to the likelihood ratio
one and hence it is interesting to study the performance of the phi-divergence
test-statistics. These test-statistics, which include as a particular case the
likelihood ratio one, are introduced in Section \ref{Sec:PDM}. It is proven,
in Section \ref{Sec2}, that the asymptotic distribution of the phi-divergence
test-statistics is chi-bar and an algorithm is also provided to find its
weights in a simple way. An illustrative example is given in Section \ref{Ex},
as well as an algorithm, to clarify the method and the computational aspects.
We consider that the likelihood ratio ordering, as strong ordering type and
nested model within other ordering types, is a useful order since it should
have asymptotically, under an alternative hypothesis of likelihood ratio
order, much power. One of our interests in this regard, is to study though
simulation the performance of the likelihood ratio test-statistic for small
and moderate multinomial sample sizes, as Wang (1996) did in relation to the
stochastic order. This is done in Section \ref{MC}.

\section{Modeling local odds ratios through loglinear models \label{Sec:LM}}

Our first aim in this paper is to formulate the hypothesis testing problem
making a reparametrization using the saturated loglinear model associated with
$\boldsymbol{p}$, so that the restrictions are linear with respect to the new
parameters. Focussed on $\boldsymbol{p}$, the saturated loglinear model with
canonical parametrization is defined by
\begin{equation}
\log p_{ij}=u+u_{1(i)}+\theta_{2(j)}+\theta_{12(ij)}, \label{3}%
\end{equation}
with%
\begin{equation}
u_{1(I)}=0,\quad\theta_{2(J)}=0,\quad\theta_{12(iJ)}=0,i=1,...,I-1,\quad
\theta_{12(Ij)}=0,j=1,...,J. \label{ident}%
\end{equation}
It is important to clarify that we have used the identifiability constraints
(\ref{ident}) in order to make easier the calculations. Similar conditions
have been used for instance in Lang (1996, examples of Section 7) and Sen and
Silvapulle (2005, exercise 6.25 in page 345). Let $\boldsymbol{\theta}%
_{2}=(\theta_{2(1)},...,\theta_{2(J-1)})^{T}$, $\boldsymbol{\theta}%
_{12(i)}=(\theta_{12(i1)},...,\theta_{12(i,J-1)})^{T}$, $i=1,...,I-1$, be the
subvector of unknown parameters of , $\boldsymbol{\theta}=(\boldsymbol{\theta
}_{2}^{T},\boldsymbol{\theta}_{12(1)}^{T},...,\boldsymbol{\theta}%
_{12(I-1)}^{T})^{T}$ and the vector of redundant components of the model
$\boldsymbol{u}=(u,u_{1(1)},...,u_{1(I-1)})^{T}$, since its components can be
expressed in terms of $\boldsymbol{\theta}$\ as follows%
\begin{align}
u  &  =u(\boldsymbol{\theta})=\log n_{I}-\log n-\log%
{\textstyle\sum\nolimits_{j=1}^{J}}
\exp\{\theta_{2(j)}\}\nonumber\\
&  =\log n_{I}-\log n-\log\left(  1+\boldsymbol{1}_{J-1}^{T}\exp
\{\boldsymbol{\theta}_{2}\}\right)  , \label{u}%
\end{align}%
\begin{align}
u_{1(i)}  &  =u_{1(i)}(\boldsymbol{\theta})=\log n_{i}-\log n-\log%
{\textstyle\sum\nolimits_{j=1}^{J}}
\exp\{\theta_{2(j)}+\theta_{12(ij)}\}-u(\boldsymbol{\theta}).\nonumber\\
&  =\log n_{i}-\log n_{I}+\log\left(  1+%
{\textstyle\sum\nolimits_{j=1}^{J-1}}
\exp\{\theta_{2(j)}\}\right)  -\log\left(  1+%
{\textstyle\sum\nolimits_{j=1}^{J-1}}
\exp\{\theta_{2(j)}+\theta_{12(ij)}\}\right) \nonumber\\
&  =\log\frac{n_{i}}{n_{I}}+\log\frac{1+\boldsymbol{1}_{J-1}^{T}%
\exp\{\boldsymbol{\theta}_{2}\}}{1+\boldsymbol{1}_{J-1}^{T}\exp
\{\boldsymbol{\theta}_{2}+\boldsymbol{\theta}_{12(i)}\}},\quad i=1,...,I-1.
\label{theta1}%
\end{align}
Note that the expressions of $u_{1(i)}=u_{1(i)}(\boldsymbol{\theta})$,
$i=1,..,I-1$ and $u=u(\boldsymbol{\theta})$ are obtained taking into account
that for product-multinomial sampling $%
{\textstyle\sum\nolimits_{j=1}^{J}}
p_{ij}(\boldsymbol{\theta})=\frac{n_{i}}{n}$, $i=1,...,I$. In matrix notation,
(\ref{3}) is given by%
\begin{align}
\log\boldsymbol{p}(\boldsymbol{\theta})  &  =\left(
\begin{pmatrix}
\boldsymbol{1}_{I-1} & \boldsymbol{I}_{I-1}\\
1 & \boldsymbol{0}_{I-1}^{T}%
\end{pmatrix}
\otimes%
\begin{pmatrix}
\boldsymbol{1}_{J-1} & \boldsymbol{I}_{J-1}\\
1 & \boldsymbol{0}_{J-1}^{T}%
\end{pmatrix}
\right)
\begin{pmatrix}
\boldsymbol{u}\\
\boldsymbol{\theta}%
\end{pmatrix}
\nonumber\\
&  =\boldsymbol{W}_{0}\boldsymbol{u}+\boldsymbol{W\theta}, \label{loglin}%
\end{align}
where $\otimes$ is the Kronecker product (see Chapter 16 of Harville (2008)),
$\boldsymbol{I}_{a}$\ is the the identity matrix of order $a$, $\boldsymbol{0}%
_{a}$\ is the $a$-vector of zeros, $\boldsymbol{p}(\boldsymbol{\theta})$ is
$\boldsymbol{p}$\ such that the components are defined by (\ref{3}) and%
\[
\boldsymbol{W}_{0}=%
\begin{pmatrix}
\boldsymbol{1}_{I-1} & \boldsymbol{I}_{I-1}\\
1 & \boldsymbol{0}_{I-1}^{T}%
\end{pmatrix}
\otimes\boldsymbol{1}_{J},\qquad\boldsymbol{W}=%
\begin{pmatrix}
\boldsymbol{1}_{I-1} & \boldsymbol{I}_{I-1}\\
1 & \boldsymbol{0}_{I-1}^{T}%
\end{pmatrix}
\otimes%
\begin{pmatrix}
\boldsymbol{I}_{J-1}\\
\boldsymbol{0}_{J-1}^{T}%
\end{pmatrix}
.
\]
Condition (\ref{eq1}) can be expressed by the linear constraint
\begin{equation}
\theta_{12(ij)}-\theta_{12(i+1,j)}-\theta_{12(i,j+1)}+\theta_{12(i+1,j+1)}%
\geq0,\text{ }\forall(i,j)\in\{1,...,I-1\}\times\{1,...,J-1\}, \label{6}%
\end{equation}
because%
\[
\log\vartheta_{ij}=\log p_{ij}-\log p_{i+1,j}-\log p_{i,j+1}+\log
p_{i+1,j+1}=\theta_{12(ij)}-\theta_{12(i+1,j)}-\theta_{12(i,j+1)}%
+\theta_{12(i+1,j+1)}.
\]
Let us consider $\boldsymbol{R\theta}\geq\boldsymbol{0}_{(I-1)(J-1)}$, with
$\boldsymbol{R}=\boldsymbol{e}_{J}\otimes\left(  \boldsymbol{G}_{I-1}%
\otimes\boldsymbol{G}_{J-1}\right)  =(\boldsymbol{0}_{(I-1)(J-1)\times
(J-1)},\boldsymbol{G}_{I-1}\otimes\boldsymbol{G}_{J-1})$, $\boldsymbol{0}%
_{a\times b}$ is the $a\times b$\ matrix of zeros and $\boldsymbol{G}_{h}$ is
a $h\times h$\ matrix with $1$-s in the main diagonal and $-1$-s in the upper
superdiagonal. Such restriction are equivalent to condition (\ref{6}). Observe
that the restrictions can be expressed also as $\boldsymbol{R}_{12}%
\boldsymbol{\theta}_{12}\geq\boldsymbol{0}_{(I-1)(J-1)}$, and
$\boldsymbol{\theta}_{2}$\ is a nuisance parameter vector because it does not
take part actively in the restrictions. The kernel of the likelihood function
with the new parametrization is obtained replacing $\boldsymbol{p}$\ by
$\boldsymbol{p}(\boldsymbol{\theta})$ in (\ref{0b}), i.e.
\[
\ell(\boldsymbol{N};\boldsymbol{\theta})=\boldsymbol{N}^{T}\log\boldsymbol{p}%
(\boldsymbol{\theta})=\boldsymbol{N}^{T}\boldsymbol{W}_{0}(\boldsymbol{u}%
(\widetilde{\boldsymbol{\theta}})-\boldsymbol{u}(\widehat{\boldsymbol{\theta}%
}))+\boldsymbol{N}^{T}\boldsymbol{W}(\widetilde{\boldsymbol{\theta}%
}-\widehat{\boldsymbol{\theta}})=nu(\boldsymbol{\theta})+%
{\textstyle\sum\nolimits_{i=1}^{I-1}}
n_{i}u_{1(i)}(\boldsymbol{\theta})+\boldsymbol{N}^{T}\boldsymbol{W\theta}.
\]
Hypotheses (\ref{eq2})-(\ref{eq3}) or (\ref{eq4})-(\ref{eq5}) can be now
formulated as%
\begin{equation}
H_{0}:\boldsymbol{R\theta}=\boldsymbol{0}_{(I-1)(J-1)}\text{ versus }%
H_{1}:\boldsymbol{R\theta}\geq\boldsymbol{0}_{(I-1)(J-1)}\text{ and
}\boldsymbol{R\theta}\neq\boldsymbol{0}_{(I-1)(J-1)}\text{.} \label{4b}%
\end{equation}

\section{Test-statistics based on phi-divergence measures\label{Sec:PDM}}

The likelihood function in our model is $\mathcal{L}(\boldsymbol{N}%
;\boldsymbol{p})=%
{\textstyle\prod\nolimits_{i=1}^{I}}
k_{i}%
{\textstyle\prod\nolimits_{j=1}^{J}}
\pi_{ij}^{N_{ij}}$, where $k_{i}=n_{i}!/%
{\textstyle\prod\nolimits_{j=1}^{J}}
N_{ij}!$, and the kernel of the loglikelihood function, in terms of
$\boldsymbol{p}$, is%
\begin{equation}
\ell(\boldsymbol{N};\boldsymbol{p})=%
{\displaystyle\sum\limits_{i=1}^{I}}
{\displaystyle\sum\limits_{j=1}^{J}}
N_{ij}\log p_{ij}. \label{0b}%
\end{equation}
Under $H_{0}$, the parameter space is $\Theta_{0}=\left\{  \boldsymbol{\theta
}\in%
\mathbb{R}
^{I(J-1)}:\boldsymbol{R\theta}=\boldsymbol{0}_{(I-1)(J-1)}\right\}  $ and the
maximum likelihood estimator of $\boldsymbol{\theta}$ in $\Theta_{0}$ is
$\widehat{\boldsymbol{\theta}}=\arg\max_{\boldsymbol{\theta\in}\Theta_{0}}%
\ell(\boldsymbol{N};\boldsymbol{\theta})$. Under either $H_{0}$\ or $H_{1}$,
the overall parameter space is $\Theta_{1}=\left\{  \boldsymbol{\theta}\in%
\mathbb{R}
^{I(J-1)}:\boldsymbol{R\theta}\geq\boldsymbol{0}_{(I-1)(J-1)}\right\}  $ and
the maximum likelihood estimator of $\boldsymbol{\theta}$ in $\Theta_{1}$ is
$\widetilde{\boldsymbol{\theta}}=\arg\max_{\boldsymbol{\theta\in}\Theta_{1}%
}\ell(\boldsymbol{N};\boldsymbol{\theta})$.

The likelihood ratio test-statistic for testing (\ref{4b}) is%
\begin{equation}
G^{2}=2(\ell(\boldsymbol{N};\widetilde{\boldsymbol{\theta}})-\ell
(\boldsymbol{N};\widehat{\boldsymbol{\theta}}%
))=2n(u(\widetilde{\boldsymbol{\theta}})-u(\widehat{\boldsymbol{\theta}}))+2n%
{\textstyle\sum\nolimits_{i=1}^{I-1}}
(u_{1(i)}(\widetilde{\boldsymbol{\theta}})-u_{1(i)}%
(\widehat{\boldsymbol{\theta}}))+2\boldsymbol{N}^{T}\boldsymbol{W}\left(
\widetilde{\boldsymbol{\theta}}-\widehat{\boldsymbol{\theta}}\right)  ,
\label{LRT}%
\end{equation}
and the chi-square test-statistic for testing (\ref{4b}) is%
\begin{equation}
X^{2}=n%
{\displaystyle\sum\limits_{i=1}^{I}}
{\displaystyle\sum\limits_{j=1}^{J}}
\frac{(p_{ij}(\widehat{\boldsymbol{\theta}})-p_{ij}%
(\widetilde{\boldsymbol{\theta}}))^{2}}{p_{ij}(\widehat{\boldsymbol{\theta}}%
)}. \label{CS}%
\end{equation}
Let $\overline{\boldsymbol{p}}=\boldsymbol{N}/n$ the vector of relative
frequencies,%
\[
d_{Kull}(\boldsymbol{p},\boldsymbol{q})=\sum_{i=1}^{I}\sum_{j=1}^{J}p_{ij}%
\log\frac{p_{ij}}{q_{ij}}%
\]
the Kullback-Leibler divergence measure between two $IJ$-dimensional
probability vectors $\boldsymbol{p}$ and $\boldsymbol{q}$, and
\[
d_{Pearson}(\boldsymbol{p},\boldsymbol{q})=\frac{1}{2}\sum_{i=1}^{I}\sum
_{j=1}^{J}\frac{(p_{ij}-q_{ij})^{2}}{q_{ij}}%
\]
the Pearson divergence measure. It is not difficult to check that
$G^{2}=2n(d_{Kull}(\overline{\boldsymbol{p}},\boldsymbol{p}%
(\widehat{\boldsymbol{\theta}}))-d_{Kull}(\overline{\boldsymbol{p}%
},\boldsymbol{p}(\widetilde{\boldsymbol{\theta}})))$ and $X^{2}=2nd_{Pearson}%
(\boldsymbol{p}(\widetilde{\boldsymbol{\theta}}),\boldsymbol{p}%
(\widehat{\boldsymbol{\theta}}))$. More general than the Kullback-Leibler
divergence and Pearson divergence measures are $\phi$-divergence measures,
defined as%
\begin{equation}
d_{\phi}(\boldsymbol{p},\boldsymbol{q})=\sum_{i=1}^{I}\sum_{j=1}^{J}q_{ij}%
\phi\left(  \frac{p_{ij}}{q_{ij}}\right)  , \label{div}%
\end{equation}
where $\phi:%
\mathbb{R}
_{+}\longrightarrow%
\mathbb{R}
$ is a convex function such that $\phi(1)=\phi^{\prime}(1)=0$, $\phi
^{\prime\prime}(1)>0$, $0\phi(\frac{0}{0})=0$, $0\phi(\frac{p}{0}%
)=p\lim_{u\rightarrow\infty}\frac{\phi(u)}{u}$, for $p\neq0$. For more details
about $\phi$-divergence measures see Pardo (2005).

Our second aim in this paper is to formulate test statistics valid for testing
(\ref{4b}). Apart from the likelihood ratio statistic (\ref{LRT}) and the
chi-square (\ref{CS}) statistic, we shall consider two family of
test-statistics based on $\phi$-divergence measures,%
\begin{equation}
T_{\phi}(\overline{\boldsymbol{p}},\boldsymbol{p}%
(\widetilde{\boldsymbol{\theta}}),\boldsymbol{p}(\widehat{\boldsymbol{\theta}%
}))=\frac{2n}{\phi^{\prime\prime}(1)}(d_{\phi}(\overline{\boldsymbol{p}%
},\boldsymbol{p}(\widehat{\boldsymbol{\theta}}))-d_{\phi}(\overline
{\boldsymbol{p}},\boldsymbol{p}(\widetilde{\boldsymbol{\theta}}))) \label{5a}%
\end{equation}
and%
\begin{equation}
S_{\phi}(\boldsymbol{p}(\widetilde{\boldsymbol{\theta}}),\boldsymbol{p}%
(\widehat{\boldsymbol{\theta}}))=\frac{2n}{\phi^{\prime\prime}(1)}d_{\phi
}(\boldsymbol{p}(\widetilde{\boldsymbol{\theta}}),\boldsymbol{p}%
(\widehat{\boldsymbol{\theta}})). \label{5b}%
\end{equation}
The two previous families of test-statistics can be considered as a natural
extension of likelihood ratio test-statistic and chi-square test-statistic
respectively. More thorouhly, for $\phi(x)=x\log x-x+1$ in (\ref{5a}), we get
the likelihood ratio test-statistic and for $\phi(x)=\frac{1}{2}(x-1)^{2}$ in
(\ref{5b}) we get the chi-square test-statistic.

Section \ref{Sec2} is devoted to present the main theoretical results in the
paper. In Section \ref{Ex}, an example illustrates the results of Section
\ref{Sec2}. A simulation study is developed in Section \ref{MC}, in order to
study the behaviour of the families of test-statistics introduced in
(\ref{5a}) and (\ref{5b}). Finally, we present an appendix in which we
establish some parts of the proofs of the results given in Section \ref{Sec2}.

\section{Asymptotic distribution\label{Sec2}}

For product-multinomial sampling, if we consider the partitioned matrix
$\boldsymbol{W}^{T}=(\boldsymbol{W}_{1}^{T},...,\boldsymbol{W}_{I}^{T})$, such
that $\log\boldsymbol{p}_{i}(\boldsymbol{\theta})=u\boldsymbol{1}_{J}%
+u_{1(i)}\boldsymbol{1}_{J}+\boldsymbol{W}_{i}\boldsymbol{\theta}$,
$i=1,...,I-1$, $\log\boldsymbol{p}_{I}(\boldsymbol{\theta})=u\boldsymbol{1}%
_{J}+\boldsymbol{W}_{I}\boldsymbol{\theta}$, and $\mathcal{I}_{F,i}%
^{(n_{1},...,n_{I})}(\boldsymbol{\theta})=\boldsymbol{W}_{i}^{T}%
(\boldsymbol{D}_{\boldsymbol{\pi}_{i}(\boldsymbol{\theta})}-\boldsymbol{\pi
}_{i}(\boldsymbol{\theta})\boldsymbol{\pi}_{i}^{T}(\boldsymbol{\theta
}))\boldsymbol{W}_{i}$, we have%
\begin{equation}
\mathcal{I}_{F}^{(n_{1},...,n_{I})}(\boldsymbol{\theta})=\sum_{i=1}^{I}%
\frac{n_{i}}{n}\mathcal{I}_{F,i}^{(n_{1},...,n_{I})}(\boldsymbol{\theta
})=\boldsymbol{W}^{T}\left(  \bigoplus_{i=1}^{I}\frac{n_{i}}{n}\left(
\boldsymbol{D}_{\boldsymbol{\pi}_{i}(\boldsymbol{\theta})}-\boldsymbol{\pi
}_{i}(\boldsymbol{\theta})\boldsymbol{\pi}_{i}^{T}(\boldsymbol{\theta
})\right)  \right)  \boldsymbol{W}, \label{FIM1}%
\end{equation}
where $\bigoplus_{h=1}^{a}\boldsymbol{A}_{h}$ denotes the direct sum between
the matrices $\{\boldsymbol{A}_{h}\}_{h=1}^{a}$. Our interest is to establish
the Fisher information matrix, for $n\rightarrow\infty$, under the assumption%
\[
\boldsymbol{\pi}_{1}(\boldsymbol{\theta}_{0})=\cdots=\boldsymbol{\pi}%
_{I}(\boldsymbol{\theta}_{0})=(\pi_{1}(\boldsymbol{\theta}_{0}),...,\pi
_{J}(\boldsymbol{\theta}_{0}))^{T}=\boldsymbol{\pi}(\boldsymbol{\theta}%
_{0}),\quad i=1,...,I,
\]
which is equivalent to the null hypothesis (\ref{eq2}) or (\ref{eq4}). Let
$\boldsymbol{\pi}^{\ast}(\boldsymbol{\theta}_{0})=(\pi_{1}(\boldsymbol{\theta
}_{0}),...,\pi_{J-1}(\boldsymbol{\theta}_{0}))^{T}$\ be the subvector of
$\boldsymbol{\pi}(\boldsymbol{\theta}_{0})$ obtained deleting the $J$-th
element, $\pi_{J}(\boldsymbol{\theta}_{0})$, from $\boldsymbol{\pi
}(\boldsymbol{\theta}_{0})$. If we consider the probability vector%
\[
\boldsymbol{\nu}=(\nu_{1},...,\nu_{I})^{T},
\]
such that $\nu_{i}=\lim_{n\rightarrow\infty}\frac{n_{i}}{n}$, $i=1,...,I$, we
denote by $\boldsymbol{\nu}^{\ast}=(\nu_{1},...,\nu_{I-1})^{T}$\ its subvector
obtained deleting the $I$-th element, $\nu_{I}$, from $\boldsymbol{\nu}$.

\begin{theorem}
\label{th0}If we denote $\mathcal{I}_{F}(\boldsymbol{\theta})=\lim
_{n\rightarrow\infty}\mathcal{I}_{F}^{(n_{1},...,n_{I})}(\boldsymbol{\theta})$
when $\boldsymbol{\theta}\in\Theta_{0}$, we have%
\begin{equation}
\mathcal{I}_{F}(\boldsymbol{\theta}_{0})=%
\begin{pmatrix}
1 & \boldsymbol{\nu}^{\ast T}\\
\boldsymbol{\nu}^{\ast} & \bigoplus_{i=1}^{I-1}\nu_{i}%
\end{pmatrix}
\otimes\left(  \boldsymbol{D}_{\boldsymbol{\pi}^{\ast}(\boldsymbol{\theta}%
_{0})}-\boldsymbol{\pi}^{\ast}(\boldsymbol{\theta}_{0})\boldsymbol{\pi}^{\ast
T}(\boldsymbol{\theta}_{0})\right)  . \label{FIM2}%
\end{equation}

\end{theorem}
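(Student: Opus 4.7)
The plan is to push the equality (\ref{FIM1}) through to the limit under $H_0$ by exploiting the Kronecker structure of $\boldsymbol{W}$ and the common value of $\boldsymbol{\pi}_i(\boldsymbol{\theta}_0)$. First I would use the null-hypothesis condition $\boldsymbol{\pi}_1(\boldsymbol{\theta}_0)=\cdots=\boldsymbol{\pi}_I(\boldsymbol{\theta}_0)=\boldsymbol{\pi}(\boldsymbol{\theta}_0)$ to observe that the matrices $\boldsymbol{D}_{\boldsymbol{\pi}_i(\boldsymbol{\theta}_0)}-\boldsymbol{\pi}_i(\boldsymbol{\theta}_0)\boldsymbol{\pi}_i^{T}(\boldsymbol{\theta}_0)$ appearing in the direct sum of (\ref{FIM1}) all coincide with the single $J\times J$ matrix $\boldsymbol{V}:=\boldsymbol{D}_{\boldsymbol{\pi}(\boldsymbol{\theta}_0)}-\boldsymbol{\pi}(\boldsymbol{\theta}_0)\boldsymbol{\pi}^{T}(\boldsymbol{\theta}_0)$. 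Hence the direct sum factors as a Kronecker product,
$$
\bigoplus_{i=1}^{I}\frac{n_i}{n}\boldsymbol{V}
\;=\;\mathrm{diag}\!\left(\tfrac{n_1}{n},\ldots,\tfrac{n_I}{n}\right)\otimes\boldsymbol{V},
$$
and letting $n\to\infty$ the first factor tends to $\boldsymbol{D}_{\boldsymbol{\nu}}:=\mathrm{diag}(\nu_1,\ldots,\nu_I)$.

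Next I would exploit the Kronecker form $\boldsymbol{W}=\boldsymbol{A}\otimes\boldsymbol{B}$, where
$$
\boldsymbol{A}=\begin{pmatrix}\boldsymbol{1}_{I-1} & \boldsymbol{I}_{I-1}\\ 1 & \boldsymbol{0}_{I-1}^{T}\end{pmatrix},\qquad
\boldsymbol{B}=\begin{pmatrix}\boldsymbol{I}_{J-1}\\ \boldsymbol{0}_{J-1}^{T}\end{pmatrix},
$$
and apply the mixed-product property to obtain
$$
\mathcal{I}_F(\boldsymbol{\theta}_0)=\boldsymbol{W}^{T}(\boldsymbol{D}_{\boldsymbol{\nu}}\otimes\boldsymbol{V})\boldsymbol{W}=(\boldsymbol{A}^{T}\boldsymbol{D}_{\boldsymbol{\nu}}\boldsymbol{A})\otimes(\boldsymbol{B}^{T}\boldsymbol{V}\boldsymbol{B}).
$$
This reduces the problem to two small block computations.

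The right factor is immediate: $\boldsymbol{B}^{T}\boldsymbol{V}\boldsymbol{B}$ is just the leading $(J-1)\times(J-1)$ principal submatrix of $\boldsymbol{V}$, which is $\boldsymbol{D}_{\boldsymbol{\pi}^{\ast}(\boldsymbol{\theta}_0)}-\boldsymbol{\pi}^{\ast}(\boldsymbol{\theta}_0)\boldsymbol{\pi}^{\ast T}(\boldsymbol{\theta}_0)$, exactly the second factor in (\ref{FIM2}). For the left factor I would partition $\boldsymbol{D}_{\boldsymbol{\nu}}=\mathrm{diag}(\boldsymbol{D}_{\boldsymbol{\nu}^{\ast}},\nu_I)$ and carry out a block multiplication, which yields
$$
\boldsymbol{A}^{T}\boldsymbol{D}_{\boldsymbol{\nu}}\boldsymbol{A}=\begin{pmatrix}\boldsymbol{1}_{I-1}^{T}\boldsymbol{\nu}^{\ast}+\nu_I & \boldsymbol{\nu}^{\ast T}\\ \boldsymbol{\nu}^{\ast} & \boldsymbol{D}_{\boldsymbol{\nu}^{\ast}}\end{pmatrix},
$$
and the top-left entry collapses to $1$ because $\nu_1+\cdots+\nu_I=1$. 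Identifying $\boldsymbol{D}_{\boldsymbol{\nu}^{\ast}}$ with $\bigoplus_{i=1}^{I-1}\nu_i$ gives the first factor of (\ref{FIM2}).

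There is no real obstacle: the only points requiring a moment of care are the correct factorisation $\boldsymbol{W}=\boldsymbol{A}\otimes\boldsymbol{B}$ (so that the mixed-product identity applies cleanly) and the use of the constraint $\sum_i\nu_i=1$ to recover the leading~$1$ in the displayed matrix. Everything else is bookkeeping.
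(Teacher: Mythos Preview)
Your proposal is correct and follows essentially the same route as the paper's own proof: both recognise that under $H_0$ the direct sum in (\ref{FIM1}) factors as $\mathrm{diag}(n_1/n,\ldots,n_I/n)\otimes(\boldsymbol{D}_{\boldsymbol{\pi}(\boldsymbol{\theta}_0)}-\boldsymbol{\pi}(\boldsymbol{\theta}_0)\boldsymbol{\pi}^T(\boldsymbol{\theta}_0))$, then exploit the Kronecker form $\boldsymbol{W}=\boldsymbol{A}\otimes\boldsymbol{B}$ and the mixed-product property to reduce to the two small block computations you describe. The only cosmetic difference is that the paper carries the finite-sample weights $n_i/n$ through the whole calculation and passes to the limit at the very end, whereas you take the limit on the diagonal factor first; the arithmetic is identical.
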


\begin{proof}
Replacing $\boldsymbol{\theta}$ by $\boldsymbol{\theta}_{0}$ and the explicit
expression of $\boldsymbol{W}$\ in the general expression of the finite sample
size Fisher information matrix for two independent multinomial samples,
(\ref{FIM1}), we obtain through the property of the Kronecker product given in
(1.22) of Harville (2008, page 341) that%
\begin{align*}
\mathcal{I}_{F}^{(n_{1},...,n_{I})}(\boldsymbol{\theta}_{0})  &  =\left(
\begin{pmatrix}
\boldsymbol{1}_{I-1} & \boldsymbol{I}_{I-1}\\
1 & \boldsymbol{0}_{I-1}^{T}%
\end{pmatrix}
^{T}\otimes%
\begin{pmatrix}
\boldsymbol{I}_{J-1}\\
\boldsymbol{0}_{J-1}^{T}%
\end{pmatrix}
^{T}\right)  \left(  \left(
{\textstyle\bigoplus\limits_{i=1}^{I}}
\tfrac{n_{i}}{n}\right)  \otimes(\boldsymbol{D}_{\boldsymbol{\pi
}(\boldsymbol{\theta}_{0})}-\boldsymbol{\pi}(\boldsymbol{\theta}%
_{0})\boldsymbol{\pi}^{T}(\boldsymbol{\theta}_{0}))\right) \\
&  \times\left(
\begin{pmatrix}
\boldsymbol{1}_{I-1} & \boldsymbol{I}_{I-1}\\
1 & \boldsymbol{0}_{I-1}^{T}%
\end{pmatrix}
\otimes%
\begin{pmatrix}
\boldsymbol{I}_{J-1}\\
\boldsymbol{0}_{J-1}^{T}%
\end{pmatrix}
\right) \\
&  =\left(
\begin{pmatrix}
\boldsymbol{1}_{I-1} & \boldsymbol{I}_{I-1}\\
1 & \boldsymbol{0}_{I-1}^{T}%
\end{pmatrix}
^{T}\left(
{\textstyle\bigoplus\limits_{i=1}^{I}}
\tfrac{n_{i}}{n}\right)
\begin{pmatrix}
\boldsymbol{1}_{I-1} & \boldsymbol{I}_{I-1}\\
1 & \boldsymbol{0}_{I-1}^{T}%
\end{pmatrix}
\right)  \otimes\left(
\begin{pmatrix}
\boldsymbol{I}_{J-1}\\
\boldsymbol{0}_{J-1}^{T}%
\end{pmatrix}
^{T}(\boldsymbol{D}_{\boldsymbol{\pi}(\boldsymbol{\theta}_{0})}%
-\boldsymbol{\pi}(\boldsymbol{\theta}_{0})\boldsymbol{\pi}^{T}%
(\boldsymbol{\theta}_{0}))%
\begin{pmatrix}
\boldsymbol{I}_{J-1}\\
\boldsymbol{0}_{J-1}^{T}%
\end{pmatrix}
\right) \\
&  =%
\begin{pmatrix}
1 & (\tfrac{n_{1}}{n},...,\tfrac{n_{I-1}}{n})\\
(\tfrac{n_{1}}{n},...,\tfrac{n_{I-1}}{n})^{T} & \left(  \bigoplus_{i=1}%
^{I-1}\frac{n_{i}}{n}\right)
\end{pmatrix}
\otimes\left(  \boldsymbol{D}_{\boldsymbol{\pi}^{\ast}(\boldsymbol{\theta}%
_{0})}-\boldsymbol{\pi}^{\ast}(\boldsymbol{\theta}_{0})\boldsymbol{\pi}^{\ast
T}(\boldsymbol{\theta}_{0})\right)  ,
\end{align*}
and then (\ref{FIM2}).
\end{proof}

In the following theorem we present the asymptotic distribution of all of the
proposed test-statistics under the null hypothesis. Let
$E=\{1,...,(I-1)(J-1)\}$ the whole set of all row-indices of matrix
$\boldsymbol{R}$, $\mathcal{F}(E)$ the family of all possible subsets of $E$,
and $\boldsymbol{R}(S\mathbf{)}$ is a submatrix of $\boldsymbol{R}$\ with
row-indices belonging to $S\in\mathcal{F}(E)$.

\begin{theorem}
\label{Th1}Under $H_{0}$, the asymptotic distribution of $S_{\phi
}(\boldsymbol{p}(\widetilde{\boldsymbol{\theta}}),\boldsymbol{p}%
(\widehat{\boldsymbol{\theta}}))$\ and $T_{\phi}(\overline{\boldsymbol{p}%
},\boldsymbol{p}(\widetilde{\boldsymbol{\theta}}),\boldsymbol{p}%
(\widehat{\boldsymbol{\theta}}))$\ is%
\[
\lim_{n\rightarrow\infty}\Pr\left(  S_{\phi}(\boldsymbol{p}%
(\widetilde{\boldsymbol{\theta}}),\boldsymbol{p}(\widehat{\boldsymbol{\theta}%
}))\leq x\right)  =\lim_{n\rightarrow\infty}\Pr\left(  T_{\phi}(\overline
{\boldsymbol{p}},\boldsymbol{p}(\widetilde{\boldsymbol{\theta}}%
),\boldsymbol{p}(\widehat{\boldsymbol{\theta}}))\leq x\right)  =\sum
_{h=0}^{(I-1)(J-1)}w_{h}(\boldsymbol{\theta}_{0})\Pr\left(  \chi
_{(I-1)(J-1)-h}^{2}\leq x\right)
\]
where $\boldsymbol{\theta}_{0}$\ is the true value of the unknown parameter,
$\chi_{0}^{2}\equiv0$,%
\begin{equation}
w_{j}(\boldsymbol{\theta}_{0})=\sum_{S\in\mathcal{F}(E),\mathrm{card}(S)=h}%
\Pr\left(  \boldsymbol{Z}_{1}(S)\geq\boldsymbol{0}_{h}\right)  \Pr\left(
\boldsymbol{Z}_{2}(S)\geq\boldsymbol{0}_{(I-1)(J-1)-h}\right)  , \label{eqw}%
\end{equation}
$\boldsymbol{Z}_{1}(S)\sim\mathcal{N}\left(  \boldsymbol{0}_{\mathrm{card}%
(S)},\boldsymbol{\Sigma}_{1}(\boldsymbol{\theta}_{0},S)\right)  $,
$\boldsymbol{Z}_{2}(S)\sim\mathcal{N}\left(  \boldsymbol{0}%
_{(I-1)(J-1)-\mathrm{card}(S)},\boldsymbol{\Sigma}_{2}(\boldsymbol{\theta}%
_{0},S)\right)  $, with
\begin{equation}
\boldsymbol{\Sigma}_{1}(\boldsymbol{\theta}_{0},S)=\mathbf{H}^{-1}%
(S,S,\boldsymbol{\theta}_{0}), \label{matriz1}%
\end{equation}%
\begin{equation}
\boldsymbol{\Sigma}_{2}(\boldsymbol{\theta}_{0},S)=\mathbf{H}(S^{C}%
,S^{C},\boldsymbol{\theta}_{0})-\mathbf{H}(S^{C},S,\boldsymbol{\theta}%
_{0})\mathbf{H}^{-1}(S,S,\boldsymbol{\theta}_{0})\mathbf{H}^{T}(S^{C}%
,S,\boldsymbol{\theta}_{0}), \label{matriz2}%
\end{equation}
$S^{C}=E-S$, $\mathbf{H}(S_{1},S_{2},\boldsymbol{\theta}_{0})$ is the matrix
obtained deleting from $\boldsymbol{H}(\boldsymbol{\theta}_{0})$ the row
indices not contained in $S_{1}$, the column indices not contained in $S_{2}$,%
\begin{equation}
\boldsymbol{H}(\boldsymbol{\theta}_{0})=\boldsymbol{K}(\boldsymbol{\nu
})\otimes\boldsymbol{K}(\boldsymbol{\pi}(\boldsymbol{\theta}_{0})) \label{H}%
\end{equation}
is the $(I-1)(J-1)\times(I-1)(J-1)$ matrix which depends on the symmetric
tridiagonal matrices $\boldsymbol{K}(\boldsymbol{\nu})$ and $\boldsymbol{K}%
(\boldsymbol{\pi}(\boldsymbol{\theta}_{0}))$ defined as%
\begin{equation}
\boldsymbol{K}(\boldsymbol{q})=\boldsymbol{G}_{K-1}\boldsymbol{D}%
_{\boldsymbol{q}^{\ast}}^{-1}\boldsymbol{G}_{K-1}^{T}+\tfrac{1}{q_{K}%
}\boldsymbol{e}_{K-1}\boldsymbol{e}_{K-1}^{T}=%
\begin{pmatrix}
\frac{q_{1}+q_{2}}{q_{1}q_{2}} & -\frac{1}{q_{2}} &  &  & \\
-\frac{1}{q_{2}} & \frac{q_{2}+q_{3}}{q_{2}q_{3}} & -\frac{1}{q_{3}} &  & \\
& -\frac{1}{q_{3}} & \frac{q_{3}+q_{4}}{q_{3}q_{4}} & \ddots & \\
&  & \ddots & \ddots & -\frac{1}{q_{K-1}}\\
&  &  & -\frac{1}{q_{K-1}} & \frac{q_{K-1}+q_{K}}{q_{K-1}q_{K}}%
\end{pmatrix}
, \label{H-b}%
\end{equation}
where $\boldsymbol{q}^{\ast}=(q_{1},...,q_{K-1})^{T}$, the subvector of a
probability vector $\boldsymbol{q}=(q_{1},...,q_{K})^{T}$.
\end{theorem}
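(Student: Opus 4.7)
The plan is to reduce the asymptotic distribution of both $T_{\phi}$ and $S_{\phi}$ to the classical chi-bar-squared arising in inequality-constrained estimation, and then identify the weights via the Kudo--Shapiro representation using the Fisher information from Theorem \ref{th0}.

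First, I would establish that the two families are asymptotically equivalent to the likelihood ratio statistic $G^{2}$. Since $\phi(1)=\phi'(1)=0$ and $\phi''(1)>0$, a second-order Taylor expansion yields $d_{\phi}(\boldsymbol{p},\boldsymbol{q})=\tfrac{1}{2}\phi''(1)\sum_{i,j}\frac{(p_{ij}-q_{ij})^{2}}{q_{ij}}+o(\|\boldsymbol{p}-\boldsymbol{q}\|^{2})$ as $\boldsymbol{p}\to\boldsymbol{q}$. Under $H_{0}$, the vectors $\overline{\boldsymbol{p}}$, $\boldsymbol{p}(\widehat{\boldsymbol{\theta}})$ and $\boldsymbol{p}(\widetilde{\boldsymbol{\theta}})$ all converge to $\boldsymbol{p}(\boldsymbol{\theta}_{0})$ at rate $O_{p}(n^{-1/2})$, and hence the normalization by $\phi''(1)$ in (\ref{5a})--(\ref{5b}) makes $T_{\phi}$ and $S_{\phi}$ share the same weak limit as $G^{2}$.

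Second, I would carry out the standard local asymptotic analysis for cone-constrained maximum likelihood. Expanding $\ell(\boldsymbol{N};\boldsymbol{\theta})$ around $\boldsymbol{\theta}_{0}$, the score vector $n^{-1/2}\nabla\ell(\boldsymbol{N};\boldsymbol{\theta}_{0})$ is asymptotically normal with covariance $\mathcal{I}_{F}(\boldsymbol{\theta}_{0})$. Following Silvapulle and Sen (2005, Chapter 4), $G^{2}$ is asymptotically equivalent to the difference between the quadratic form minimized over $\boldsymbol{R}\boldsymbol{\theta}=\boldsymbol{0}$ and the quadratic form minimized over $\boldsymbol{R}\boldsymbol{\theta}\geq\boldsymbol{0}$, both in the Fisher metric. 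Because $\boldsymbol{R}$ involves only the coordinates $\boldsymbol{\theta}_{12}$, the nuisance block $\boldsymbol{\theta}_{2}$ can be profiled out, and the limiting statistic depends only on the asymptotic law of $\boldsymbol{R}\widehat{\boldsymbol{\theta}}$, which is $\mathcal{N}(\boldsymbol{0},\boldsymbol{V})$ with $\boldsymbol{V}=\boldsymbol{R}\,\mathcal{I}_{F}^{-1}(\boldsymbol{\theta}_{0})\,\boldsymbol{R}^{T}$.

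Third, I would identify $\boldsymbol{H}$ from (\ref{H}) as $\boldsymbol{V}^{-1}$ by exploiting the Kronecker structure of (\ref{FIM2}). Block-inverting the first Kronecker factor via its Schur complement relative to the $(1,1)$ entry removes the nuisance $\boldsymbol{\theta}_{2}$ and leaves $(\boldsymbol{D}_{\boldsymbol{\nu}^{\ast}}-\boldsymbol{\nu}^{\ast}\boldsymbol{\nu}^{\ast T})\otimes(\boldsymbol{D}_{\boldsymbol{\pi}^{\ast}(\boldsymbol{\theta}_{0})}-\boldsymbol{\pi}^{\ast}(\boldsymbol{\theta}_{0})\boldsymbol{\pi}^{\ast T}(\boldsymbol{\theta}_{0}))$ as the profile information for $\boldsymbol{\theta}_{12}$. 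Applying the Sherman--Morrison identity $(\boldsymbol{D}_{\boldsymbol{q}^{\ast}}-\boldsymbol{q}^{\ast}\boldsymbol{q}^{\ast T})^{-1}=\boldsymbol{D}_{\boldsymbol{q}^{\ast}}^{-1}+q_{K}^{-1}\boldsymbol{1}\boldsymbol{1}^{T}$ to both Kronecker factors and using $\boldsymbol{G}_{K-1}\boldsymbol{1}_{K-1}=\boldsymbol{e}_{K-1}$ transforms $\boldsymbol{V}^{-1}=(\boldsymbol{G}_{I-1}\otimes\boldsymbol{G}_{J-1})\,[\,\cdots\,]\,(\boldsymbol{G}_{I-1}\otimes\boldsymbol{G}_{J-1})^{T}$ into the exact form (\ref{H-b}), yielding $\boldsymbol{H}=\boldsymbol{K}(\boldsymbol{\nu})\otimes\boldsymbol{K}(\boldsymbol{\pi}(\boldsymbol{\theta}_{0}))$. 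Finally, the Kudo--Shapiro decomposition of the projection of a zero-mean normal onto the positive orthant under the inner product induced by $\boldsymbol{H}$ produces the mixture (\ref{eqw}), where $\boldsymbol{\Sigma}_{1}(\boldsymbol{\theta}_{0},S)$ in (\ref{matriz1}) arises as the conditional covariance of the active coordinates (indexed by $S$) given that the inactive ones vanish, and $\boldsymbol{\Sigma}_{2}(\boldsymbol{\theta}_{0},S)$ in (\ref{matriz2}) is the Schur complement describing the residual $S^{C}$ block after projecting out $S$.

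The main obstacle will be carrying through the third step rigorously: the block inversion and Sherman--Morrison manipulations must be tracked carefully through the Kronecker factorization, and one must also verify that eliminating the nuisance parameter $\boldsymbol{\theta}_{2}$ does not alter the chi-bar decomposition. The latter is true because $\boldsymbol{R}$ is linear in $\boldsymbol{\theta}_{12}$ only, so the cone projection decouples into an unconstrained projection in the nuisance directions and a constrained one in the constrained directions; but making this reduction explicit requires a careful orthogonal decomposition of the Fisher metric, which is the technical heart of the proof.
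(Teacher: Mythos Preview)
Your overall strategy is sound and would work, but it differs from the paper's route, and there is one concrete identification error you need to fix.

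\textbf{Comparison with the paper.} The paper does not first reduce everything to $G^{2}$ and then invoke an off-the-shelf chi-bar-squared theorem. Instead it works directly with $S_{\phi}$ (and $T_{\phi}$) and builds the chi-bar decomposition from first principles: it partitions the sample space according to the active set $S\in\mathcal{F}(E)$ of the order-restricted MLE $\widetilde{\boldsymbol{\theta}}$, uses the Karush--Kuhn--Tucker conditions to show that on each piece $\widetilde{\boldsymbol{\theta}}=\widehat{\boldsymbol{\theta}}(S)$ (the equality-constrained MLE under $\boldsymbol{R}(S)\boldsymbol{\theta}=\boldsymbol{0}$), applies a preliminary Proposition establishing that $S_{\phi}(\boldsymbol{p}(\widehat{\boldsymbol{\theta}}(S)),\boldsymbol{p}(\widehat{\boldsymbol{\theta}}))$ is asymptotically a quadratic form $\boldsymbol{Z}^{T}\boldsymbol{Q}(S)\boldsymbol{Z}$ with idempotent $\boldsymbol{Q}(S)$, and then invokes the Kud\^{o}--Barlow--Shapiro lemma on conditioned chi-squares to obtain the mixture. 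Your approach (Taylor-expand $\phi$, cite Silvapulle--Sen Chapter~4, then Kudo--Shapiro weight formula) is more modular and arguably cleaner, but less self-contained; the paper's approach makes the role of the active-set decomposition explicit and yields the variance expressions $\boldsymbol{\Sigma}_{1},\boldsymbol{\Sigma}_{2}$ directly from the Lagrange multipliers and residual constraints rather than as a corollary of a general weight formula.

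\textbf{The error.} You write that $\boldsymbol{H}$ should be identified as $\boldsymbol{V}^{-1}$ where $\boldsymbol{V}=\boldsymbol{R}\,\mathcal{I}_{F}^{-1}(\boldsymbol{\theta}_{0})\,\boldsymbol{R}^{T}$. This is backwards: the paper shows $\boldsymbol{H}(\boldsymbol{\theta}_{0})=\boldsymbol{R}\,\mathcal{I}_{F}^{-1}(\boldsymbol{\theta}_{0})\,\boldsymbol{R}^{T}=\boldsymbol{V}$, not $\boldsymbol{V}^{-1}$. Indeed, the Kronecker computation you outline (Sherman--Morrison on each factor of the inverse profile information, then conjugation by $\boldsymbol{G}_{I-1}\otimes\boldsymbol{G}_{J-1}$, using $\boldsymbol{G}_{K-1}\boldsymbol{1}_{K-1}=\boldsymbol{e}_{K-1}$) produces $\boldsymbol{K}(\boldsymbol{\nu})\otimes\boldsymbol{K}(\boldsymbol{\pi}(\boldsymbol{\theta}_{0}))$ directly as $\boldsymbol{R}\,\mathcal{I}_{F}^{-1}\boldsymbol{R}^{T}$, not as its inverse. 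Correspondingly, $\boldsymbol{\Sigma}_{1}(\boldsymbol{\theta}_{0},S)=\boldsymbol{H}^{-1}(S,S,\boldsymbol{\theta}_{0})$ is the \emph{inverse} of a sub-block of $\boldsymbol{V}$, not a sub-block of $\boldsymbol{V}$ itself. A related slip: you write ``the asymptotic law of $\boldsymbol{R}\widehat{\boldsymbol{\theta}}$'', but $\widehat{\boldsymbol{\theta}}$ is the null-restricted MLE, so $\boldsymbol{R}\widehat{\boldsymbol{\theta}}\equiv\boldsymbol{0}$; you mean the saturated MLE $\overline{\boldsymbol{\theta}}$. Once these identifications are corrected, your third step goes through and matches the paper's inline derivation of $\boldsymbol{H}(\boldsymbol{\theta}_{0})$ exactly.
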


\begin{proof}
By following similar arguments of Mart\'{\i}n and Balakrishnan we obtain
$\boldsymbol{H}(S,S,\boldsymbol{\theta}_{0})=\boldsymbol{R}(S\mathbf{)}%
\mathcal{I}_{F}^{-1}(\boldsymbol{\theta}_{0})\boldsymbol{R}^{T}(S\mathbf{)}$
(see Section \ref{ProofTh1ContrA} of the Appendix). We shall here obtain the
expression of
\begin{align*}
\boldsymbol{H}(\boldsymbol{\theta}_{0})  &  =\boldsymbol{H}%
(E,E,\boldsymbol{\theta}_{0})=\boldsymbol{R}(E\mathbf{)}\mathcal{I}_{F}%
^{-1}(\boldsymbol{\theta}_{0})\boldsymbol{R}^{T}(E\mathbf{)}\\
&  \mathbf{=}\left(  \boldsymbol{e}_{J}\otimes\left(  \boldsymbol{G}%
_{I-1}\otimes\boldsymbol{G}_{J-1}\right)  \right)  \mathcal{I}_{F}%
^{-1}(\boldsymbol{\theta}_{0})\left(  \boldsymbol{e}_{J}^{T}\otimes\left(
\boldsymbol{G}_{I-1}\otimes\boldsymbol{G}_{J-1}\right)  ^{T}\right)  ,
\end{align*}
where%
\begin{align*}
\mathcal{I}_{F}^{-1}(\boldsymbol{\theta}_{0})  &  =%
\begin{pmatrix}
1 & \boldsymbol{\nu}^{\ast T}\\
\boldsymbol{\nu}^{\ast} & \bigoplus_{i=1}^{I-1}\nu_{i}%
\end{pmatrix}
^{-1}\otimes\left(  \boldsymbol{D}_{\boldsymbol{\pi}^{\ast}(\boldsymbol{\theta
}_{0})}-\boldsymbol{\pi}^{\ast}(\boldsymbol{\theta}_{0})\boldsymbol{\pi}^{\ast
T}(\boldsymbol{\theta}_{0})\right)  ^{-1}\\
&  =%
\begin{pmatrix}
\frac{1}{\nu_{I}} & -\frac{1}{\nu_{I}}\boldsymbol{1}_{I-1}^{T}\\
-\frac{1}{\nu_{I}}\boldsymbol{1}_{I-1} & \boldsymbol{D}_{\boldsymbol{\nu
}^{\ast}}^{-1}+\frac{1}{\nu_{I}}\boldsymbol{1}_{I-1}\boldsymbol{1}_{I-1}^{T}%
\end{pmatrix}
\otimes\left(  \boldsymbol{D}_{\boldsymbol{\pi}^{\ast}(\boldsymbol{\theta}%
_{0})}^{-1}+\frac{1}{\pi_{J}(\boldsymbol{\theta}_{0})}\boldsymbol{1}%
_{J-1}\boldsymbol{1}_{J-1}^{T}\right)  ,
\end{align*}
and then%
\begin{align*}
\boldsymbol{H}(\boldsymbol{\theta}_{0})  &  =\left(  \boldsymbol{e}_{J}%
\otimes\left(  \boldsymbol{G}_{I-1}\otimes\boldsymbol{G}_{J-1}\right)
\right)  \left(
\begin{pmatrix}
\frac{1}{\nu_{I}} & -\frac{1}{\nu_{I}}\boldsymbol{1}_{I-1}^{T}\\
-\frac{1}{\nu_{I}}\boldsymbol{1}_{I-1} & \boldsymbol{D}_{\boldsymbol{\nu
}^{\ast}}^{-1}+\frac{1}{\nu_{I}}\boldsymbol{1}_{I-1}\boldsymbol{1}_{I-1}^{T}%
\end{pmatrix}
\otimes\left(  \boldsymbol{D}_{\boldsymbol{\pi}^{\ast}(\boldsymbol{\theta}%
_{0})}^{-1}+\frac{1}{\pi_{J}(\boldsymbol{\theta}_{0})}\boldsymbol{1}%
_{J-1}\boldsymbol{1}_{J-1}^{T}\right)  \right) \\
&  \times\left(  \boldsymbol{e}_{J}^{T}\otimes\left(  \boldsymbol{G}%
_{I-1}\otimes\boldsymbol{G}_{J-1}\right)  ^{T}\right) \\
&  =\left(  \boldsymbol{G}_{I-1}\otimes\boldsymbol{G}_{J-1}\right)  \left(
\left(  \boldsymbol{D}_{\boldsymbol{\nu}^{\ast}}^{-1}+\frac{1}{\nu_{I}%
}\boldsymbol{1}_{I-1}\boldsymbol{1}_{I-1}^{T}\right)  \otimes\left(
\boldsymbol{D}_{\boldsymbol{\pi}^{\ast}(\boldsymbol{\theta}_{0})}^{-1}%
+\frac{1}{\pi_{J}(\boldsymbol{\theta}_{0})}\boldsymbol{1}_{J-1}\boldsymbol{1}%
_{J-1}^{T}\right)  \right)  \left(  \boldsymbol{G}_{I-1}^{T}\otimes
\boldsymbol{G}_{J-1}^{T}\right) \\
&  =\left(  \boldsymbol{G}_{I-1}\boldsymbol{D}_{\boldsymbol{\nu}^{\ast}}%
^{-1}\boldsymbol{G}_{I-1}^{T}+\frac{1}{\nu_{I}}\boldsymbol{e}_{I-1}%
\boldsymbol{e}_{I-1}^{T}\right)  \otimes\left(  \boldsymbol{G}_{J-1}%
\boldsymbol{D}_{\boldsymbol{\pi}^{\ast}(\boldsymbol{\theta}_{0})}%
^{-1}\boldsymbol{G}_{J-1}^{T}+\frac{1}{\pi_{J}(\boldsymbol{\theta}_{0}%
)}\boldsymbol{e}_{J-1}\boldsymbol{e}_{J-1}^{T}\right)  ,
\end{align*}
and thus it holds (\ref{H}).\medskip
\end{proof}

We must take into account that even thought there is an equality in (\ref{4b})
which is effective only for $\boldsymbol{\theta}_{12}$, the rest of the
components of $\boldsymbol{\theta}$ are nuisance parameters, and hence we have
a composite null hypothesis which require of estimation of $\boldsymbol{\theta
}$, through $\widehat{\boldsymbol{\theta}}$.

The following result is very useful in order to calculate the weights of the
chi-bar distribution by using simulation experiments.

\begin{corollary}
\label{Cor}Under $H_{0}$, the weights $w_{j}(\boldsymbol{\theta}_{0})$\ of the
asymptotic distribution of $S_{\phi}(\boldsymbol{p}%
(\widetilde{\boldsymbol{\theta}}),\boldsymbol{p}(\widehat{\boldsymbol{\theta}%
}))$\ and $T_{\phi}(\overline{\boldsymbol{p}},\boldsymbol{p}%
(\widetilde{\boldsymbol{\theta}}),\boldsymbol{p}(\widehat{\boldsymbol{\theta}%
}))$, given in Theorem \ref{Th1}, can be expressed as%
\begin{align}
w_{(I-1)(J-1)-j}(\boldsymbol{\theta}_{0})  &  =w_{j}(\boldsymbol{\theta}%
_{0};(I-1)(J-1),\boldsymbol{H}^{-1}(\boldsymbol{\theta}_{0}),%
\mathbb{R}
_{+}^{(I-1)(J-1)})\label{pesos}\\
&  =\Pr\left(  \arg\min_{\boldsymbol{\zeta\in}%
\mathbb{R}
_{+}^{(I-1)(J-1)}}(\boldsymbol{Z}-\boldsymbol{\zeta})^{T}\boldsymbol{H}%
(\boldsymbol{\theta}_{0})(\boldsymbol{Z}-\boldsymbol{\zeta})\in%
\mathbb{R}
_{+}^{(I-1)(J-1)}(j)\right)  ,\nonumber
\end{align}
with $\boldsymbol{H}(\boldsymbol{\theta}_{0})$\ given by (\ref{H}) and%
\begin{equation}
\boldsymbol{H}^{-1}(\boldsymbol{\theta}_{0})=\boldsymbol{K}^{-1}%
(\boldsymbol{\nu})\otimes\boldsymbol{K}^{-1}(\boldsymbol{\pi}%
(\boldsymbol{\theta}_{0})), \label{H-1}%
\end{equation}
which depends on%
\begin{equation}
\boldsymbol{K}^{-1}(\boldsymbol{q})=\boldsymbol{T}_{K-1}^{T}\left(
\boldsymbol{D}_{\boldsymbol{q}^{\ast}}-\boldsymbol{q}^{\ast}\boldsymbol{q}%
^{\ast T}\right)  \boldsymbol{T}_{K-1}, \label{H-1-b}%
\end{equation}
$\boldsymbol{T}_{h}=\boldsymbol{G}_{h}^{-1}$ is an upper triangular matrix of
$1$-s, $\boldsymbol{Z\sim}\mathcal{N}_{(I-1)(J-1)}\left(  \boldsymbol{0}%
_{(I-1)(J-1)},\boldsymbol{H}^{-1}(\boldsymbol{\theta}_{0})\right)  $, and $%
\mathbb{R}
_{+}^{(I-1)(J-1)}(j)$\ is the subset of $%
\mathbb{R}
_{+}^{(I-1)(J-1)}=\{\boldsymbol{\zeta}\in%
\mathbb{R}
^{(I-1)(J-1)}:\boldsymbol{\zeta}\geq\boldsymbol{0}_{(I-1)(J-1)}\}$, such that
$j$\ components of the $(I-1)(J-1)$-dimensional\ vectors are strictly positive
and $(I-1)(J-1)-j$ components are null.
\end{corollary}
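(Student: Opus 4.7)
My plan is to establish the two algebraic formulas \ref{H-1} and \ref{H-1-b} by direct matrix manipulation, and then to identify the probability in \ref{pesos} with the weight \ref{eqw} from Theorem \ref{Th1} via the KKT conditions of the underlying quadratic program. Formula \ref{H-1} would follow immediately from the Kronecker expression \ref{H} by the identity $(\boldsymbol{A}\otimes\boldsymbol{B})^{-1}=\boldsymbol{A}^{-1}\otimes\boldsymbol{B}^{-1}$ (Harville (2008), Chapter~16), so the remaining algebra would reduce to inverting each factor $\boldsymbol{K}(\boldsymbol{q})$.

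For \ref{H-1-b} the key observation I would use is $\boldsymbol{G}_{K-1}\boldsymbol{1}_{K-1}=\boldsymbol{e}_{K-1}$, which holds because every row of $\boldsymbol{G}_{K-1}$ sums to zero except the last, which sums to one. This lets me rewrite $\boldsymbol{e}_{K-1}\boldsymbol{e}_{K-1}^{T}=\boldsymbol{G}_{K-1}\boldsymbol{1}_{K-1}\boldsymbol{1}_{K-1}^{T}\boldsymbol{G}_{K-1}^{T}$ and factor the defining expression \ref{H-b} as
\[
\boldsymbol{K}(\boldsymbol{q})=\boldsymbol{G}_{K-1}\left(\boldsymbol{D}_{\boldsymbol{q}^{\ast}}^{-1}+\tfrac{1}{q_{K}}\boldsymbol{1}_{K-1}\boldsymbol{1}_{K-1}^{T}\right)\boldsymbol{G}_{K-1}^{T}.
\]
The middle bracket is the standard inverse of the multinomial covariance $\boldsymbol{D}_{\boldsymbol{q}^{\ast}}-\boldsymbol{q}^{\ast}\boldsymbol{q}^{\ast T}$ (Sherman--Morrison), while $\boldsymbol{G}_{K-1}^{-1}=\boldsymbol{T}_{K-1}$ because $\boldsymbol{G}_{K-1}$ is unit upper-bidiagonal; inverting both sides then produces \ref{H-1-b}.

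The principal obstacle, and the substantive content of the corollary, is the probabilistic identity \ref{pesos}. I would appeal to the classical characterization of the mixing weights of a chi-bar squared distribution (see Silvapulle and Sen (2005), Chapter~3): the solution $\widehat{\boldsymbol{\zeta}}$ of the quadratic program $\min_{\boldsymbol{\zeta}\geq\boldsymbol{0}}(\boldsymbol{Z}-\boldsymbol{\zeta})^{T}\boldsymbol{H}(\boldsymbol{\theta}_{0})(\boldsymbol{Z}-\boldsymbol{\zeta})$ with $\boldsymbol{Z}\sim\mathcal{N}(\boldsymbol{0},\boldsymbol{H}^{-1}(\boldsymbol{\theta}_{0}))$ is characterized by KKT conditions that partition the index set $E$ into an active set $S$ (indices where $\widehat{\zeta}_{i}=0$) and its complement $S^{C}$. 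Conditioning on this partition, the free block $\widehat{\boldsymbol{\zeta}}_{S^{C}}$ is Gaussian with covariance $\boldsymbol{\Sigma}_{2}(\boldsymbol{\theta}_{0},S)$ coming from the Schur complement \ref{matriz2}, and its strict positivity contributes the factor $\Pr(\boldsymbol{Z}_{2}(S)\geq\boldsymbol{0})$; the dual Lagrange multipliers on $S$ are Gaussian with covariance $\boldsymbol{\Sigma}_{1}(\boldsymbol{\theta}_{0},S)$ from \ref{matriz1}, and their non-negativity contributes $\Pr(\boldsymbol{Z}_{1}(S)\geq\boldsymbol{0})$. Summing over all $S$ of size $(I-1)(J-1)-j$ then expresses the probability that $\widehat{\boldsymbol{\zeta}}$ has exactly $j$ strictly positive components as $w_{(I-1)(J-1)-j}(\boldsymbol{\theta}_{0})$ from \ref{eqw}. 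The delicate step, and the one I would handle most carefully, is verifying that under the Gaussian conditional decomposition induced by the Schur factorization of $\boldsymbol{H}(\boldsymbol{\theta}_{0})$, the primal block on $S^{C}$ and the dual block on $S$ are independent, so that the joint orthant probability factorizes into the product appearing in \ref{eqw}.
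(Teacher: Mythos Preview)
Your approach differs from the paper's in a meaningful way. The paper does not derive the weight identity from first principles; it invokes the projection interpretation of chi-bar weights, cites Kud\^{o} (1963) and Shapiro (1988) for the sum-over-subsets orthant-probability formula, and then applies the duality relation $w_{p-j}(p,\boldsymbol{V},\mathbb{R}_{+}^{p})=w_{j}(p,\boldsymbol{V}^{-1},\mathbb{R}_{+}^{p})$ (Proposition~3.6.1(7) in Silvapulle and Sen (2005)) to pass from $\boldsymbol{H}(\boldsymbol{\theta}_{0})$ to $\boldsymbol{H}^{-1}(\boldsymbol{\theta}_{0})$. Your direct KKT argument is essentially a self-contained re-derivation of the Kud\^{o}--Shapiro formula, which is more elementary but more work; and your explicit treatment of \eqref{H-1} and \eqref{H-1-b} via $\boldsymbol{G}_{K-1}\boldsymbol{1}_{K-1}=\boldsymbol{e}_{K-1}$ and Sherman--Morrison is cleaner than anything the paper's proof offers (it does not address these formulas at all).

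There is, however, a concrete slip in your covariance identification. With $\boldsymbol{Z}\sim\mathcal{N}(\boldsymbol{0},\boldsymbol{H}^{-1}(\boldsymbol{\theta}_{0}))$ and quadratic form $\boldsymbol{H}(\boldsymbol{\theta}_{0})$, set $\boldsymbol{W}=\boldsymbol{H}(\boldsymbol{\theta}_{0})\boldsymbol{Z}\sim\mathcal{N}(\boldsymbol{0},\boldsymbol{H}(\boldsymbol{\theta}_{0}))$. The KKT conditions for active set $S$ give $\widehat{\boldsymbol{\zeta}}_{S^{C}}=\boldsymbol{H}_{S^{C}S^{C}}^{-1}\boldsymbol{W}_{S^{C}}$, which has covariance $\boldsymbol{H}_{S^{C}S^{C}}^{-1}$, and multipliers $\boldsymbol{\lambda}_{S}\propto \boldsymbol{H}_{S,S^{C}}\boldsymbol{H}_{S^{C}S^{C}}^{-1}\boldsymbol{W}_{S^{C}}-\boldsymbol{W}_{S}$, whose covariance is the Schur complement $\boldsymbol{H}_{SS}-\boldsymbol{H}_{S,S^{C}}\boldsymbol{H}_{S^{C}S^{C}}^{-1}\boldsymbol{H}_{S^{C},S}$. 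In the paper's notation these are $\boldsymbol{\Sigma}_{1}(\boldsymbol{\theta}_{0},S^{C})$ and $\boldsymbol{\Sigma}_{2}(\boldsymbol{\theta}_{0},S^{C})$, not $\boldsymbol{\Sigma}_{2}(\boldsymbol{\theta}_{0},S)$ and $\boldsymbol{\Sigma}_{1}(\boldsymbol{\theta}_{0},S)$ as you assert: you have the roles of $S$ and $S^{C}$ interchanged. After the relabelling $S\mapsto S^{C}$ the sum over active sets of cardinality $(I-1)(J-1)-j$ does reproduce \eqref{eqw}, so your argument is reparable, but the intermediate identification as written is incorrect. Incidentally, the independence you flag as delicate is immediate once both blocks are written in terms of $\boldsymbol{W}$: the cross-covariance is $\boldsymbol{H}_{S^{C}S^{C}}^{-1}\bigl(\boldsymbol{H}_{S^{C},S}-\boldsymbol{H}_{S^{C},S}\bigr)=\boldsymbol{0}$.
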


\begin{proof}
It is well known that the weights of a chi-bar distribution can be interpreted
in terms of the projection of a $p$-dimensional central normal distribution
$\boldsymbol{Z}_{p}$ with a non-singular variance-covariance matrix
$\boldsymbol{V}$, on a closed convex cone in $%
\mathbb{R}
^{p}$, $C$, as $w_{j}(p,\boldsymbol{V},C)=\Pr(\Pi(\boldsymbol{Z}_{p}|C)\in%
\mathbb{R}
_{+}^{p}(j))$\ where%
\[
\Pi(\boldsymbol{Z}_{p}|C)=\arg\min_{\boldsymbol{\zeta\in}C}(\boldsymbol{Z}%
_{p}-\boldsymbol{\zeta})^{T}\boldsymbol{V}^{-1}(\boldsymbol{Z}_{p}%
-\boldsymbol{\zeta}).
\]
Now, focussed on $w_{j}(\boldsymbol{\theta}_{0})=w_{j}(\boldsymbol{\theta}%
_{0};p,\boldsymbol{V},C)$ in Theorem \ref{Th1}, we must identify the value of
$p$, expression of matrix $\boldsymbol{V}$ and the set $C$. In Kud\^{o} (1963,
p.414) and also in Shapiro (1988, p.54) it is shown that
\[
w_{j}(p,\boldsymbol{V},%
\mathbb{R}
_{+}^{p})=\sum_{S\in\mathcal{F}(F),\mathrm{card}(S)=p-j}\Pr\left(
\boldsymbol{Z}_{1,p}(S)\geq\boldsymbol{0}_{j}\right)  \Pr\left(
\boldsymbol{Z}_{2,p}(S)\geq\boldsymbol{0}_{(I-1)(J-1)-j}\right)  ,
\]
where $F=\{1,...,p\}$, $\boldsymbol{Z}_{1,p}(S)\boldsymbol{\sim}%
\mathcal{N}_{\mathrm{card}(S)}\left(  \boldsymbol{0}_{\mathrm{card}%
(S)},\boldsymbol{V}^{-1}(S)\right)  $ and $\boldsymbol{Z}_{2,p}%
(S)\boldsymbol{\sim}\mathcal{N}_{\mathrm{card}(F)-\mathrm{card}(S)}%
(\boldsymbol{0}_{\mathrm{card}(F)-\mathrm{card}(S)}$,\linebreak$\boldsymbol{V}%
(F-S,S))$, where $\boldsymbol{V}(S)$\ is the variance-covariance matrix of the
random vector obtained by considering only from $\boldsymbol{Z}_{p}$ the
components belonging to $S$ and $\boldsymbol{V}(S,F-S)$ is the same but rather
than ignoring the components out from $S$ they are considered equals zero.
This enunciate can be also seen in Silvapulle and Sen (2005, page 83). Note
that we can identify $p=(I-1)(J-1)$, $\boldsymbol{V=H}(\boldsymbol{\theta}%
_{0})$ and $C=%
\mathbb{R}
_{+}^{(I-1)(J-1)}$\ and
\begin{align*}
&  w_{(I-1)(J-1)-j}(\boldsymbol{\theta}_{0};(I-1)(J-1),\boldsymbol{H}%
(\boldsymbol{\theta}_{0}),%
\mathbb{R}
_{+}^{(I-1)(J-1)})\\
&  =\sum_{S\in\mathcal{F}(F),\mathrm{card}(S)=j}\Pr\left(  \boldsymbol{Z}%
_{1,J-1}(S)\geq\boldsymbol{0}_{j}\right)  \Pr\left(  \boldsymbol{Z}%
_{2,J-1}(S)\geq\boldsymbol{0}_{(I-1)(J-1)-j}\right)  ,
\end{align*}
which is equal to
\begin{align*}
&  w_{j}(\boldsymbol{\theta}_{0};(I-1)(J-1),\boldsymbol{H}^{-1}%
(\boldsymbol{\theta}_{0}),%
\mathbb{R}
_{+}^{(I-1)(J-1)})\\
&  =\sum_{S\in\mathcal{F}(F),\mathrm{card}(S)=j}\Pr\left(  \boldsymbol{Z}%
_{1,J-1}(S)\geq\boldsymbol{0}_{j}\right)  \Pr\left(  \boldsymbol{Z}%
_{2,J-1}(S)\geq\boldsymbol{0}_{(I-1)(J-1)-j}\right)  ,
\end{align*}
according to Proposition 3.6.1(7) in Silvapulle and Sen (2005, page 82). This
expression match (\ref{eqw}).\medskip
\end{proof}

Since $\boldsymbol{\theta}_{0}$\ is unknown, we cannot use directly the
results based on Theorem \ref{Th1} or Corollary \ref{Cor}. However, the
unknown parameter $\boldsymbol{\theta}_{0}$\ can be replaced by its estimator
under the null hypothesis, $\widehat{\boldsymbol{\theta}}$. The tests
performed replacing $\boldsymbol{\theta}_{0}$\ by $\widehat{\boldsymbol{\theta
}}$ are called \textquotedblleft local tests\textquotedblright\ (see Dardanoni
and Forcina (1998)) and they are usually considered to be good approximations
of the theoretical tests. It is worthwhile to mention that $\boldsymbol{p}%
(\widehat{\boldsymbol{\theta}})$ has an explicit expression,%
\begin{equation}
p_{ij}(\widehat{\boldsymbol{\theta}})=\widehat{\nu}_{i}\pi_{j}%
(\widehat{\boldsymbol{\theta}}),\qquad\widehat{\nu}_{i}=\frac{n_{i}}{n}%
,\qquad\pi_{j}(\widehat{\boldsymbol{\theta}})=\frac{1}{n}N_{\bullet j},\qquad
N_{\bullet j}=\sum_{h=1}^{I}N_{hj}. \label{ind}%
\end{equation}
Now, based on Corollary \ref{Cor}, and taking into account that (\ref{pesos})
is equal to%
\begin{equation}
w_{j}(\boldsymbol{\theta}_{0})=\Pr\left(  \arg\min_{\boldsymbol{\zeta\in}%
\mathbb{R}
_{+}^{(I-1)(J-1)}}\tfrac{1}{2}\boldsymbol{\zeta}^{T}\boldsymbol{H}%
(\boldsymbol{\theta}_{0})\boldsymbol{\zeta}-\left(  \boldsymbol{H}%
(\boldsymbol{\theta}_{0})\boldsymbol{Z}\right)  ^{T}\boldsymbol{\zeta}\in%
\mathbb{R}
_{+}^{(I-1)(J-1)}(j)\right)  , \label{pesos2}%
\end{equation}
where $\boldsymbol{H}(\boldsymbol{\theta}_{0})$ is (\ref{H}) and
$\boldsymbol{Z\sim}\mathcal{N}_{(I-1)(J-1)}\left(  \boldsymbol{0}%
_{(I-1)(J-1)},\boldsymbol{H}^{-1}(\boldsymbol{\theta}_{0})\right)  $, we shall
consider$\ $an algorithm for obtaining the weights associated with a sample.

\begin{algorithm}
[Estimation of weights]\label{AlgorW}The weights of the local tests,
$w_{j}(\widehat{\boldsymbol{\theta}})$, are obtained by Monte Carlo, once we
have a realization $\boldsymbol{n}$ of $\boldsymbol{N}$ in the following way

\noindent\texttt{STEP 1: Using }$\boldsymbol{n}$,\texttt{\ calculate
}$\boldsymbol{\nu}$\texttt{ and }$\boldsymbol{\pi}(\widehat{\boldsymbol{\theta
}})$\texttt{\ taking into account (\ref{ind}).}\newline\texttt{STEP 2: Compute
}$\boldsymbol{H}(\widehat{\boldsymbol{\theta}})$\texttt{ by following
(\ref{H}), in terms of }$\boldsymbol{K}(\widehat{\boldsymbol{\nu}})$,
$\boldsymbol{K}(\boldsymbol{\pi}(\widehat{\boldsymbol{\theta}}))$, given by
(\ref{H-b}))\texttt{.}\newline\texttt{STEP 3: Compute }$\boldsymbol{H}%
^{-1}(\widehat{\boldsymbol{\theta}})$\texttt{ by following (\ref{H-1-b}), in
terms of }$\boldsymbol{K}^{-1}(\widehat{\boldsymbol{\nu}})$, $\boldsymbol{K}%
^{-1}(\boldsymbol{\pi}(\widehat{\boldsymbol{\theta}}))$, given by
(\ref{H-1-b})\texttt{.\newline STEP 4: For }$j=0,...,(I-1)(J-1)$\texttt{, do
}$N(j):=0$\texttt{.}\newline\texttt{STEP 5: Repeat the following steps }%
$R$\texttt{ (say }$R=1,000,000$\texttt{) times:}\newline\hspace*{0.75cm}%
\texttt{STEP 5.1: Generate an observation, }$\boldsymbol{z}$\texttt{, from
}$\boldsymbol{Z\sim}\mathcal{N}_{(I-1)(J-1)}\left(  \boldsymbol{0}%
_{(I-1)(J-1)},\boldsymbol{H}^{-1}(\boldsymbol{\theta}_{0})\right)  $\texttt{.
E.g., the\newline\hspace*{1cm}\hspace*{1cm}\hspace*{0.75cm}NAG Fortran library
subroutines G05CBF, G05EAF, and G05EZF can be useful.\newline\hspace
*{0.75cm}STEP 5.2: Compute }$\widehat{\boldsymbol{\zeta}}(\boldsymbol{z}%
)\boldsymbol{=}\arg\min_{\boldsymbol{\zeta\in}%
\mathbb{R}
_{+}^{(I-1)(J-1)}}\tfrac{1}{2}\boldsymbol{\zeta}^{T}\boldsymbol{H}%
(\widehat{\boldsymbol{\theta}})\boldsymbol{\zeta}-(\boldsymbol{H}%
(\widehat{\boldsymbol{\theta}})\boldsymbol{z})^{T}\boldsymbol{\zeta}$.\texttt{
E.g., the IMSL Fortran \newline\hspace*{1cm}\hspace*{1cm}\hspace
*{0.75cm}library subroutine DQPROG can be useful.\newline\hspace*{0.75cm}STEP
5.3: Count }$j^{\ast}$\texttt{, the number of strictly positive components
contained in }$\widehat{\boldsymbol{\zeta}}(\boldsymbol{z})$\texttt{, and
\newline\hspace*{1cm}\hspace*{1cm}\hspace*{0.75cm}do }$N(j^{\ast}):=N(j^{\ast
})+1$.\newline\texttt{STEP 6: Do }$w_{j}(\widehat{\boldsymbol{\theta}}%
):=\frac{N(j)}{R}$ for $j=0,...,(I-1)(J-1)$.
\end{algorithm}

See
\hyperref{http://www.nag.co.uk/numeric/fl/FLdescription.asp}{}{}%
{http://www.nag.co.uk/numeric/fl/FLdescription.asp}%
, for details about subroutines of the\ \texttt{NAG} Fortran library, and
\hyperref{http://www.roguewave.com/Portals/0/products/imsl-numerical-libraries/fortran-library/docs/7.0/math/math.htm}%
{}{}%
{http://www.roguewave.com/Portals/0/products/imsl-numerical-libraries/fortran-library/docs/7.0/\allowbreak
math/math.htm}
for the\ \texttt{IMSL} Fortran library. It is worthwhile to mention that these
values can be also computed using \texttt{mvtnorm} R package (see
\hyperref{http://CRAN.R-project.org/package=mvtnorm}{}{}%
{http://CRAN.R-project.org/package=mvtnorm}%
, for details), however this method based on numerical integration tends to
provide less accurate values.

\section{Example\label{Ex}}

In this section we are going to analyze the data set of Section \ref{Intro}%
,\ using the proposed test-statistics. By following the specific notation of
our paper, we are considering two ordinal variables associated with $n=417$
duodenal ulcer patients in a hospital, $X=$severity of the operation,
classified in an increasing order from $1$ to $I=4$, and $Y=$extent of side
effects, categorized as None ($1$), Slight ($2$) and Moderate ($J=3$). The
sample, a realization of $\boldsymbol{N}$, is summarized in%
\begin{align*}
\boldsymbol{n}  &  =(n_{11},n_{12},n_{13},n_{21},n_{22},n_{23},n_{31}%
,n_{32},n_{33},n_{41},n_{42},n_{43})^{T}\\
&  =(61,28,7,68,23,13,58,40,12,53,38,16)^{T}.
\end{align*}

The order restricted maximum likelihood estimator (MLE) of $\boldsymbol{\theta
}=(\boldsymbol{\theta}_{2}^{T},\boldsymbol{\theta}_{12}^{T})^{T}$\ under
likelihood ratio order, obtained through \texttt{E04UCF} subroutine
of\ \texttt{NAG} Fortran library (%
\hyperref{http://www.nag.co.uk/numeric/fl/FLdescription.asp}{}{}%
{http://www.nag.co.uk/numeric/fl/FLdescription.asp}%
), is $\widetilde{\boldsymbol{\theta}}=(\widetilde{\boldsymbol{\theta}}%
_{2}^{T},\widetilde{\boldsymbol{\theta}}_{12}^{T})^{T}$, with%
\[
\widetilde{\boldsymbol{\theta}}_{2}=(1.1977,0.8650)^{T},\quad
\widetilde{\boldsymbol{\theta}}_{12}%
=(0.9983,0.4501,0.6376,0.0894,0.1916,0.0894)^{T}.
\]
The estimated probability vectors of interest are%
\begin{align*}
\widehat{\boldsymbol{\nu}}  &  =(\tfrac{n_{1}}{n},\tfrac{n_{2}}{n}%
,\tfrac{n_{3}}{n},\tfrac{n_{4}}{n})^{T}=(\tfrac{96}{417},\tfrac{104}%
{417},\tfrac{110}{417},\tfrac{107}{417})^{T},\\
\boldsymbol{\pi}(\widehat{\boldsymbol{\theta}})  &  =(\tfrac{n_{\bullet1}}%
{n},\tfrac{n_{\bullet2}}{n},\tfrac{n_{\bullet3}}{n})^{T}=(\tfrac{240}%
{417},\tfrac{129}{417},\tfrac{48}{417})^{T},
\end{align*}%
\begin{align*}
\overline{\boldsymbol{p}}  &
=(0.1463,0.0671,0.0168,0.1631,0.0552,0.0312,0.1391,0.0959,0.0288,0.1271,0.0911,0.0384)^{T}%
,\\
\boldsymbol{p}(\widetilde{\boldsymbol{\theta}})  &
=(0.1509,0.0625,0.0168,0.1585,0.0657,0.0253,0.1391,0.0900,0.0347,0.1271,0.0911,0.0384)^{T}%
,\\
\boldsymbol{p}(\widehat{\boldsymbol{\theta}})  &
=(0.1325,0.0712,0.0265,0.1435,0.0772,0.0287,0.1518,0.0816,0.0304,0.1477,0.0794,0.0295)^{T}%
,
\end{align*}
and the weights%
\begin{equation}
\{w_{j}(\widehat{\boldsymbol{\theta}})\}_{j=0}^{6}%
=\{0.0006103103,0.009753533,0.06122672,0.1953851,0.3353725,0.2949007,0.1028136\},
\label{ww}%
\end{equation}
were obtained using Theorem \ref{Th1} and the \texttt{R} package
\texttt{mvtnorm} (located at
\hyperref{http://cran.r-project.org/web/packages/mvtnorm/index.html}{}%
{}{\texttt{http://cran.r-project.org/web/\allowbreak
packages/mvtnorm/index.html}}%
), once we knew%
\begin{align*}
\boldsymbol{K}(\widehat{\boldsymbol{\nu}})  &  =%
\begin{pmatrix}
n\frac{n_{1}+n_{2}}{n_{1}n_{2}} & -\frac{n}{n_{2}} & 0\\
-\frac{n}{n_{2}} & n\frac{n_{2}+n_{3}}{n_{2}n_{3}} & -\frac{n}{n_{3}}\\
0 & -\frac{n}{n_{3}} & n\frac{n_{3}+n_{4}}{n_{3}n_{4}}%
\end{pmatrix}
=%
\begin{pmatrix}
\frac{3475}{416} & -\frac{417}{104} & 0\\
-\frac{417}{104} & \frac{44\,619}{5720} & -\frac{417}{110}\\
0 & -\frac{417}{110} & \frac{90\,489}{11\,770}%
\end{pmatrix}
,\\
\boldsymbol{K}(\boldsymbol{\pi}(\widehat{\boldsymbol{\theta}}))  &  =%
\begin{pmatrix}
n\frac{n_{\bullet1}+n_{\bullet2}}{n_{\bullet1}n_{\bullet2}} & -\frac
{n}{n_{\bullet2}}\\
-\frac{n}{n_{\bullet2}} & n\frac{n_{\bullet2}+n_{\bullet3}}{n_{\bullet
2}n_{\bullet3}}%
\end{pmatrix}
=%
\begin{pmatrix}
\frac{17\,097}{3440} & -\frac{417}{129}\\
-\frac{417}{129} & \frac{8201}{688}%
\end{pmatrix}
,\\
\boldsymbol{H}(\widehat{\boldsymbol{\theta}})  &  =\boldsymbol{K}%
(\widehat{\boldsymbol{\nu}})\otimes\boldsymbol{K}(\boldsymbol{\pi
}(\widehat{\boldsymbol{\theta}}))=%
\begin{pmatrix}
\frac{11\,882\,415}{286\,208} & -\frac{483\,025}{17\,888} & -\frac
{7129\,449}{357\,760} & \frac{57\,963}{4472} & 0 & 0\\
-\frac{483\,025}{17\,888} & \frac{28\,498\,475}{286\,208} & \frac
{57\,963}{4472} & -\frac{3419\,817}{71\,552} & 0 & 0\\
-\frac{7129\,449}{357\,760} & \frac{57\,963}{4472} & \frac{762\,851\,043}%
{19\,676\,800} & -\frac{6202\,041}{245\,960} & -\frac{7129\,449}{378\,400} &
\frac{57\,963}{4730}\\
\frac{57\,963}{4472} & -\frac{3419\,817}{71\,552} & -\frac{6202\,041}%
{245\,960} & \frac{365\,920\,419}{3935\,360} & \frac{57\,963}{4730} &
-\frac{3419\,817}{75\,680}\\
0 & 0 & -\frac{7129\,449}{378\,400} & \frac{57\,963}{4730} & \frac
{1547\,090\,433}{40\,488\,800} & -\frac{12\,577\,971}{506\,110}\\
0 & 0 & \frac{57\,963}{4730} & -\frac{3419\,817}{75\,680} & -\frac
{12\,577\,971}{506\,110} & \frac{742\,100\,289}{8097\,760}%
\end{pmatrix}
,
\end{align*}
Along the current section, we are trying to express the matrices as precise as
possible in order to highlight that the proposed method provide very simple
accurate way for obtaining the weights even for big dimensions. In the
notation we understand that $n_{\bullet j}$, $j=1,2,3$, are realizations of
the r.v. $N_{\bullet j}$ defined in (\ref{ind}).The output of the code based
on such package, provides normal orthant probabilities based on numerical
integration, as well as the precision error. Taking into account Proposition
3.6.1(3) in Silvapulle and Sen (2005, page 82), $%
{\textstyle\sum_{i=0}^{6}}
(-1)^{i}w_{i}(\widehat{\boldsymbol{\theta}})=0$ should be held theoretically,
and for (\ref{ww}) we obtained $%
{\textstyle\sum_{i=0}^{6}}
(-1)^{i}w_{i}(\widehat{\boldsymbol{\theta}})=\,\allowbreak-1.6203\times
10^{-5}$, and this means that $1.6203\times10^{-5}$ could be considered as an
overall measure of precision error for the weights. Using Algorithm
\ref{AlgorW} and taking into account%
\begin{align*}
\boldsymbol{K}^{-1}(\widehat{\boldsymbol{\nu}})  &  =\boldsymbol{T}_{3}%
^{T}\left(  \boldsymbol{D}_{\widehat{\nu}^{\ast}}-\widehat{\boldsymbol{\nu}%
}^{\ast}\widehat{\boldsymbol{\nu}}^{\ast T}\right)  \boldsymbol{T}_{3}=%
\begin{pmatrix}
\frac{3424}{19\,321} & \frac{6944}{57\,963} & \frac{3424}{57\,963}\\
\frac{6944}{57\,963} & \frac{43\,400}{173\,889} & \frac{21\,400}{173\,889}\\
\frac{3424}{57\,963} & \frac{21\,400}{173\,889} & \frac{33\,170}{173\,889}%
\end{pmatrix}
\\
\boldsymbol{K}^{-1}(\boldsymbol{\pi}(\widehat{\boldsymbol{\theta}}))  &
=\boldsymbol{T}_{2}^{T}\left(  \boldsymbol{D}_{\boldsymbol{\pi}^{\ast
}(\widehat{\boldsymbol{\theta}})}-\boldsymbol{\pi}^{\ast}%
(\widehat{\boldsymbol{\theta}})\boldsymbol{\pi}^{\ast T}%
(\widehat{\boldsymbol{\theta}})\right)  \boldsymbol{T}_{2}=%
\begin{pmatrix}
\frac{4720}{19\,321} & \frac{1280}{19\,321}\\
\frac{1280}{19\,321} & \frac{1968}{19\,321}%
\end{pmatrix}
\\
\boldsymbol{H}^{-1}(\widehat{\boldsymbol{\theta}})  &  =\boldsymbol{K}%
^{-1}(\widehat{\boldsymbol{\nu}}^{\ast})\otimes\boldsymbol{K}^{-1}%
(\boldsymbol{\pi}(\widehat{\boldsymbol{\theta}}))=\\
&  =%
\begin{pmatrix}
\frac{16\,161\,280}{373\,301\,041} & \frac{4382\,720}{373\,301\,041} &
\frac{32\,775\,680}{1119\,903\,123} & \frac{8888\,320}{1119\,903\,123} &
\frac{16\,161\,280}{1119\,903\,123} & \frac{4382\,720}{1119\,903\,123}\\
\frac{4382\,720}{373\,301\,041} & \frac{6738\,432}{373\,301\,041} &
\frac{8888\,320}{1119\,903\,123} & \frac{4555\,264}{373\,301\,041} &
\frac{4382\,720}{1119\,903\,123} & \frac{2246\,144}{373\,301\,041}\\
\frac{32\,775\,680}{1119\,903\,123} & \frac{8888\,320}{1119\,903\,123} &
\frac{204\,848\,000}{3359\,709\,369} & \frac{55\,552\,000}{3359\,709\,369} &
\frac{101\,008\,000}{3359\,709\,369} & \frac{27\,392\,000}{3359\,709\,369}\\
\frac{8888\,320}{1119\,903\,123} & \frac{4555\,264}{373\,301\,041} &
\frac{55\,552\,000}{3359\,709\,369} & \frac{28\,470\,400}{1119\,903\,123} &
\frac{27\,392\,000}{3359\,709\,369} & \frac{14\,038\,400}{1119\,903\,123}\\
\frac{16\,161\,280}{1119\,903\,123} & \frac{4382\,720}{1119\,903\,123} &
\frac{101\,008\,000}{3359\,709\,369} & \frac{27\,392\,000}{3359\,709\,369} &
\frac{156\,562\,400}{3359\,709\,369} & \frac{42\,457\,600}{3359\,709\,369}\\
\frac{4382\,720}{1119\,903\,123} & \frac{2246\,144}{373\,301\,041} &
\frac{27\,392\,000}{3359\,709\,369} & \frac{14\,038\,400}{1119\,903\,123} &
\frac{42\,457\,600}{3359\,709\,369} & \frac{21\,759\,520}{1119\,903\,123}%
\end{pmatrix}
,
\end{align*}
very similar weights were obtained: $\{w_{j}(\widehat{\boldsymbol{\theta}%
})\}_{j=0}^{6}=\{0.000613$, $0.009627$, $0.060873$, $0.195312$, $0.335389$,
$0.295527$, $0.102659\}$. From these weights the quantile of order $0.05$,
which defines the rejection region, was find to be $6.34$.

If we take, for (\ref{div}), $\phi_{\lambda}(x)=\frac{1}{\lambda(1+\lambda
)}(x^{\lambda+1}-x-\lambda(x-1))$, where for each $\lambda\in%
\mathbb{R}
-\{-1,0\}$ a different divergence measure is constructed, a very important
subfamily called \textquotedblleft power divergence family of
measures\textquotedblright\ is obtained%
\begin{equation}
d_{\lambda}(\boldsymbol{p},\boldsymbol{q})=\frac{1}{\lambda(\lambda+1)}\left(
%
{\displaystyle\sum\limits_{i=1}^{I}}
{\displaystyle\sum\limits_{j=1}^{J}}
\frac{p_{ij}^{\lambda+1}}{q_{ij}^{\lambda}}-1\right)  \text{, for each
}\lambda\in%
\mathbb{R}
-\{-1,0\}\text{.} \label{CR}%
\end{equation}
It is also possible to cover the real line for $\lambda$, by defining
$d_{\lambda}(\boldsymbol{p},\boldsymbol{q})=\lim_{t\rightarrow\lambda}%
d_{t}(\boldsymbol{p},\boldsymbol{q})$, for $\lambda\in\{-1,0\}$. It is well
known that $d_{0}(\boldsymbol{p},\boldsymbol{q})=d_{Kull}(\boldsymbol{p}%
,\boldsymbol{q})$ and $d_{1}(\boldsymbol{p},\boldsymbol{q})=d_{Pearson}%
(\boldsymbol{p},\boldsymbol{q})$, which is very interesting because the power
divergence based family of test-statistics, which contains as special cases
$G^{2}$ and $X^{2}$, can be created. It is also worthwhile to mention that
$d_{-1}(\boldsymbol{p},\boldsymbol{q})=d_{Kull}(\boldsymbol{q},\boldsymbol{p}%
)$.

When the test-statistic (\ref{5a}) and (\ref{5b}), based on power-divergences
(\ref{CR}),\ are applied we get%
\begin{equation}
T_{\lambda}(\overline{\boldsymbol{p}},\boldsymbol{p}%
(\widetilde{\boldsymbol{\theta}}),\boldsymbol{p}(\widehat{\boldsymbol{\theta}%
}))=2n(d_{\lambda}(\overline{\boldsymbol{p}},\boldsymbol{p}%
(\widehat{\boldsymbol{\theta}}))-d_{\lambda}(\overline{\boldsymbol{p}%
},\boldsymbol{p}(\widetilde{\boldsymbol{\theta}})))=\frac{2n}{\lambda
(\lambda+1)}\left(
{\displaystyle\sum\limits_{i=1}^{I}}
{\displaystyle\sum\limits_{j=1}^{J}}
\frac{\overline{p}_{ij}^{\lambda+1}}{p_{ij}^{\lambda}%
(\widehat{\boldsymbol{\theta}})}-%
{\displaystyle\sum\limits_{i=1}^{I}}
{\displaystyle\sum\limits_{j=1}^{J}}
\frac{\overline{p}_{ij}^{\lambda+1}}{p_{ij}^{\lambda}%
(\widetilde{\boldsymbol{\theta}})}\right)  \label{PD1}%
\end{equation}
and%
\begin{equation}
S_{\lambda}(\boldsymbol{p}(\widetilde{\boldsymbol{\theta}}),\boldsymbol{p}%
(\widehat{\boldsymbol{\theta}}))=2nd_{\lambda}(\boldsymbol{p}%
(\widetilde{\boldsymbol{\theta}}),\boldsymbol{p}(\widehat{\boldsymbol{\theta}%
}))=\frac{2n}{\lambda(\lambda+1)}\left(
{\displaystyle\sum\limits_{i=1}^{I}}
{\displaystyle\sum\limits_{j=1}^{J}}
\frac{p_{ij}^{\lambda+1}(\widetilde{\boldsymbol{\theta}})}{p_{ij}^{\lambda
}(\widehat{\boldsymbol{\theta}})}-1\right)  , \label{PD2}%
\end{equation}
for $\lambda\in%
\mathbb{R}
-\{0,-1\}$, and $T_{\lambda}(\overline{\boldsymbol{p}},\boldsymbol{p}%
(\widetilde{\boldsymbol{\theta}}),\boldsymbol{p}(\widehat{\boldsymbol{\theta}%
}))=\lim_{\lambda\rightarrow\ell}T_{\ell}(\overline{\boldsymbol{p}%
},\boldsymbol{p}(\widetilde{\boldsymbol{\theta}}),\boldsymbol{p}%
(\widehat{\boldsymbol{\theta}}))$, $S_{\lambda}(\boldsymbol{p}%
(\widetilde{\boldsymbol{\theta}}),\boldsymbol{p}(\widehat{\boldsymbol{\theta}%
}))=\lim_{\lambda\rightarrow\ell}S_{\ell}(\boldsymbol{p}%
(\widetilde{\boldsymbol{\theta}}),\boldsymbol{p}(\widehat{\boldsymbol{\theta}%
}))$, $\lambda\in\{0,-1\}$, i.e.%
\begin{align}
T_{0}(\overline{\boldsymbol{p}},\boldsymbol{p}(\widetilde{\boldsymbol{\theta}%
}),\boldsymbol{p}(\widehat{\boldsymbol{\theta}}))  &  =2n(d_{Kull}%
(\overline{\boldsymbol{p}},\boldsymbol{p}(\widehat{\boldsymbol{\theta}%
}))-d_{Kull}(\overline{\boldsymbol{p}},\boldsymbol{p}%
(\widetilde{\boldsymbol{\theta}})))=2n%
{\displaystyle\sum\limits_{i=1}^{I}}
{\displaystyle\sum\limits_{j=1}^{J}}
\overline{p}_{ij}\log\left(  \frac{p_{ij}(\widetilde{\boldsymbol{\theta}}%
)}{p_{ij}(\widehat{\boldsymbol{\theta}})}\right)  ,\label{PD3}\\
T_{-1}(\overline{\boldsymbol{p}},\boldsymbol{p}(\widetilde{\boldsymbol{\theta
}}),\boldsymbol{p}(\widehat{\boldsymbol{\theta}}))  &  =2n(d_{Kull}%
(\boldsymbol{p}(\widehat{\boldsymbol{\theta}}),\overline{\boldsymbol{p}%
})-d_{Kull}(\boldsymbol{p}(\widetilde{\boldsymbol{\theta}}),\overline
{\boldsymbol{p}}))\nonumber\\
&  =2n\left(
{\displaystyle\sum\limits_{i=1}^{I}}
{\displaystyle\sum\limits_{j=1}^{J}}
p_{ij}(\widehat{\boldsymbol{\theta}})\log\left(  \frac{p_{ij}%
(\widehat{\boldsymbol{\theta}})}{\overline{p}_{ij}}\right)  -%
{\displaystyle\sum\limits_{i=1}^{I}}
{\displaystyle\sum\limits_{j=1}^{J}}
p_{ij}(\widetilde{\boldsymbol{\theta}})\log\left(  \frac{p_{ij}%
(\widetilde{\boldsymbol{\theta}})}{\overline{p}_{ij}}\right)  \right)
\label{PD4}%
\end{align}
and%
\begin{align}
S_{0}(\boldsymbol{p}(\widetilde{\boldsymbol{\theta}}),\boldsymbol{p}%
(\widehat{\boldsymbol{\theta}}))  &  =2nd_{Kull}(\boldsymbol{p}%
(\widetilde{\boldsymbol{\theta}}),\boldsymbol{p}(\widehat{\boldsymbol{\theta}%
}))=2n%
{\displaystyle\sum\limits_{i=1}^{I}}
{\displaystyle\sum\limits_{j=1}^{J}}
p_{ij}(\widetilde{\boldsymbol{\theta}})\log\left(  \frac{p_{ij}%
(\widetilde{\boldsymbol{\theta}})}{p_{ij}(\widehat{\boldsymbol{\theta}}%
)}\right)  ,\label{PD5}\\
S_{-1}(\boldsymbol{p}(\widetilde{\boldsymbol{\theta}}),\boldsymbol{p}%
(\widehat{\boldsymbol{\theta}}))  &  =2nd_{Kull}(\boldsymbol{p}%
(\widehat{\boldsymbol{\theta}}),\boldsymbol{p}(\widetilde{\boldsymbol{\theta}%
}))=2n%
{\displaystyle\sum\limits_{i=1}^{I}}
{\displaystyle\sum\limits_{j=1}^{J}}
p_{ij}(\widehat{\boldsymbol{\theta}})\log\left(  \frac{p_{ij}%
(\widehat{\boldsymbol{\theta}})}{p_{ij}(\widetilde{\boldsymbol{\theta}}%
)}\right)  . \label{PD6}%
\end{align}
Suppose we want to consider a set of values for the parameter $\lambda$,
$\Lambda$. The power divergence based test-statistics cover as special cases
the classical ones (\ref{LRT}), (\ref{CS}), actually $T_{0}(\overline
{\boldsymbol{p}},\boldsymbol{p}(\widetilde{\boldsymbol{\theta}}%
),\boldsymbol{p}(\widehat{\boldsymbol{\theta}}))=G^{2}$ and $S_{1}%
(\boldsymbol{p}(\widetilde{\boldsymbol{\theta}}),\boldsymbol{p}%
(\widehat{\boldsymbol{\theta}}))=X^{2}$. The power divergence based
test-statistics with $\lambda=\frac{2}{3}$ are commonly considered for
analysis because their performance is usually quite good. At this setting, a
possible choice for studying its $p$-values is $\lambda\in\Lambda
=\{-1.5,-1,-0.5,0,\frac{2}{3},1,1.5,2\}$. We shall consider$\ $an algorithm
for obtaining the $p$-values associated with hypothesis testing (\ref{4b}) for
a given sample.

\begin{algorithm}
[Calculation of p-value]\label{AlgPval}Let $T\in\{T_{\lambda}(\overline
{\boldsymbol{p}},\boldsymbol{p}(\widetilde{\boldsymbol{\theta}}%
),\boldsymbol{p}(\widehat{\boldsymbol{\theta}})),S_{\lambda}(\boldsymbol{p}%
(\widetilde{\boldsymbol{\theta}}),\boldsymbol{p}(\widehat{\boldsymbol{\theta}%
}))\}_{\lambda\in\Lambda}$ be the test-statistic associated with (\ref{4b}).
In the following steps the corresponding asymptotic $p$-value is calculated
once it is suppose we have $\{w_{j}(\widehat{\boldsymbol{\theta}}%
)\}_{j=0}^{(I-1)(J-1)}$:

\noindent\texttt{STEP 1: Using }$\boldsymbol{n}$\texttt{\ calculate
}$\boldsymbol{p}(\widehat{\boldsymbol{\theta}})$\texttt{\ taking into account
(\ref{ind}).}\newline\texttt{STEP 2: Using }$\boldsymbol{p}%
(\widehat{\boldsymbol{\theta}})$\texttt{\ calculate value }$t$\texttt{ of
test-statistic }$T$\texttt{ using the corresponding expression in
(\ref{PD1})-(\ref{PD6}).}\newline\texttt{STEP 3: Compute }$p$\textrm{-}%
$\mathrm{value}(T):=0$.\newline\texttt{STEP 4: }If $t\leq0$\texttt{, do }%
$p$\textrm{-}$\mathrm{value}(T):=1$, \texttt{otherwise (}if $t>0$%
\texttt{)\newline\hspace*{1.6cm}}for $h=0,...,(I-1)(J-1)-1$,\texttt{ do }%
$p$\textrm{-}$\mathrm{value}(T):=p$\textrm{-}$\mathrm{value}(T)+w_{h}%
(\widehat{\boldsymbol{\theta}})\Pr\left(  \chi_{(I-1)(J-1)-h}^{2}>t\right)
$.\texttt{\newline\hspace*{1.6cm}E.g., the NAG Fortran library subroutine
G01ECF can be useful.}\newline\texttt{\hspace*{1.6cm}(Remark: for small sample
sizes and for values of }$T_{\lambda}(\overline{\boldsymbol{p}},\boldsymbol{p}%
(\widetilde{\boldsymbol{\theta}}),\boldsymbol{p}(\widehat{\boldsymbol{\theta}%
}))$\texttt{, sometimes }$t<0$\texttt{).}
\end{algorithm}

In Table \ref{t1}, the power divergence based test-statistics and their
corresponding asymptotic $p$-values are shown. For all the power divergence
based test-statistics, $T\in\{T_{\lambda}(\overline{\boldsymbol{p}%
},\boldsymbol{p}(\widetilde{\boldsymbol{\theta}}),\boldsymbol{p}%
(\widehat{\boldsymbol{\theta}})),S_{\lambda}(\boldsymbol{p}%
(\widetilde{\boldsymbol{\theta}}),\boldsymbol{p}(\widehat{\boldsymbol{\theta}%
}))\}_{\lambda\in\Lambda}$, the order restricted hypothesis cannot be rejected
with a significance level $0.05$. So, it is accepted that the probability of
having side effects increases when the severity degree of the operation increases.%

\begin{table}[htbp]  \tabcolsep2.8pt  \centering
$%
\begin{tabular}
[c]{ccccccccc}\hline\hline
test-statistic & $\lambda=-1.5$ & $\lambda=-1$ & $\lambda=-0.5$ & $\lambda=0$
& $\lambda=\frac{2}{3}$ & $\lambda=1$ & $\lambda=1.5$ & $\lambda=2$\\\hline
\multicolumn{1}{l}{$\overset{}{T}_{\lambda}(\overline{\boldsymbol{p}%
},\boldsymbol{p}(\widetilde{\boldsymbol{\theta}}),\boldsymbol{p}%
(\widehat{\boldsymbol{\theta}}))$} & $9.4681$ & $9.1918$ & $8.9535$ & $8.7497$
& $8.5262$ & $8.4334$ & $8.3160$ & $8.2230$\\
$p\mathrm{-value}(T_{\lambda})$ & $0.0123$ & $0.0139$ & $0.0155$ & $0.0170$ &
$0.0188$ & $0.0196$ & $0.0206$ & $0.0215$\\
\multicolumn{1}{l}{$S_{\lambda}(\boldsymbol{p}(\widetilde{\boldsymbol{\theta}%
}),\boldsymbol{p}(\widehat{\boldsymbol{\theta}}))$} & $9.1282$ & $8.9774$ &
$8.8520$ & $8.7497$ & $8.6463$ & $8.6076$ & $8.5650$ & $8.5399$\\
$p\mathrm{-value}(S_{\lambda})$ & $0.0143$ & $0.0153$ & $0.0162$ & $0.0170$ &
$0.0178$ & $0.0181$ & $0.0184$ & $0.0186$\\\hline\hline
\end{tabular}
\ \ \ $%
\caption{Power divergence based test-statistics and asymptotic p-values.\label{t1}}%
\end{table}%

\section{Monte Carlo Study\label{MC}}

Taking five cases, depending on $\delta\in\{0,0.1,0.5,1,1.5,15\}$, we
considered $I=4$ independent trinomial samples ($J=3$) with a vector of
theoretical probabilities,%
\begin{align*}
\boldsymbol{\pi}_{i}(\boldsymbol{\theta}(\delta))  &  =(\pi_{i1}%
(\boldsymbol{\theta}(\delta)),\pi_{i2}(\boldsymbol{\theta}(\delta)),\pi
_{i3}(\boldsymbol{\theta}(\delta)))^{T}\\
\pi_{ij}(\boldsymbol{\theta}(\delta))  &  =\frac{1}{3}\frac{1+i(j-1)\delta
}{1+i\delta},\quad i=1,...,4,\quad j=1,...,3,
\end{align*}
for each of the $I=4$ independent multinomial samples, in four scenarios:
\newline$\hspace*{0.5cm}\ast$ Scenario 1: $n=28$, $n_{1}=4$, $n_{2}=6$,
$n_{3}=8$, $n_{4}=10$;\newline$\hspace*{0.5cm}\ast$ Scenario 2: $n=56$,
$n_{1}=8$, $n_{2}=12$, $n_{3}=16$, $n_{4}=20$;\newline$\hspace*{0.5cm}\ast$
Scenario 3: $n=84$, $n_{1}=12$, $n_{2}=18$, $n_{3}=24$, $n_{4}=30$%
;\newline$\hspace*{0.5cm}\ast$ Scenario 4: $n=112$, $n_{1}=16$, $n_{2}=24$,
$n_{3}=32$, $n_{4}=40$.\newline It is worthwhile to mention that we have
chosen either equal or unequal sample sizes and we did not find any different
performance as it was found for the stochastic ordering in Wang (1996). When
$\delta=0$, the null hypothesis is held, $\boldsymbol{\pi}_{i}%
(\boldsymbol{\theta}(0))=\boldsymbol{\pi}(\boldsymbol{\theta}_{0})=(\frac
{1}{3},\frac{1}{3},\frac{1}{3})^{T}$, $i=1,2,3,4$, while in the rest of the
values of $\delta$\ the alternative hypothesis is held.

Let $R=10,000$ be the number of replications considered for the Monte Carlo
study. Once the nominal size of the test is prefixed to be $\alpha=0.05$, the
exact size of the test associated with $T\in\{T_{\lambda}(\overline
{\boldsymbol{p}},\boldsymbol{p}(\widetilde{\boldsymbol{\theta}}%
),\boldsymbol{p}(\widehat{\boldsymbol{\theta}})),S_{\lambda}(\boldsymbol{p}%
(\widetilde{\boldsymbol{\theta}}),\boldsymbol{p}(\widehat{\boldsymbol{\theta}%
}))\}_{\lambda\in\Lambda}$, $\lambda\in\Lambda=\{-1.5,-1,-0.5,0,\frac{2}%
{3},1,1.5,2\}$, can be estimated through%
\[
\widehat{\alpha}_{T}=\frac{\sum_{h=1}^{R}I(p\text{\textrm{-}}\mathrm{value}%
(T_{h})<\alpha)}{R},
\]
taking into account that $\mathrm{p}$\textrm{-}$\mathrm{value}(T_{h})$ is the
$p$-value obtained in the $h$-th replication by using Algorithm \ref{AlgPval}
and $I(\bullet)$ is the indicator function, which takes value $1$ if $\bullet
$\ is true and $0$ otherwise. It is expected a more precise value of
$\widehat{\alpha}_{T}$ with respect to the nominal size $\alpha$, as $n$ is
greater (the least precise nominal sizes in Scenario 1 and the most precise
nominal sizes in Scenario 4). The first interest of the simulation study is
focussed on identifying which test-statistic has the best approximation of
$\widehat{\alpha}_{T}$\ with respect to $\alpha$, for all the scenarios.

In Table \ref{t2} the local odds ratios,%
\[
\vartheta_{ij}=\vartheta_{ij}(\delta)=\frac{1+i(j-1)\delta}{1+(i+1)(j-1)\delta
}\frac{1+(i+1)j\delta}{1+ij\delta},
\]
$(i,j)\in\{1,2,3\}\times\{1,2\}$, are shown for $\delta\in\{0.1,0.5,1,1.5\}$.
Notice that in $\boldsymbol{\vartheta}=\boldsymbol{\vartheta}(\delta
)=(\vartheta_{11}(\delta),\vartheta_{12}(\delta),\vartheta_{21}(\delta
),\vartheta_{22}(\delta),$\allowbreak$\vartheta_{31}(\delta),\vartheta
_{32}(\delta))^{T}$\ some of the components are further from
$\boldsymbol{\vartheta}(0)=\boldsymbol{1}_{6}$ (null hypothesis), as the value
of $\delta>0$ is further from $0$. This means that a greater value of the
estimation of the power function might be obtained,%
\[
\widehat{\beta}_{T}(\delta)=\frac{\sum_{h=1}^{R}I(p\text{\textrm{-}%
}\mathrm{value}(T_{h})<\alpha)}{R},
\]
as $\delta>0$ is greater. This claim is supported by the fact that some values
of the components of $\boldsymbol{\vartheta}=\boldsymbol{\vartheta}(\delta)$
decrease as $\delta$ increases but more slowly than the others increase. In
addition, for a fixed value of $\delta>0$, it is expected a greater value of
$\widehat{\beta}_{T}(\delta)$, as $n$ is greater (the worst powers in Scenario
1 and the best powers in Scenario 4). It is also worthwhile to mention that as
$\delta$ increases $\pi_{i+1,1}(\boldsymbol{\theta}(\delta))/\pi
_{i1}(\boldsymbol{\theta}(\delta))$ remains constant, $\pi_{i+1,2}%
(\boldsymbol{\theta}(\delta))/\pi_{i2}(\boldsymbol{\theta}(\delta))$\ is not
constant for $i=1,2,3$, and $\pi_{i+1,2}(\boldsymbol{\theta}(\delta))/\pi
_{i2}(\boldsymbol{\theta}(\delta))$ is approaching the limit ($\pi
_{i+1,2}(\boldsymbol{\theta}(\infty))/\pi_{i2}(\boldsymbol{\theta}(\infty))$)
on the right for $i=1,2,3$. The second interest of the simulation study is
focussed in identifying which test-statistic has the best performance in
powers and at the same time in approximating $\widehat{\alpha}_{T}$\ by
$\alpha$, in all the scenarios.%

\begin{table}[htbp]  \tabcolsep2.8pt  \centering
\begin{tabular}
[c]{ccccccccccccc}\hline
&  & $\delta=0$ &  & $\delta=0.1$ &  & $\delta=0.5$ &  & $\delta=1$ &  &
$\delta=1.5$ &  & $\delta=\infty$\\\hline
$\vartheta_{11}=\vartheta_{11}(\delta)$ &  & $1.000$ &  & $1.091$ &  & $1.333$
&  & $1.500$ &  & $1.600$ &  & $2.00$\\
$\vartheta_{12}=\vartheta_{12}(\delta)$ &  & $1.000$ &  & $1.069$ &  & $1.125$
&  & $1.111$ &  & $1.094$ &  & $1.00$\\
$\vartheta_{21}=\vartheta_{21}(\delta)$ &  & $1.000$ &  & $1.083$ &  & $1.250$
&  & $1.333$ &  & $1.375$ &  & $1.50$\\
$\vartheta_{22}=\vartheta_{22}(\delta)$ &  & $1.000$ &  & $1.055$ &  & $1.066$
&  & $1.050$ &  & $1.039$ &  & $1.00$\\
$\vartheta_{31}=\vartheta_{31}(\delta)$ &  & $1.000$ &  & $1.077$ &  & $1.200$
&  & $1.250$ &  & $1.273$ &  & $1.33$\\
$\vartheta_{32}=\vartheta_{32}(\delta)$ &  & $1.000$ &  & $1.045$ &  & $1.042$
&  & $1.029$ &  & $1.021$ &  & $1.000$\\\hline
$\pi_{21}(\boldsymbol{\theta}(\delta))/\pi_{11}(\boldsymbol{\theta}(\delta))$
&  & $0.33/0.33$ &  & $0.28/0.30$ &  & $0.17/0.22$ &  & $0.11/0.17$ &  &
$0.08/0.13$ &  & $0.50$\\
$\pi_{22}(\boldsymbol{\theta}(\delta))/\pi_{12}(\boldsymbol{\theta}(\delta))$
&  & $0.33/0.33$ &  & $0.33/0.33$ &  & $0.33/0.33$ &  & $0.33/0.33$ &  &
$0.33/0.33$ &  & $1.00$\\
$\pi_{31}(\boldsymbol{\theta}(\delta))/\pi_{21}(\boldsymbol{\theta}(\delta))$
&  & $0.33/0.33$ &  & $0.26/0.28$ &  & $0.13/0.17$ &  & $0.08/0.11$ &  &
$0.06/0.08$ &  & $0.67$\\
$\pi_{32}(\boldsymbol{\theta}(\delta))/\pi_{22}(\boldsymbol{\theta}(\delta))$
&  & $0.33/0.33$ &  & $0.33/0.33$ &  & $0.33/0.33$ &  & $0.33/0.33$ &  &
$0.33/0.33$ &  & $1.00$\\
$\pi_{41}(\boldsymbol{\theta}(\delta))/\pi_{31}(\boldsymbol{\theta}(\delta))$
&  & $0.33/0.33$ &  & $0.24/0.26$ &  & $0.11/0.13$ &  & $0.07/0.08$ &  &
$0.05/0.06$ &  & $0.75$\\
$\pi_{42}(\boldsymbol{\theta}(\delta))/\pi_{32}(\boldsymbol{\theta}(\delta))$
&  & $0.33/0.33$ &  & $0.33/0.33$ &  & $0.33/0.33$ &  & $0.33/0.33$ &  &
$0.33/0.33$ &  & $1.00$\\\hline
\end{tabular}
\caption{Theoretical local odd ratios for the Monte Carlo study.\label{t2}}%
\end{table}%

Once a nominal size $\alpha=0.05$ is established, Table \ref{alfas} summarizes
the simulated exact sizes in all the scenarios for the test-statistic
${T\in\{T_{\lambda},S_{\lambda}\}}_{\lambda\in\Lambda}$,with $\Lambda
=\{-1.5,-1,-\frac{1}{2},0,\frac{2}{3},1,1.5,2\}$. We have plotted $3\times2$
graphs in Figures \ref{fig1}-\ref{fig4} and we refer them as plots in three
rows. In the first row of Figures \ref{fig1}-\ref{fig4} we can see on the left
the exact power in all the scenarios for the test-statistic $\{{T_{\lambda}%
\}}_{\lambda\in\lbrack-1.5,3]}$ and on the right for the test-statistic
$\{{S_{\lambda}\}}_{\lambda\in\lbrack-1.5,3]}$. In order to make a comparison
of exact powers, we cannot directly proceed without considering the exact
sizes. For this reason we are going to give a procedure based on two steps.

\noindent\textit{Step 1}: We are going to check for all the power divergence
based test-statistics the criterion given by Dale (1986), i.e.,
\begin{equation}
|\,\text{logit}(1-{\widehat{\alpha}}_{T})-\text{logit}(1-\alpha)\,|\leq e
\label{con1}%
\end{equation}
with $\mathrm{logit}\left(  p\right)  =\log\left(  \frac{p}{1-p}\right)  $. We
only consider the values of $\lambda$\ such that ${\widehat{\alpha}}_{T}%
$\ verifies (\ref{con1}) with $e=0.35$, then we shall only consider the
test-statistics such that ${\widehat{\alpha}}_{T}\in\left[
0.0357,0.0695\right]  $, in all the scenarios. This criterion has been
considered for some authors, see for instance Cressie et al. (2003) and
Mart\'{\i}n and Pardo (2012). The cases satisfying the criterion are marked in
bold in Table \ref{alfas}, and comprise those values in the abscissa of the
plot between the dashed band (the dashed line in the middle represents the
nominal size), and we can conclude that we must not consider in our study
${T\in\{T_{\lambda},S_{\lambda}\}}_{\lambda\in\lbrack-1.5,-0.4)}$.

\noindent\textit{Step 2}: We compare all the test statistics obtained in Step
1 with the classical likelihood ratio test ($G^{2}=T_{0}$) as well as the
Pearson test statistic ($X^{2}=S_{1}$). To do so, we have calculated the
relative local efficiencies%
\[
\widehat{\rho}_{T}=\widehat{\rho}_{T}(\delta)=\frac{({\widehat{\beta}}%
_{T}(\delta)-{\widehat{\alpha}}_{T})-({\widehat{\beta}}_{T_{0}}(\delta
)-{\widehat{\alpha}}_{T_{0}})}{{\widehat{\beta}}_{T_{0}}(\delta
)-{\widehat{\alpha}}_{T_{0}}},\qquad\widehat{\rho}_{T}^{\ast}=\widehat{\rho
}_{T}^{\ast}(\delta)=\frac{({\widehat{\beta}}_{T}(\delta)-{\widehat{\alpha}%
}_{T})-({\widehat{\beta}}_{S_{1}}(\delta)-{\widehat{\alpha}}_{S_{1}}%
)}{{\widehat{\beta}}_{S_{1}}(\delta)-{\widehat{\alpha}}_{S_{1}}}.
\]
It is important to mention that we are comparing the proposed test-statistics
with respect to the classical likelihood ratio test ($G^{2}=T_{0}$), which is
the only asymptotic test-statistic considered in the literature of hypothesis
testing (\ref{eq2}) against (\ref{eq3}), however we are considering also the
comparisons with respect to the chi-square test statistic ($X^{2}=S_{1}$)
since this is well-known in other ordering types for having good asymptotic
performance (see Mart\'{\i}n and Balakrishnan (2013) and references therein).

In Figures \ref{fig1}-\ref{fig4} the powers and the relative local
efficiencies are summarized. The second rows of the figures represent
$\widehat{\rho}_{T}$, while in the third row is plotted $\widehat{\rho}%
_{T}^{\ast}$, on the left it is considered ${T=T_{\lambda}}$ and
${T=S_{\lambda}}$ on the right.

In all the scenarios a similar pattern is observed when plotting the exact
power, ${\widehat{\beta}}_{T}$, for $\lambda\in(-1,3)$ since a U shaped curve
is obtained. This means that the exact power is higher in the corners of the
interval in comparison with the classical likelihood ratio test ($G^{2}=T_{0}%
$) as well as the classical Pearson test statistic ($X^{2}=S_{1}$), contained
in the middle. The likelihood ratio test has very bad performance in relation
to the simulated exact size, and we restrict ourselves to $(0,2]$, taking into
account the simulated exact sizes. The Cressie-Read test-statistic ($T_{2/3}$)
and the chi-square one ($X^{2}=S_{1}$) have very good performance in regards
to the simulated exact size since it is very close to nominal size
$\alpha=0.05$. If we pay attention on the local efficiencies with respect to
$G^{2}$ and $X^{2}$, $\widehat{\rho}_{T}$ and $\widehat{\rho}_{T}^{\ast}$,
${T_{\lambda}}$ and ${S_{\lambda}}$\ with $\lambda$ close to $2$ have big
values since their powers are greater in comparison with ${T_{\lambda}}$ and
${S_{\lambda}}$\ with $\lambda$ close to $0$. For $\lambda$ close to $2$, the
values of $\widehat{\rho}_{T}$ are a slighly superior in comparison with
$\widehat{\rho}_{T}^{\ast}$. Taking into account the plots we conclude that
${T_{-2}}$ and ${S_{-2}}$ have clearly the best performance for moderate
sample sizes (scenarios 3 and 4) and for small sample sizes (scenarios 1 and
2) the same test-statistics have good performance according to $\widehat{\rho
}_{T}$ and $\widehat{\rho}_{T}^{\ast}$, however with the Cressie-Read
test-statistic ($T_{2/3}$) and the chi-square one ($X^{2}=S_{1}$) a better
simulated exact sizes were obtained.%

\begin{table}[htbp]  \tabcolsep2.8pt  \centering
\begin{tabular}
[c]{l}%
$%
\begin{tabular}
[c]{ccccccccc}\hline
Scenario & ${\widehat{\alpha}}_{T_{-1.5}}$ & ${\widehat{\alpha}}_{T_{-1}}$ &
${\widehat{\alpha}}_{T_{-0.5}}$ & ${\widehat{\alpha}}_{T_{0}}$ &
${\widehat{\alpha}}_{T_{\frac{2}{3}}}$ & ${\widehat{\alpha}}_{T_{1}}$ &
${\widehat{\alpha}}_{T_{1.5}}$ & ${\widehat{\alpha}}_{T_{2}}$\\\hline
Scenario 1 & $0.0111$ & $0.0079$ & $0.1626$ & $0.0706$ & $\boldsymbol{0.0488}$
& $\boldsymbol{0.0481}$ & $\boldsymbol{0.0533}$ & $\boldsymbol{0.0651}$\\
Scenario 2 & $\boldsymbol{0.0623}$ & $\boldsymbol{0.0480}$ & $0.0888$ &
$0.0665$ & $\boldsymbol{0.0501}$ & $\boldsymbol{0.0492}$ &
$\boldsymbol{0.0502}$ & $\boldsymbol{0.0542}$\\
Scenario 3 & $0.0911$ & $0.0702$ & $\boldsymbol{0.0648}$ &
$\boldsymbol{0.0529}$ & $\boldsymbol{0.0477}$ & $\boldsymbol{0.0474}$ &
$\boldsymbol{0.0485}$ & $\boldsymbol{0.0531}$\\
Scenario 4 & $0.0827$ & $0.0708$ & $\boldsymbol{0.0620}$ &
$\boldsymbol{0.0550}$ & $\boldsymbol{0.0494}$ & $\boldsymbol{0.0485}$ &
$\boldsymbol{0.0487}$ & $\boldsymbol{0.0534}$\\\hline
\end{tabular}
\ \ \ \ $\\
$%
\begin{tabular}
[c]{ccccccccc}\hline
Scenario & ${\widehat{\alpha}}_{S_{-1.5}}$ & ${\widehat{\alpha}}_{S_{-1}}$ &
${\widehat{\alpha}}_{S_{-0.5}}$ & ${\widehat{\alpha}}_{S_{0}}$ &
${\widehat{\alpha}}_{S_{\frac{2}{3}}}$ & ${\widehat{\alpha}}_{S_{1}}$ &
${\widehat{\alpha}}_{S_{1.5}}$ & ${\widehat{\alpha}}_{S_{2}}$\\\hline
Scenario 1 & $0.2356$ & $0.2299$ & $0.1409$ & $0.0706$ & $\boldsymbol{0.0514}$
& $\boldsymbol{0.0498}$ & $\boldsymbol{0.0527}$ & $\boldsymbol{0.0599}$\\
Scenario 2 & $0.0966$ & $0.0887$ & $0.0809$ & $\boldsymbol{0.0665}$ &
$\boldsymbol{0.0521}$ & $\boldsymbol{0.0514}$ & $\boldsymbol{0.0512}$ &
$\boldsymbol{0.0532}$\\
Scenario 3 & $0.0835$ & $0.0697$ & $\boldsymbol{0.0597}$ &
$\boldsymbol{0.0529}$ & $\boldsymbol{0.0485}$ & $\boldsymbol{0.0479}$ &
$\boldsymbol{0.0490}$ & $\boldsymbol{0.0516}$\\
Scenario 4 & $0.0726$ & $\boldsymbol{0.0654}$ & $\boldsymbol{0.0590}$ &
$\boldsymbol{0.0550}$ & $\boldsymbol{0.0516}$ & $\boldsymbol{0.0508}$ &
$\boldsymbol{0.0505}$ & $\boldsymbol{0.0533}$\\\hline
\end{tabular}
\ \ \ \ $%
\end{tabular}
\caption{${\widehat{\alpha}}_{T}$, for ${T\in\{T_{\lambda},S_{\lambda}\}}_{\lambda\in\Lambda}$ in the four scenarios. \label{alfas}}%
\end{table}%
%

\begin{figure}[htbp]  \tabcolsep2.8pt  \centering
\begin{tabular}
[c]{cc}%
${T_{\lambda}}$ & ${S_{\lambda}}$\\%
\raisebox{-0cm}{\includegraphics[
trim=0.000000in 0.000000in -0.107791in -0.004856in,
height=6.1066cm,
width=7.7211cm
]%
{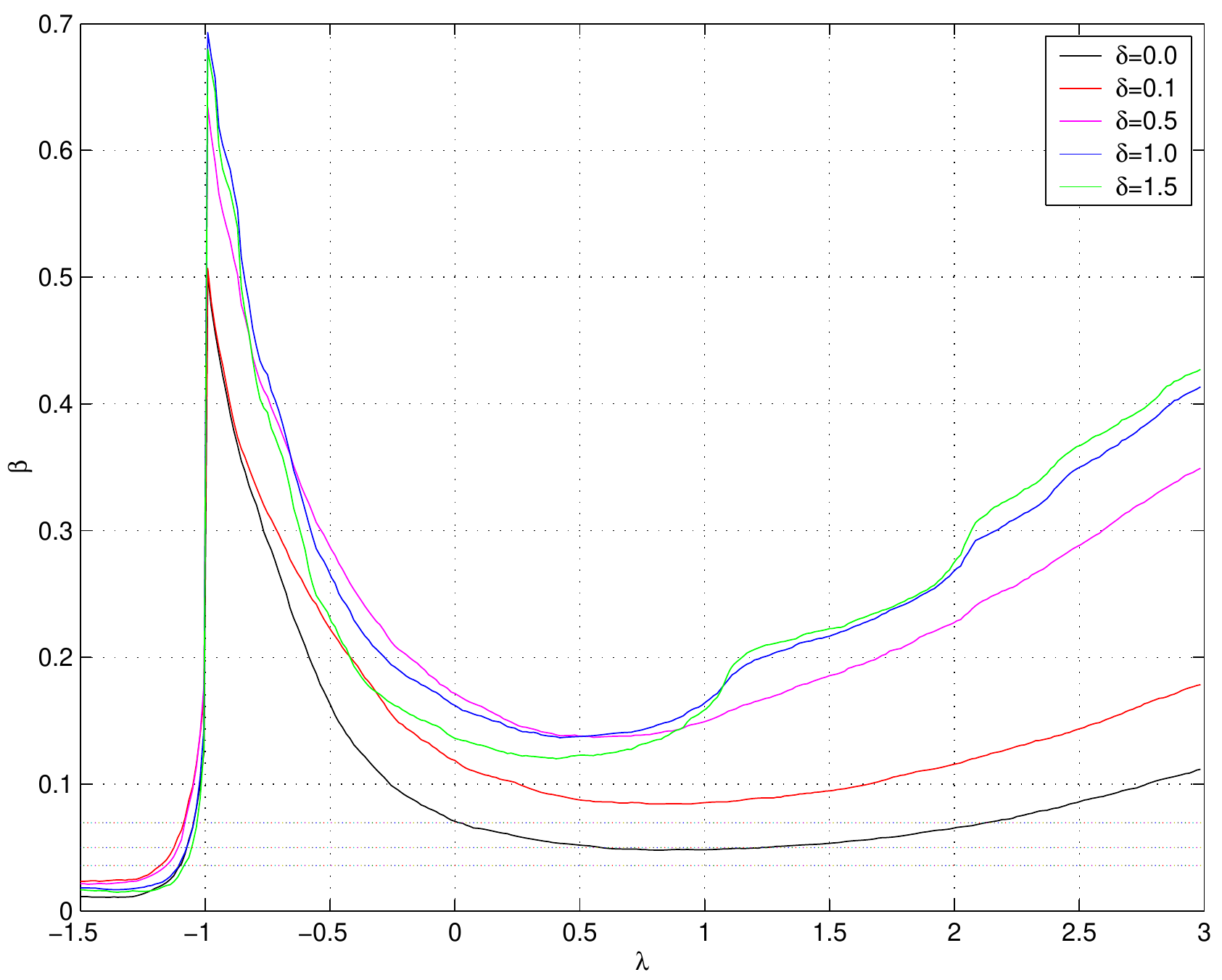}%
}%
&
{\includegraphics[
height=2.4016in,
width=2.9922in
]%
{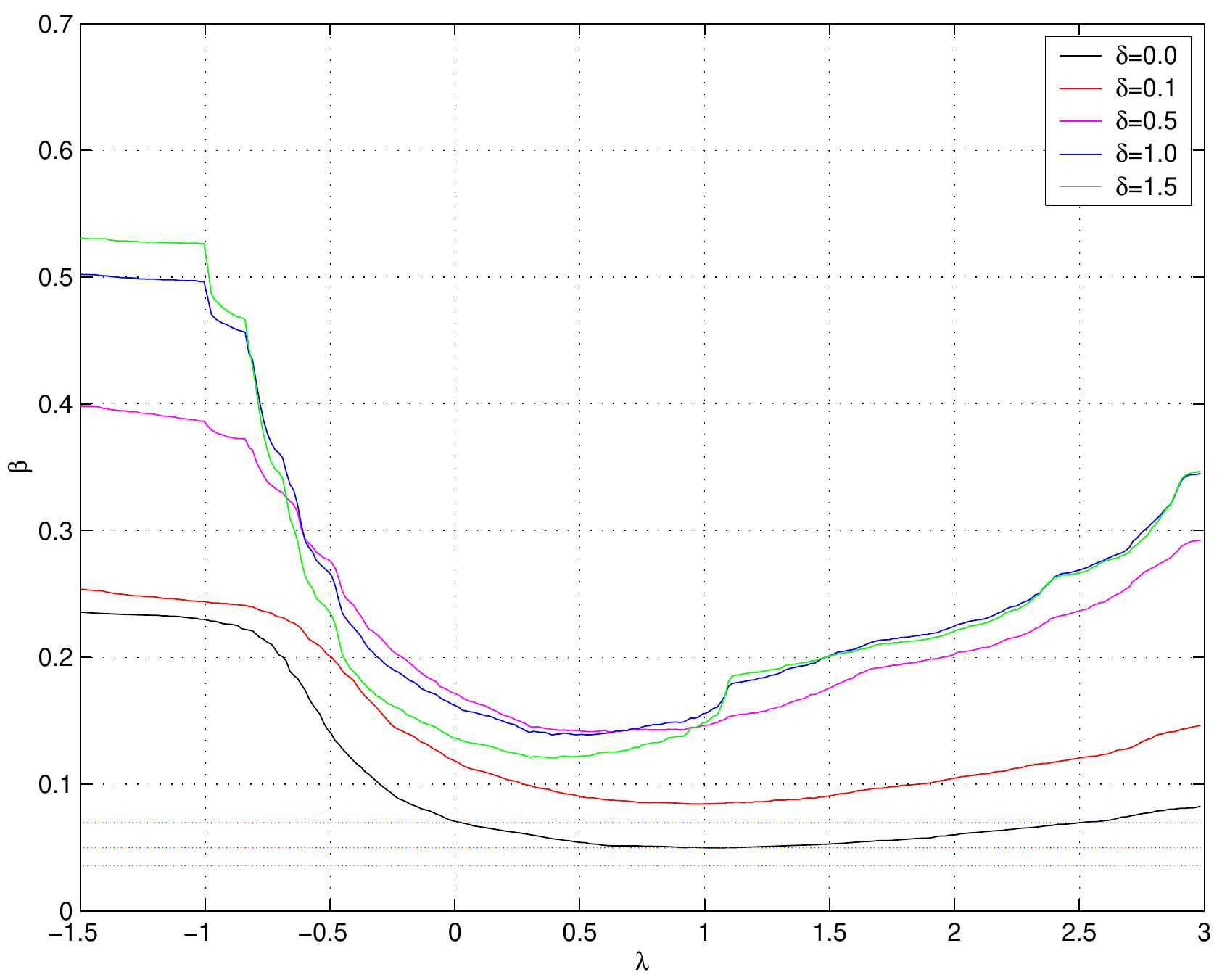}%
}%
\\%
{\includegraphics[
height=2.4016in,
width=2.9784in
]%
{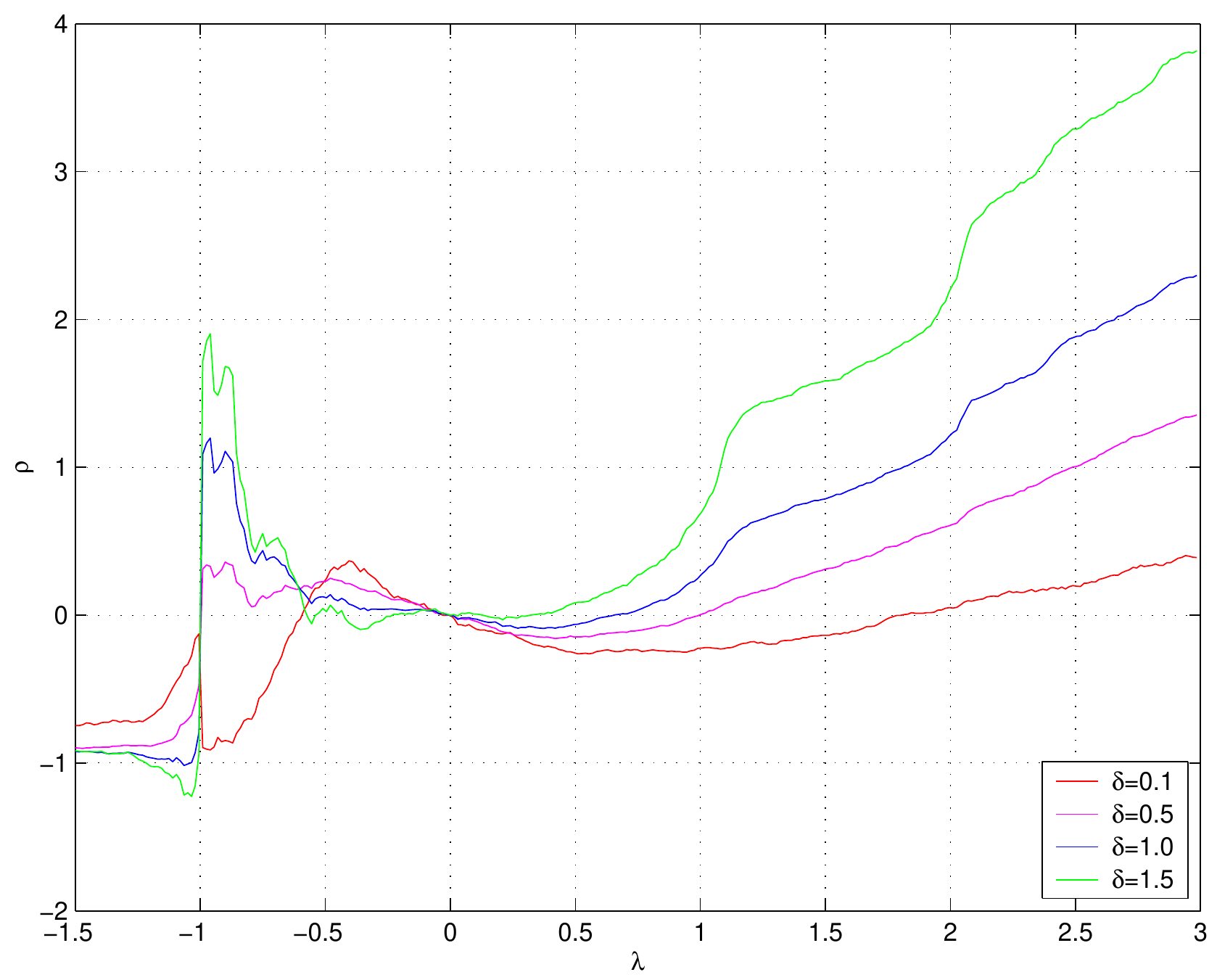}%
}%
&
{\includegraphics[
height=2.4016in,
width=3.0286in
]%
{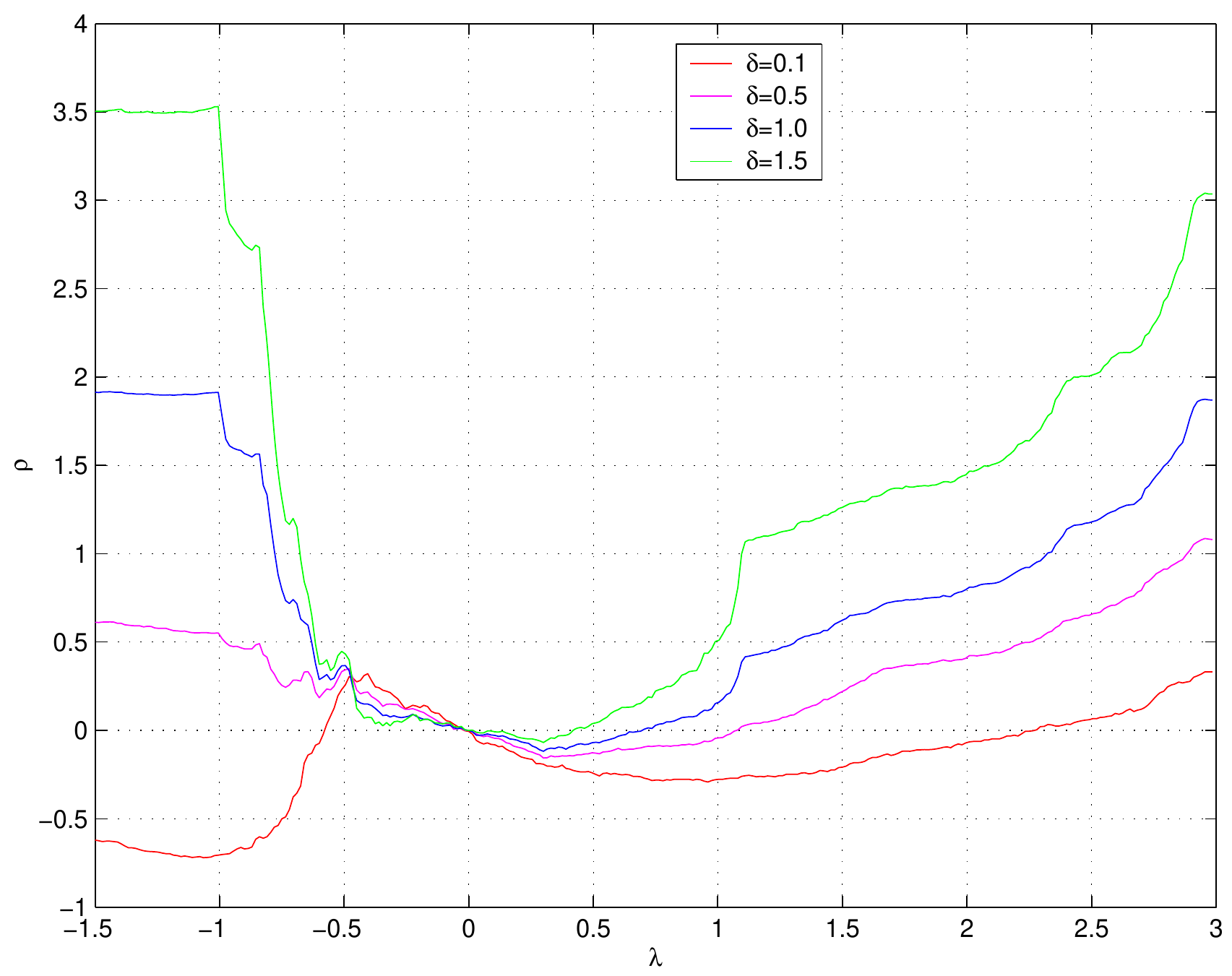}%
}%
\\%
{\includegraphics[
height=2.4016in,
width=3.0286in
]%
{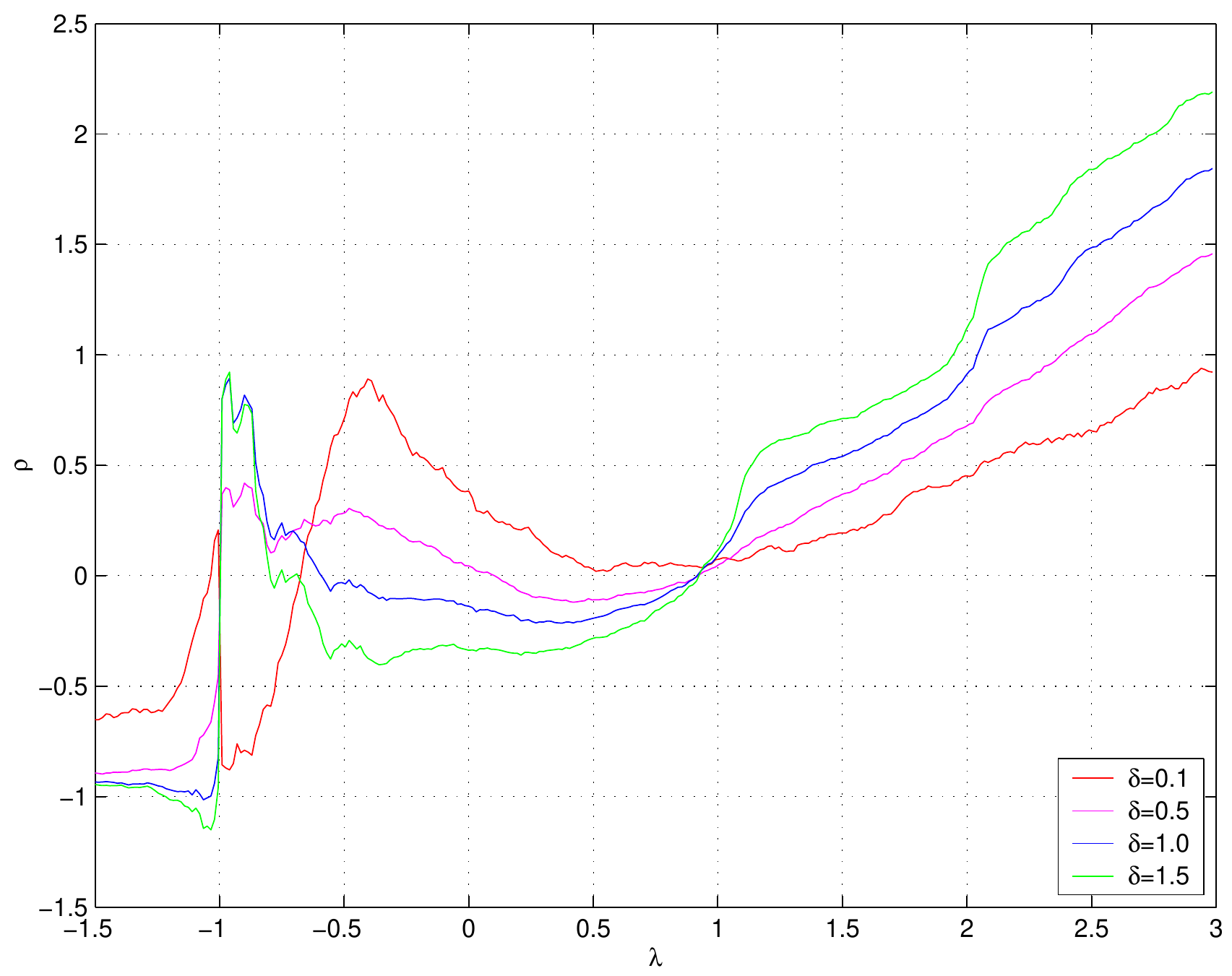}%
}%
&
{\includegraphics[
height=2.4016in,
width=3.0286in
]%
{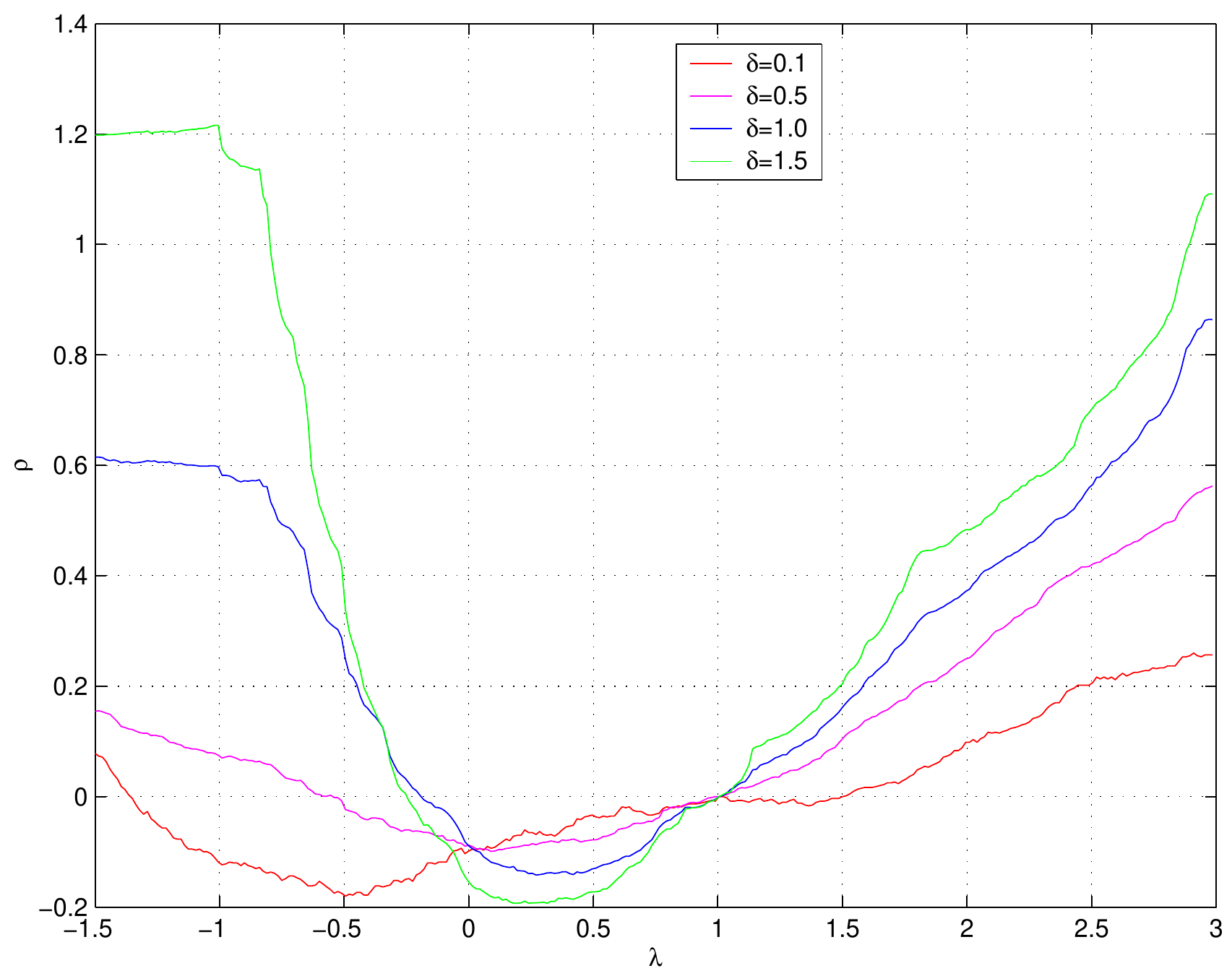}%
}%
\end{tabular}
\caption{Power and relative local efficiencies for $T_{\lambda}$ and $S_{\lambda}$ in scenario 1. \label{fig1}}%
\end{figure}%
%

\begin{figure}[htbp]  \tabcolsep2.8pt  \centering
\begin{tabular}
[c]{cc}%
${T_{\lambda}}$ & ${S_{\lambda}}$\\%
{\includegraphics[
height=2.4016in,
width=2.9922in
]%
{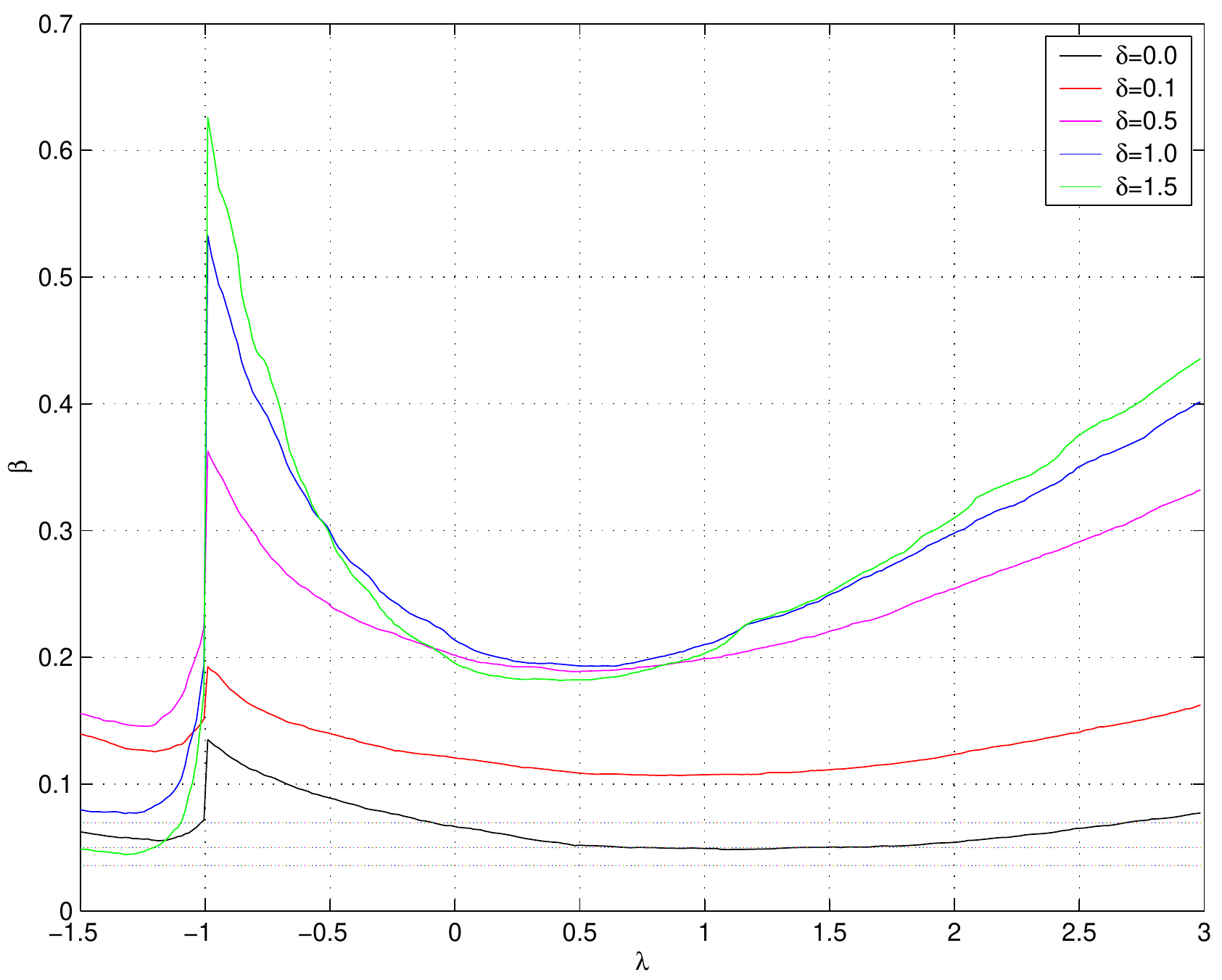}%
}%
&
{\includegraphics[
height=2.4016in,
width=3.0286in
]%
{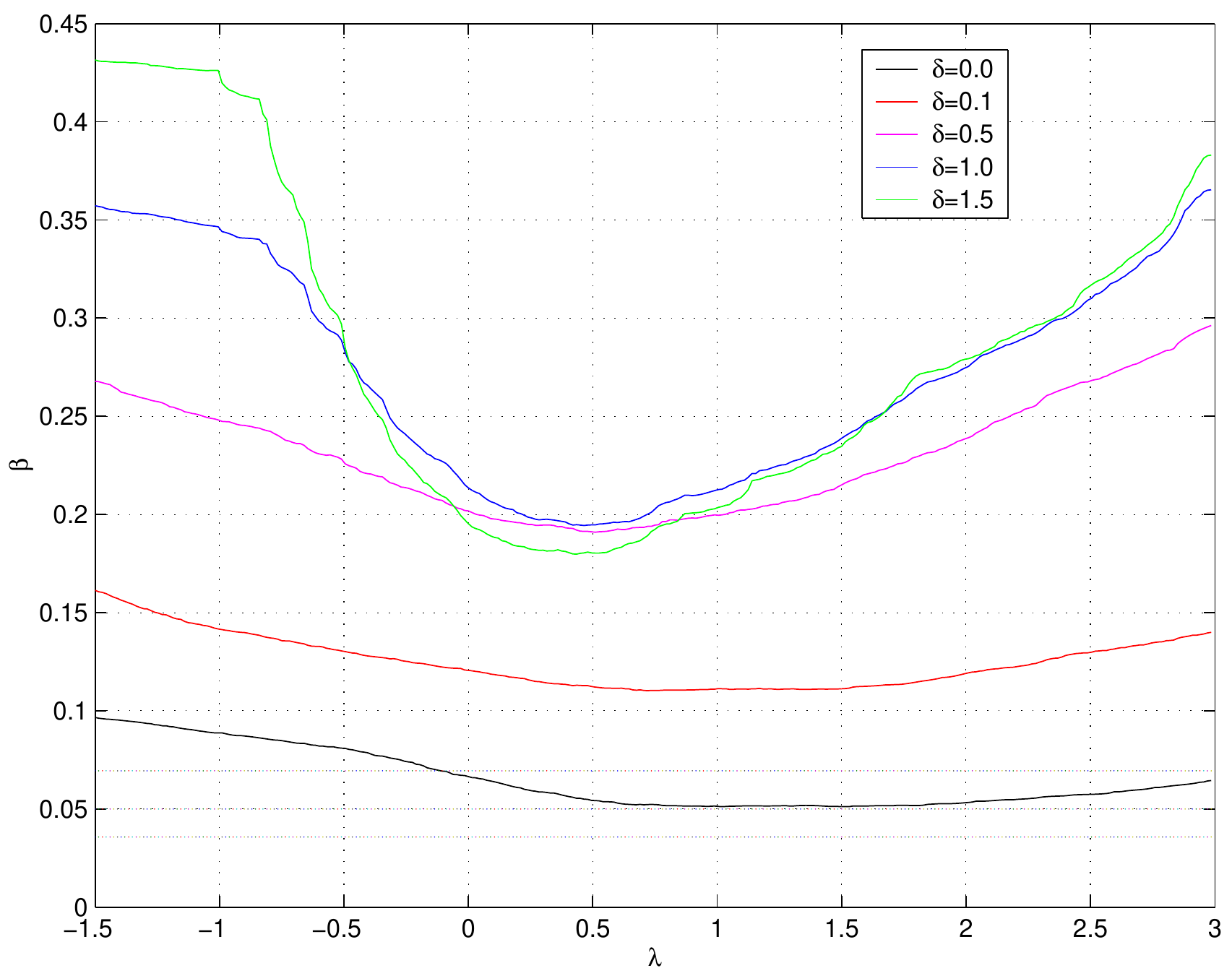}%
}%
\\%
{\includegraphics[
height=2.4016in,
width=3.0286in
]%
{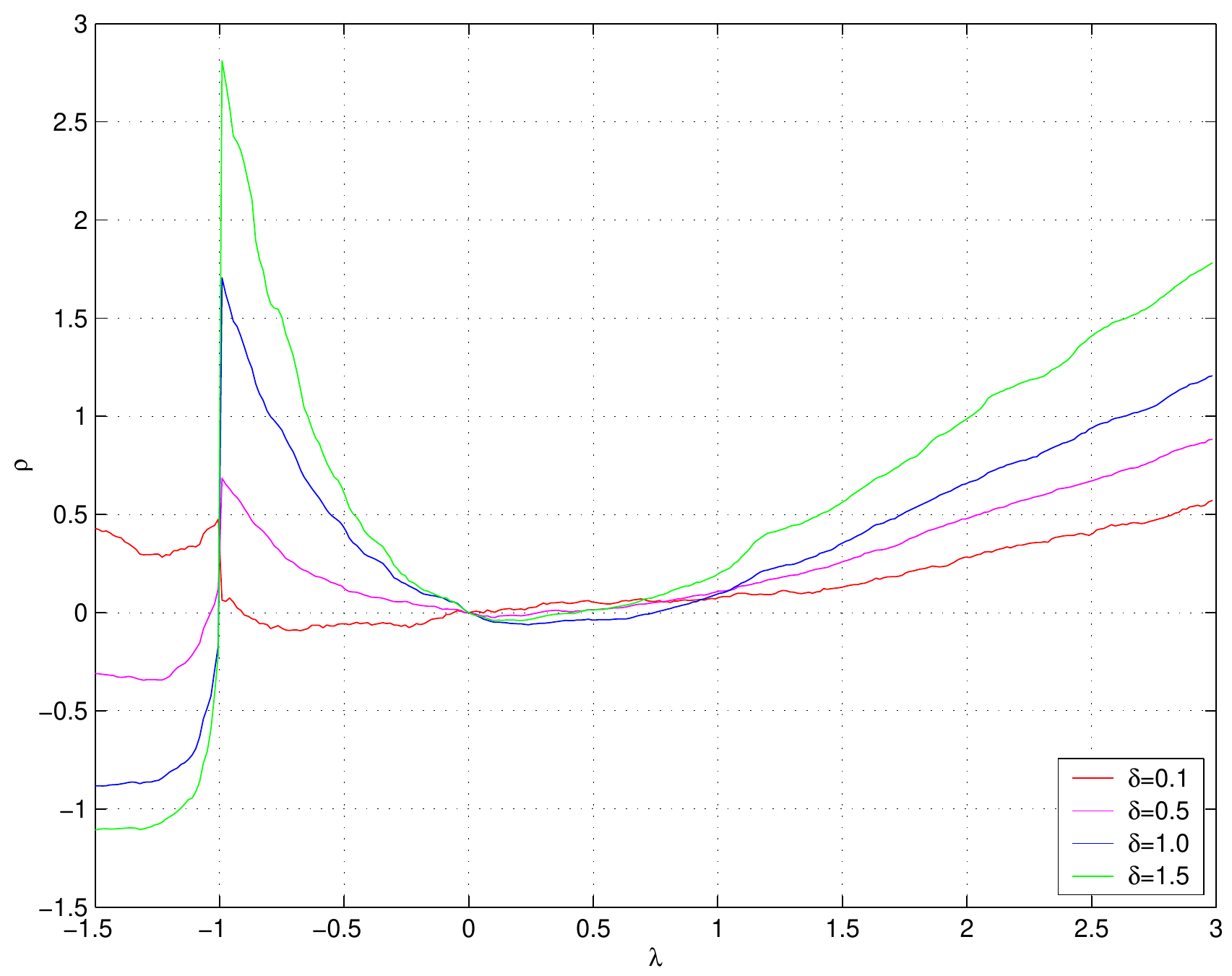}%
}%
&
{\includegraphics[
height=2.4016in,
width=3.0286in
]%
{EscB_Eficiencias_asterisco_S.pdf}%
}%
\\%
{\includegraphics[
height=2.4016in,
width=3.0286in
]%
{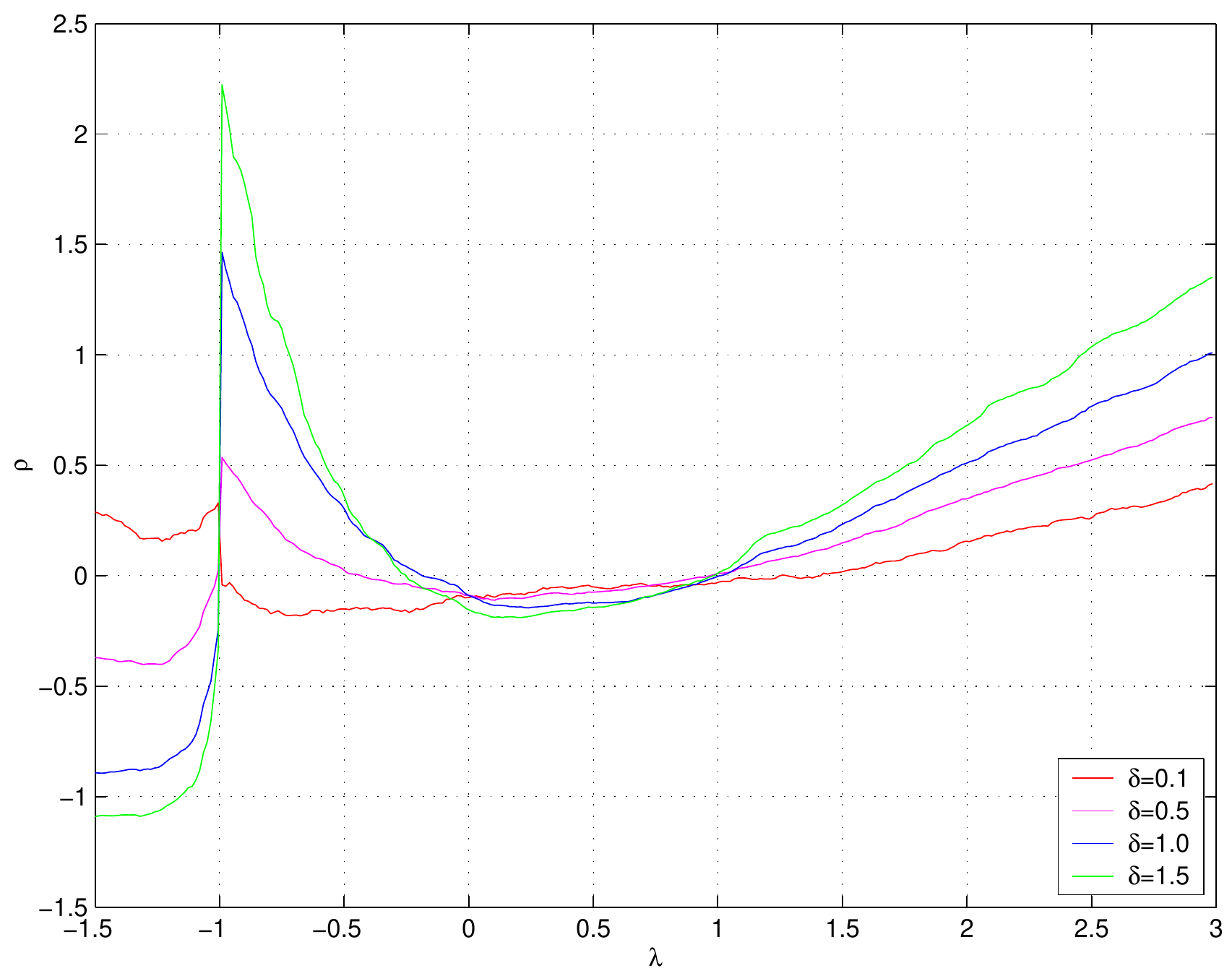}%
}%
&
{\includegraphics[
height=2.4016in,
width=3.0286in
]%
{EscB_Eficiencias_asterisco_S.pdf}%
}%
\end{tabular}
\caption{Power and relative local efficiencies for $T_{\lambda}$ and $S_{\lambda}$ in scenario 2. \label{fig2}}%
\end{figure}%
%

\begin{figure}[htbp]  \tabcolsep2.8pt  \centering
\begin{tabular}
[c]{cc}%
${T_{\lambda}}$ & ${S_{\lambda}}$\\%
{\includegraphics[
height=2.4016in,
width=2.9922in
]%
{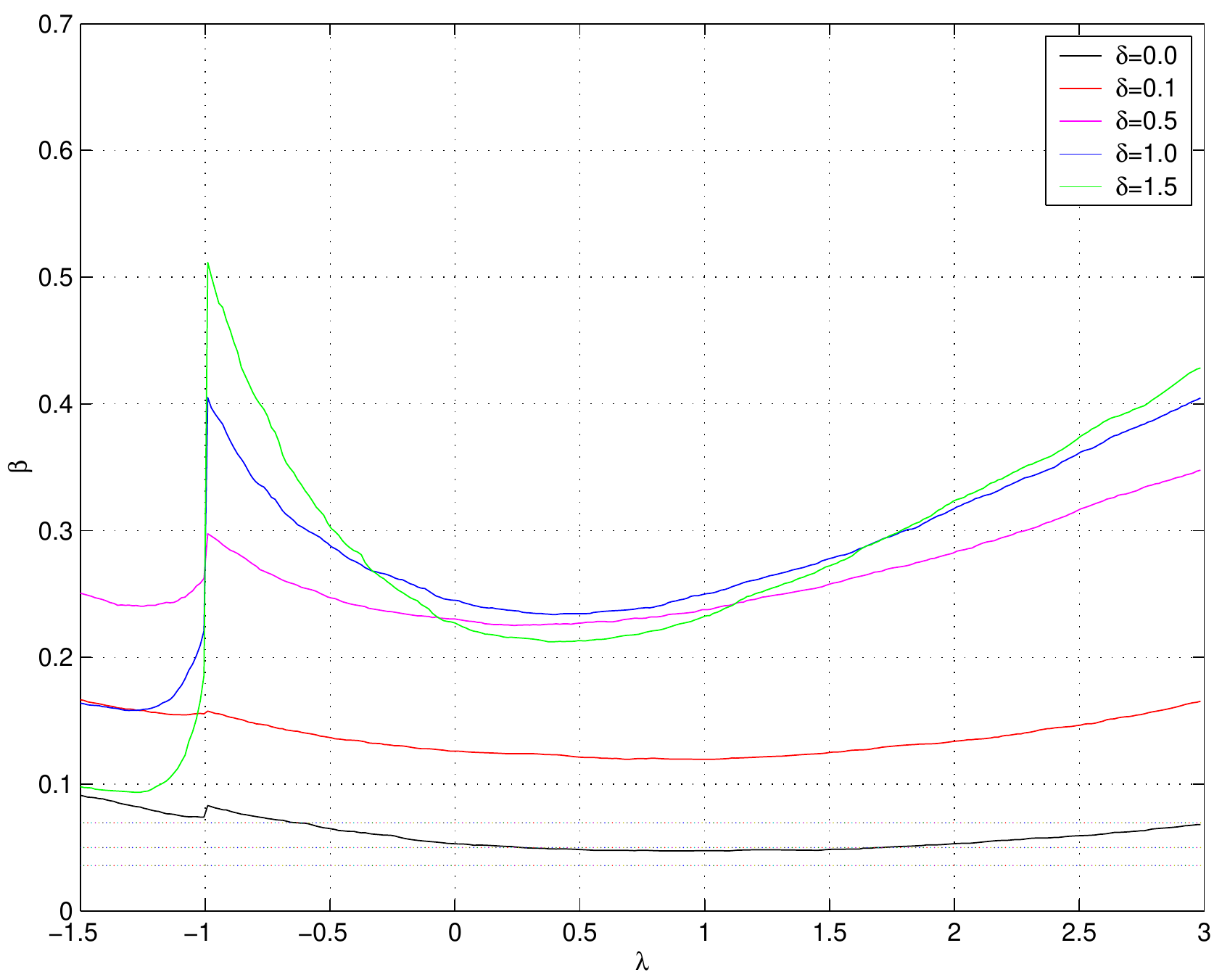}%
}%
&
{\includegraphics[
height=2.4016in,
width=3.0286in
]%
{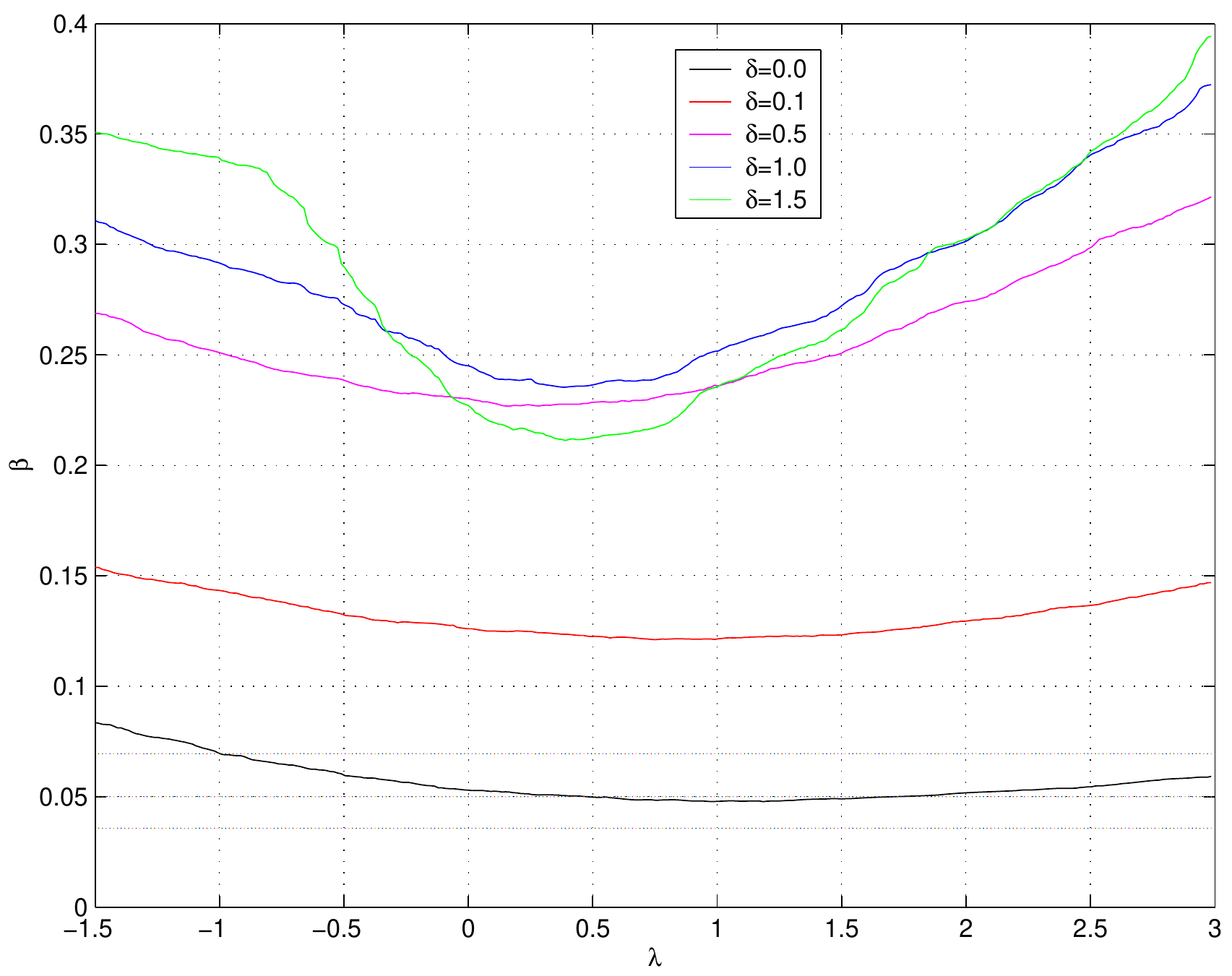}%
}%
\\%
{\includegraphics[
height=2.4016in,
width=3.0286in
]%
{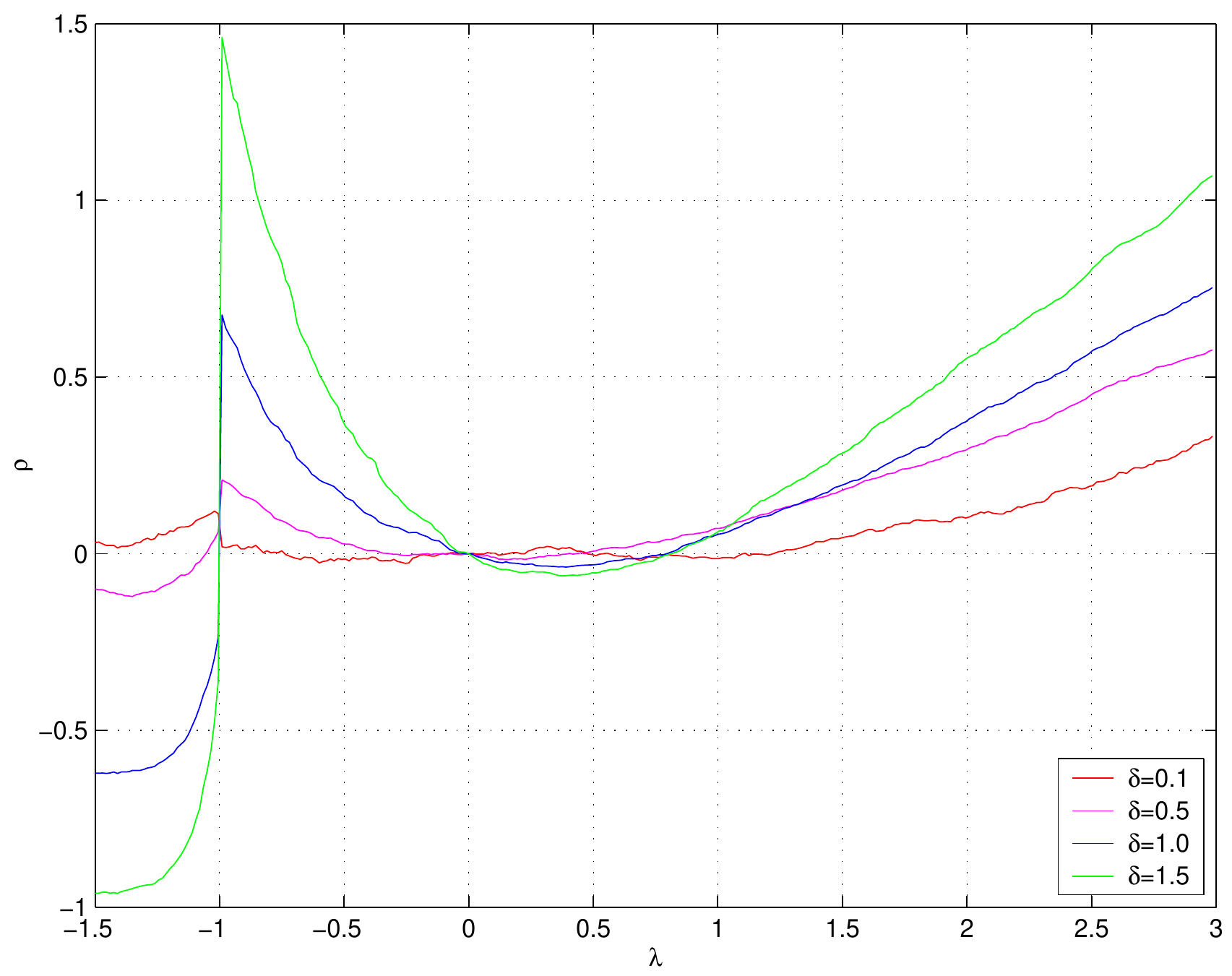}%
}%
&
{\includegraphics[
height=2.4016in,
width=3.0286in
]%
{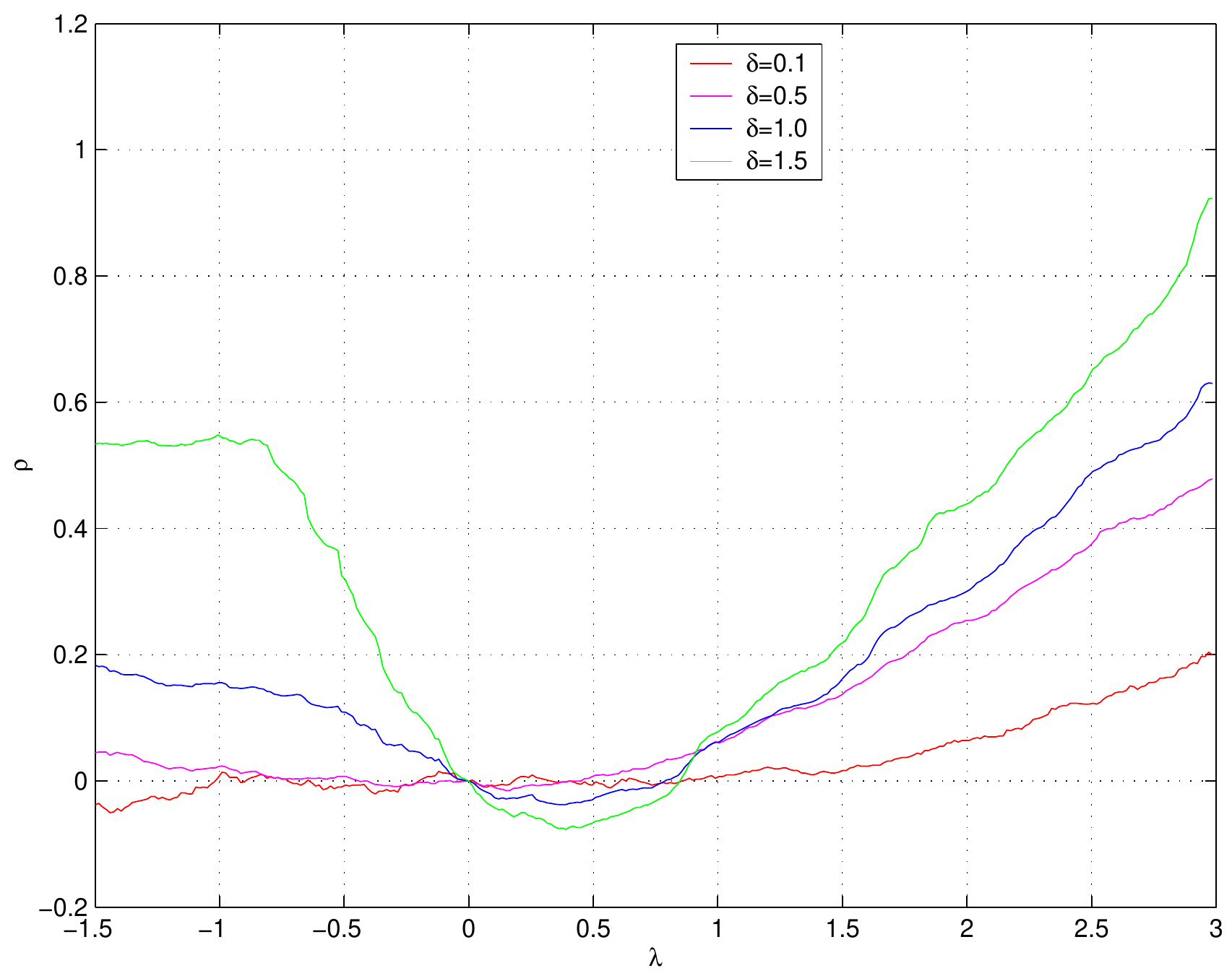}%
}%
\\%
{\includegraphics[
height=2.4016in,
width=3.0286in
]%
{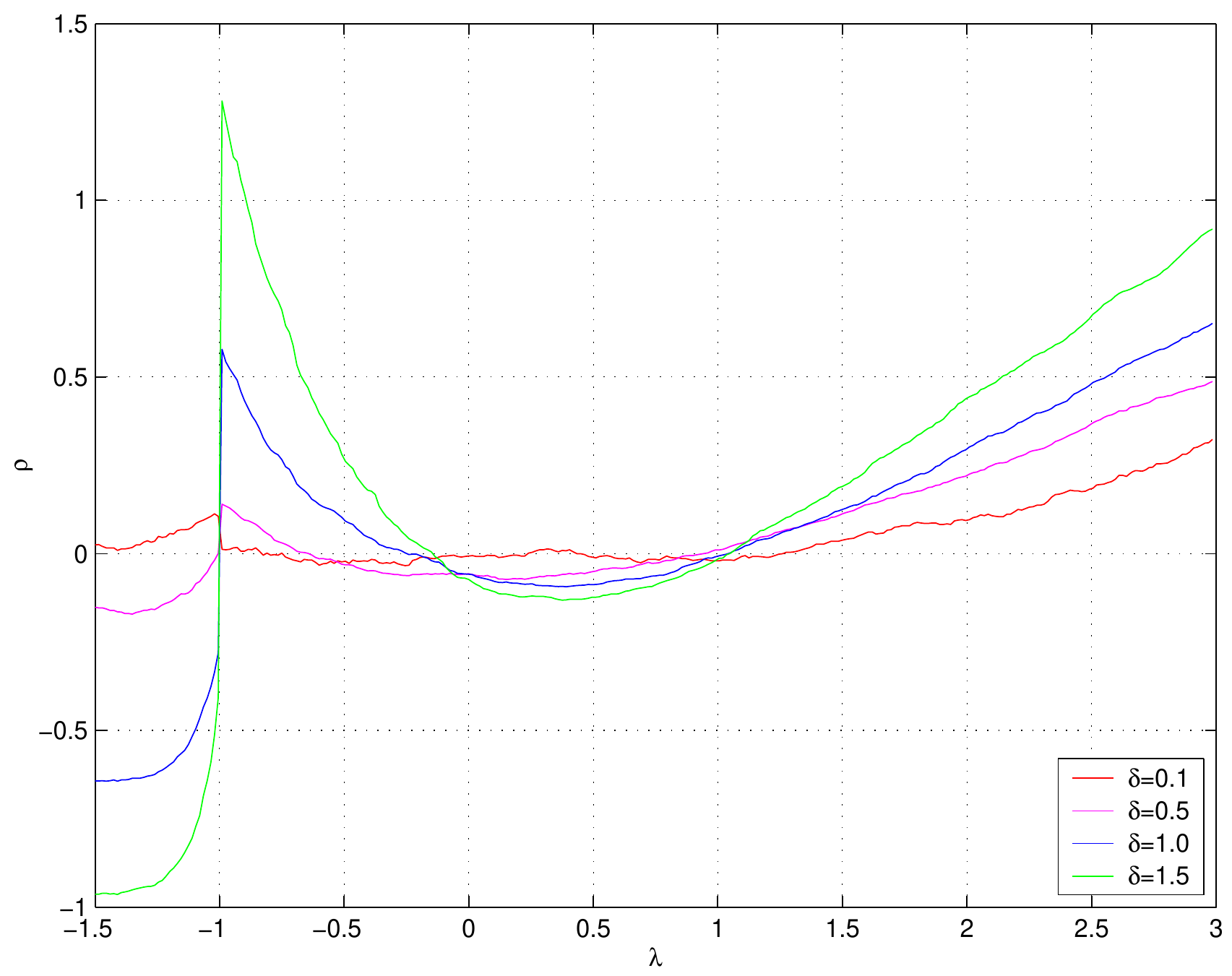}%
}%
&
{\includegraphics[
height=2.4016in,
width=3.0286in
]%
{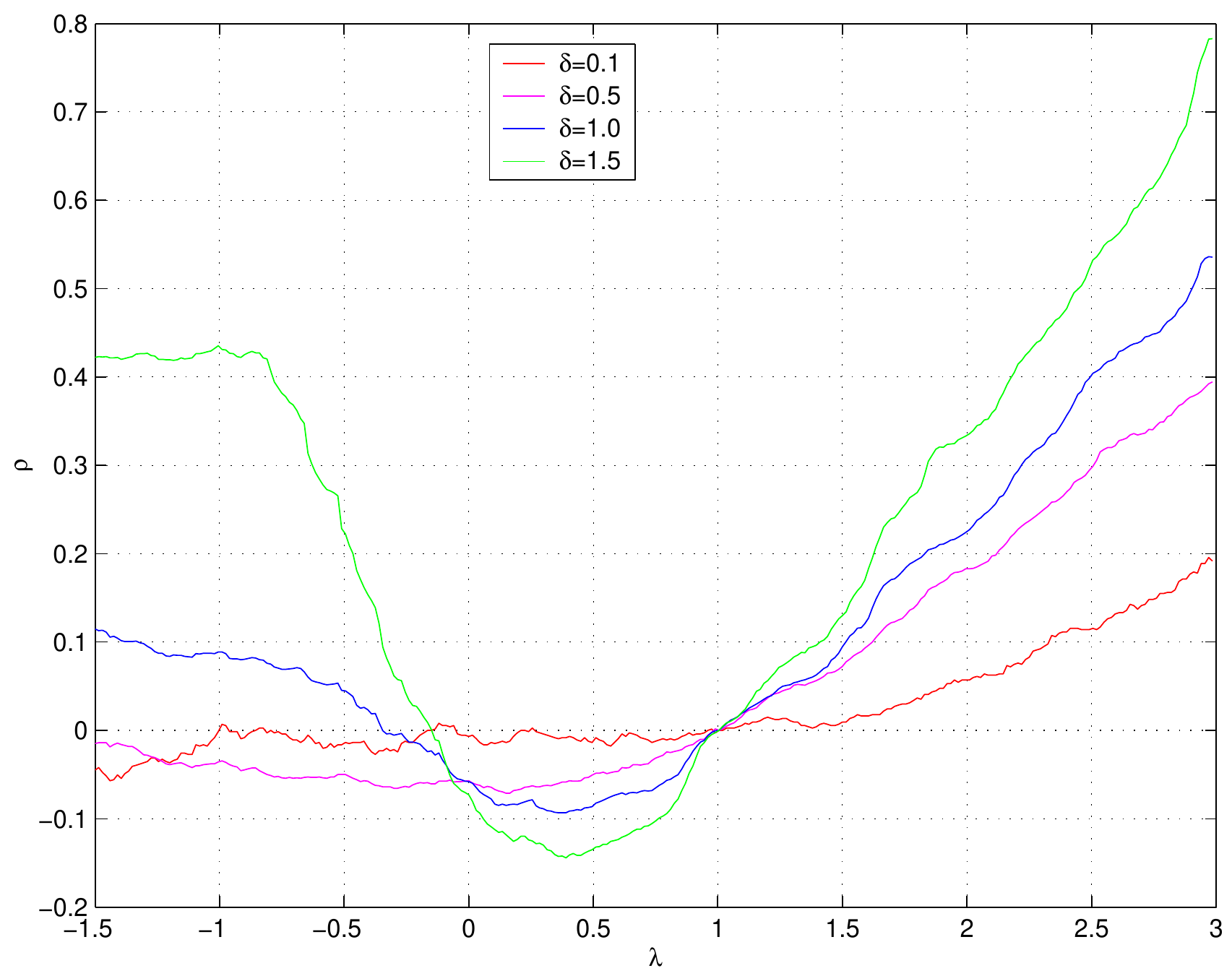}%
}%
\end{tabular}
\caption{Power and relative local efficiencies for $T_{\lambda}$ and $S_{\lambda}$ in scenario 3. \label{fig3}}%
\end{figure}%
%

\begin{figure}[htbp]  \tabcolsep2.8pt  \centering
\begin{tabular}
[c]{cc}%
${T_{\lambda}}$ & ${S_{\lambda}}$\\%
{\includegraphics[
height=2.4016in,
width=3.0286in
]%
{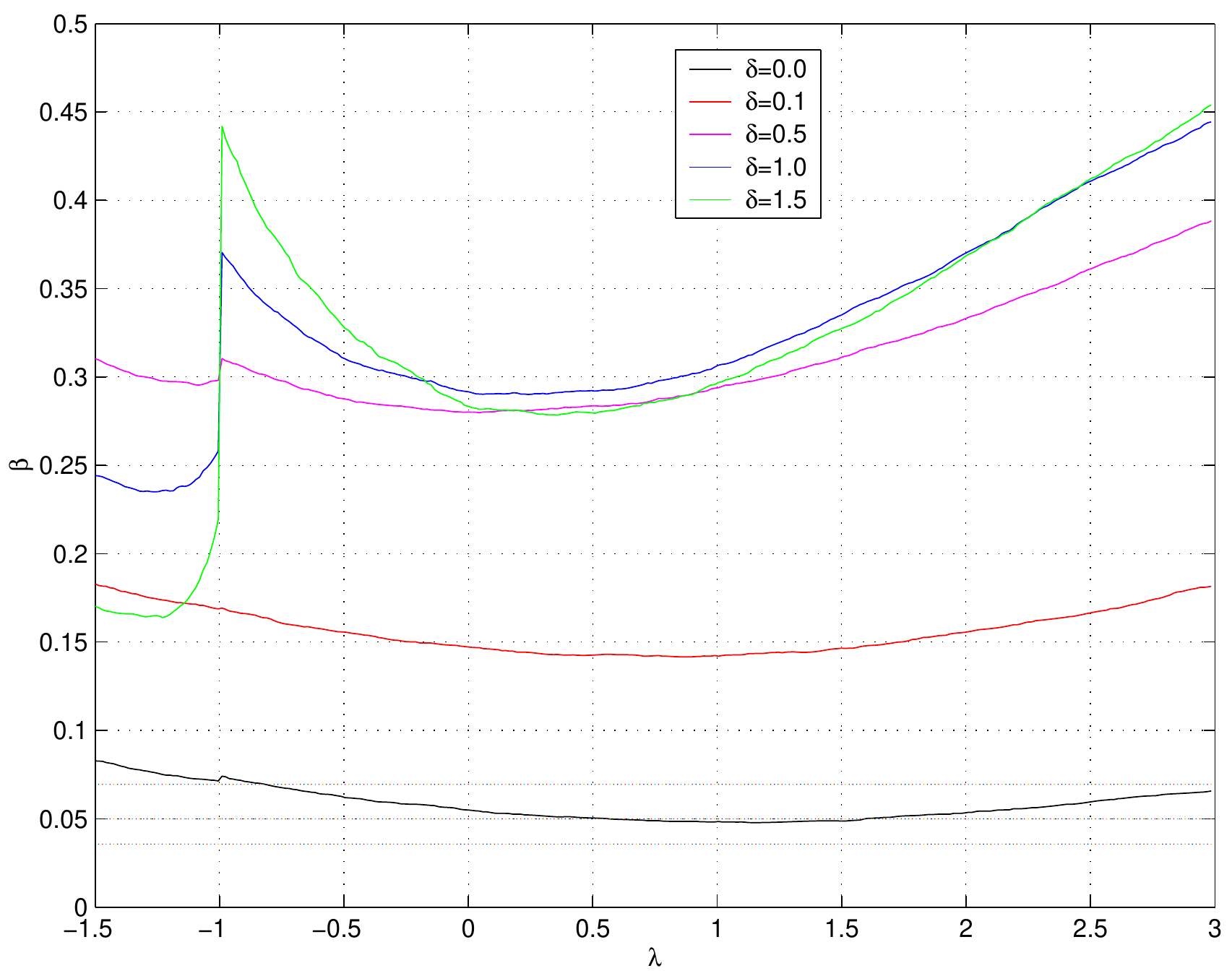}%
}%
&
{\includegraphics[
height=2.4016in,
width=3.0286in
]%
{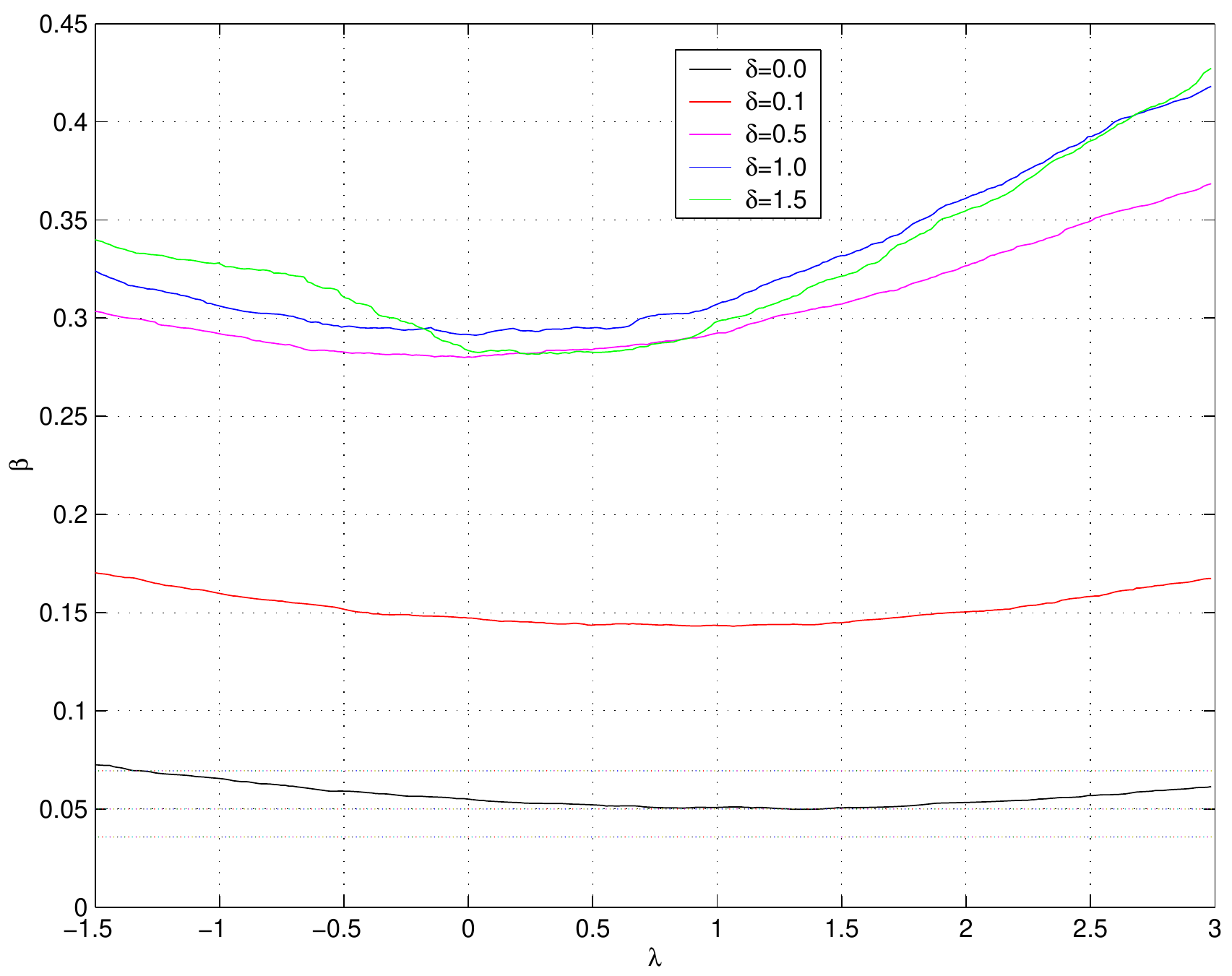}%
}%
\\%
{\includegraphics[
height=2.4016in,
width=3.0286in
]%
{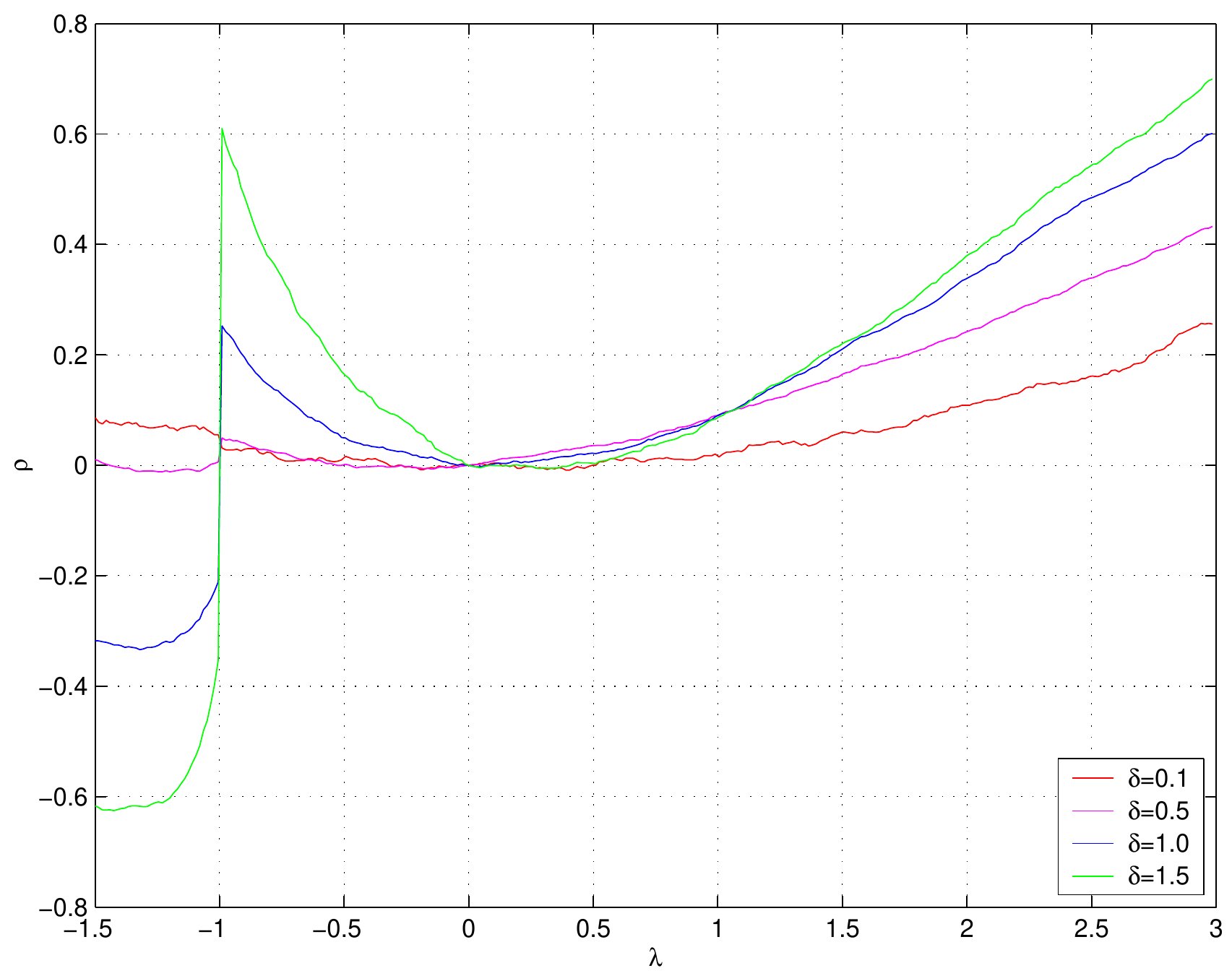}%
}%
&
{\includegraphics[
height=2.4016in,
width=3.0286in
]%
{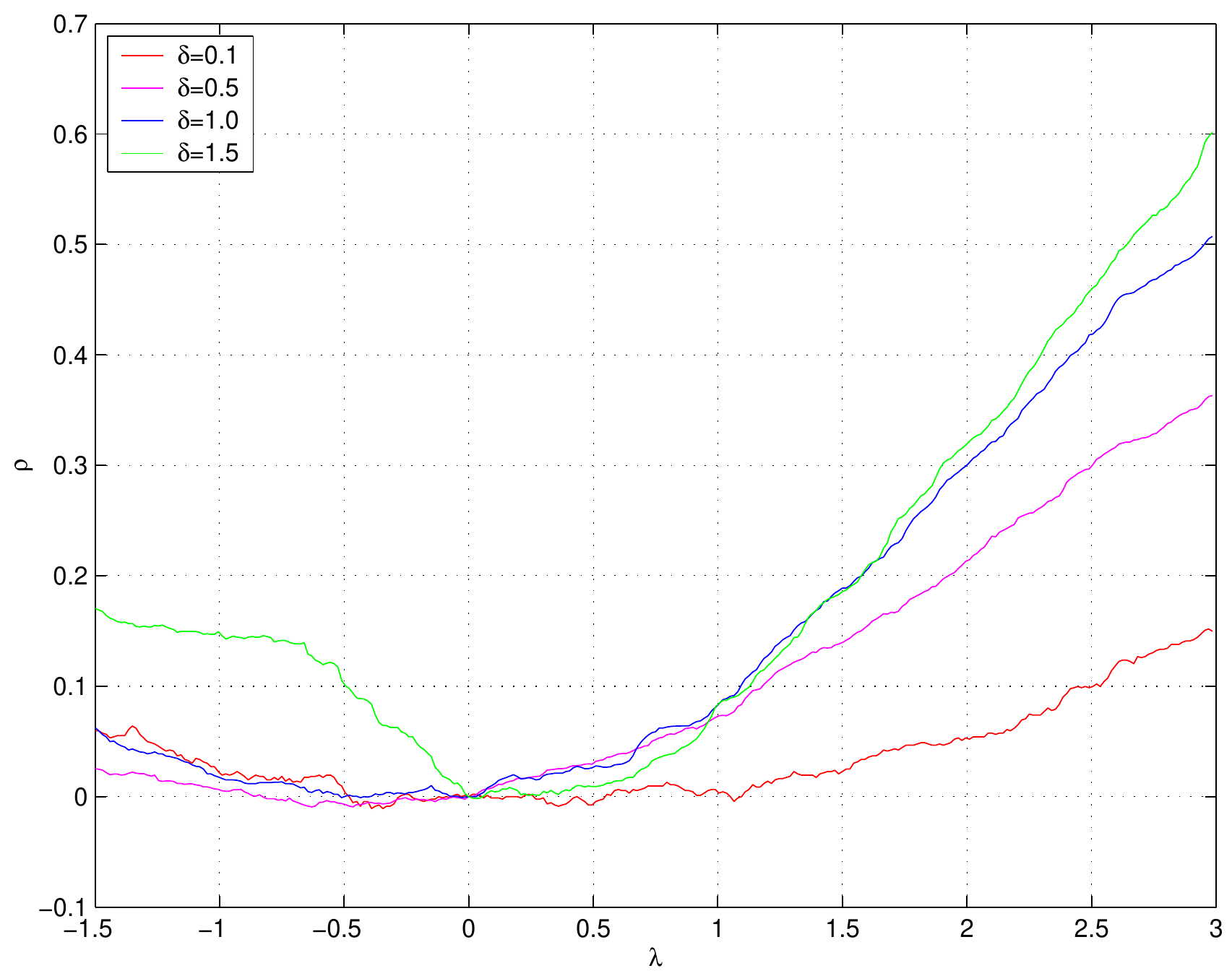}%
}%
\\%
{\includegraphics[
height=2.4016in,
width=3.0286in
]%
{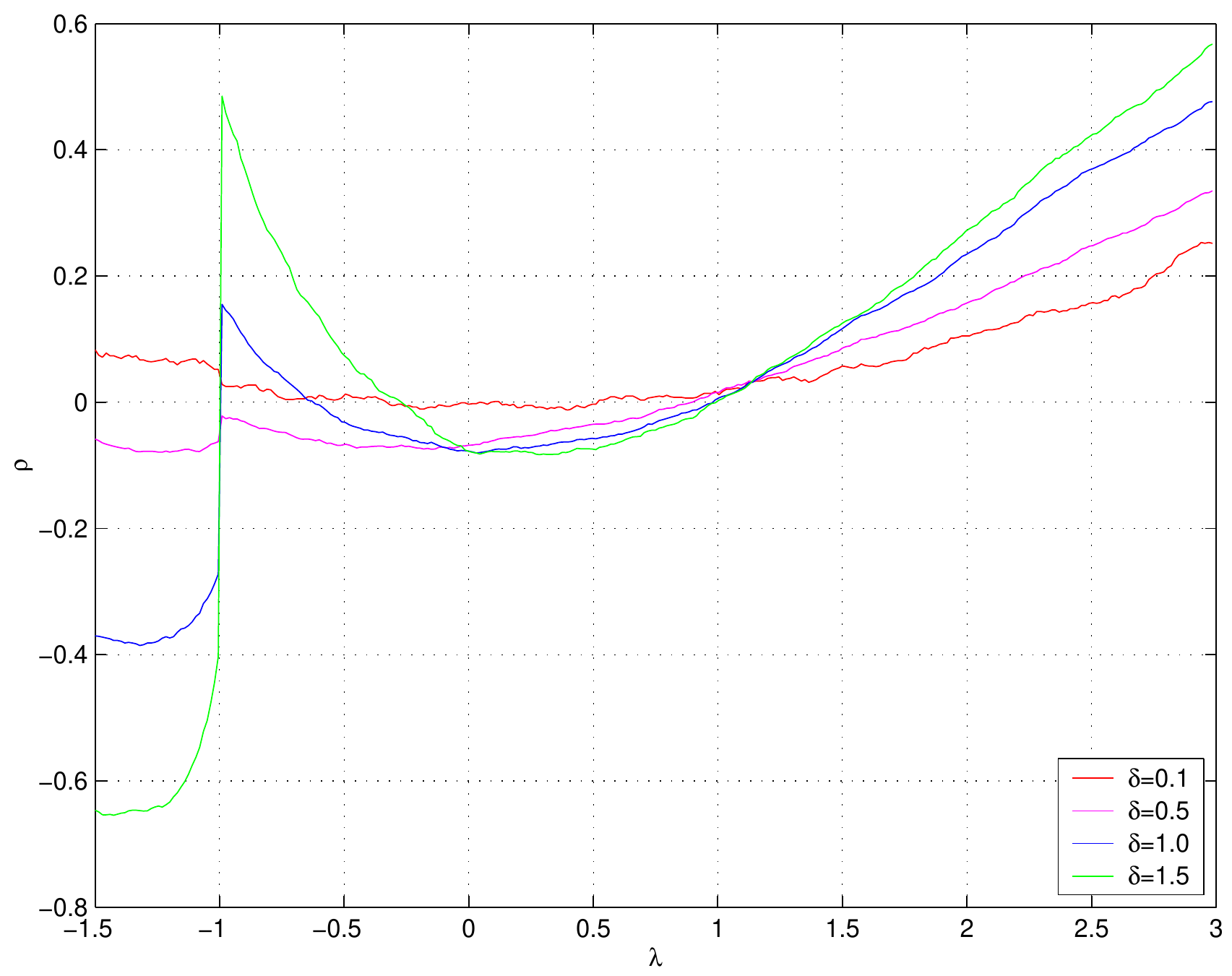}%
}%
&
{\includegraphics[
height=2.4016in,
width=3.0286in
]%
{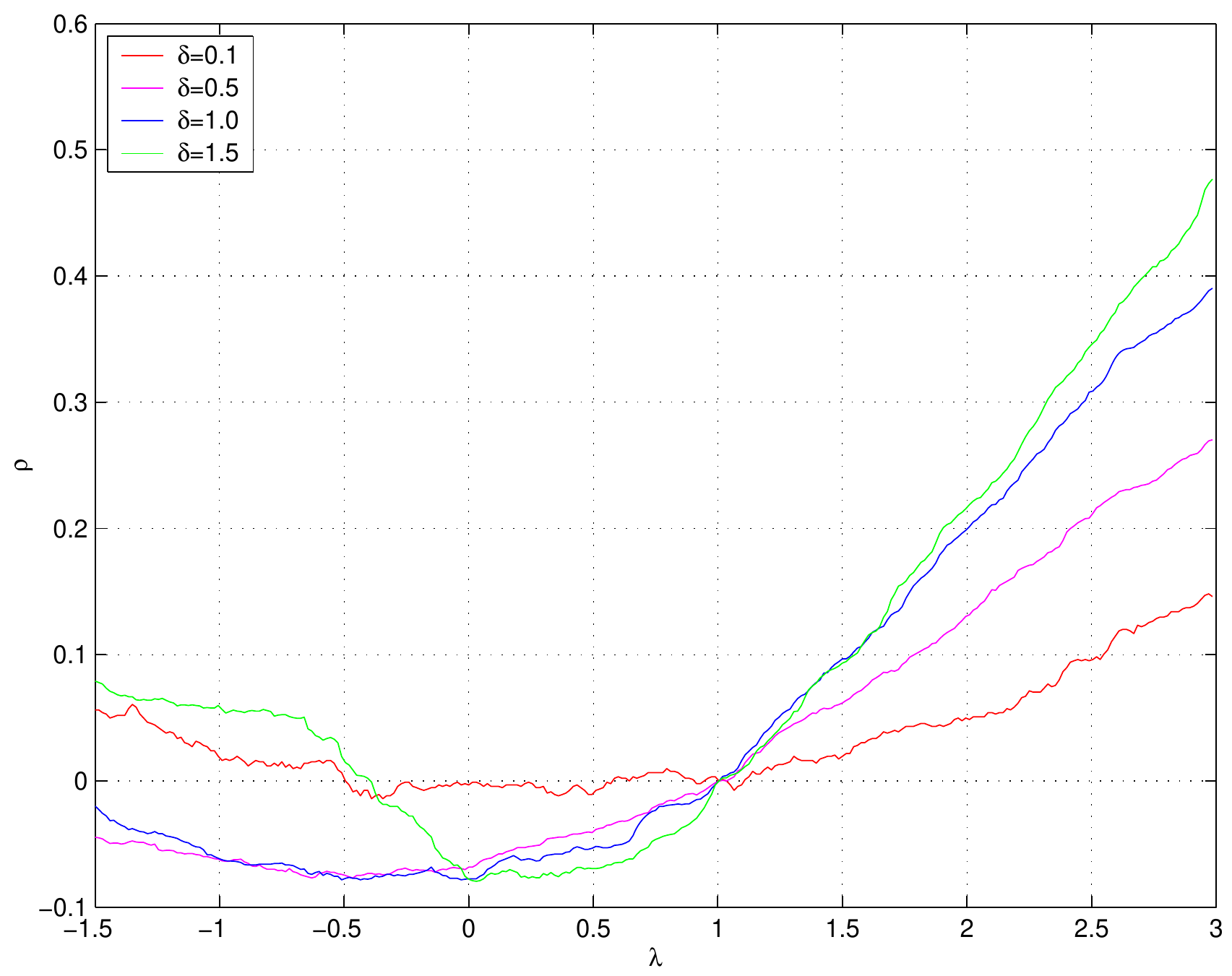}%
}%
\end{tabular}
\caption{Power and relative local efficiencies for $T_{\lambda}$ and $S_{\lambda}$ in scenario 4. \label{fig4}}%
\end{figure}%

\section{Summary and conclusion}

We have proposed and studied two new families of test-statistics, useful for
testing if there exists homogeneity in $I$ independent multinomial samples or
on the contrary, likelihood ordering. Their asymptotic chi-bar distribution is
common, with weights easy to be estimated, using the matrix $\boldsymbol{H}%
(\widehat{\boldsymbol{\theta}})$, and have a simple interpretation in terms of
log-linear modeling. Two algorithms provide the procedure for computing the
estimation of the weights and asymptotic $p$-values of the test-statistics. In
the literature of likelihood ratio ordering, using asymptotic tecniques, the
likelihood ratio test has solely been considered. The simulation study shows
that such a test-statistic has a poor performance for small and moderate
sample sizes and we have seen that it is much better using other
test-statistics such as ${T_{\lambda}}$ and ${S_{\lambda}}$ with $\lambda=2$.

%

\appendix

\section{Appendix}

Suppose we are interested in testing
\begin{equation}
H_{0}:\quad\boldsymbol{R\theta}=\boldsymbol{0}_{(I-1)(J-1)}\quad\text{vs}\quad
H_{1}:\quad\boldsymbol{R}(S)\boldsymbol{\theta}=\boldsymbol{0}_{\mathrm{card}%
(S)}\quad\text{and}\quad\boldsymbol{R\theta}\neq\boldsymbol{0}_{(I-1)(J-1)}.
\label{TB}%
\end{equation}
Under $H_{0}$, the parameter space is $\Theta_{0}=\left\{  \boldsymbol{\theta
}\in%
\mathbb{R}
^{I(J-1)}:\boldsymbol{R\theta}=\boldsymbol{0}_{(I-1)(J-1)}\right\}  $ and the
maximum likelihood estimator of $\boldsymbol{\theta}$ in $\Theta_{0}$ is
$\widehat{\boldsymbol{\theta}}=\arg\max_{\boldsymbol{\theta\in}\Theta_{0}}%
\ell(\boldsymbol{N};\boldsymbol{\theta})$. Under the alternative hypothesis
the parameter space is $\Theta(S)-\Theta_{0}$, where $\Theta(S)=\left\{
\boldsymbol{\theta}\in%
\mathbb{R}
^{I(J-1)}:\boldsymbol{R}(S)\boldsymbol{\theta}=\boldsymbol{0}_{(I-1)(J-1)}%
\right\}  $, that is, under both hypotheses, $H_{0}$\ and $H_{1}$, the
parameter space is $\Theta(S)=\left\{  \boldsymbol{\theta}\in%
\mathbb{R}
^{I(J-1)}:\boldsymbol{R}(S)\boldsymbol{\theta}=\boldsymbol{0}_{(I-1)(J-1)}%
\right\}  $ and the maximum likelihood estimator of $\boldsymbol{\theta}$ in
$\Theta(S)$ is $\widehat{\boldsymbol{\theta}}(S)=\arg\max_{\boldsymbol{\theta
\in}\Theta(S)}\ell(\boldsymbol{N};\boldsymbol{\theta})$. By following the same
idea we used for building test-statistics (\ref{5a})-(\ref{5b}) we shall
consider two family of test-statistics based on $\phi$-divergence measures,%
\begin{equation}
T_{\phi}(\overline{\boldsymbol{p}},\boldsymbol{p}(\widehat{\boldsymbol{\theta
}}(S)),\boldsymbol{p}(\widehat{\boldsymbol{\theta}}))=\frac{2n}{\phi^{\prime
}(1)}(d_{\phi}(\overline{\boldsymbol{p}},\boldsymbol{p}%
(\widehat{\boldsymbol{\theta}}))-d_{\phi}(\overline{\boldsymbol{p}%
},\boldsymbol{p}(\widehat{\boldsymbol{\theta}}(S)))) \label{5aB}%
\end{equation}
and%
\begin{equation}
S_{\phi}(\boldsymbol{p}(\widehat{\boldsymbol{\theta}}(S)),\boldsymbol{p}%
(\widehat{\boldsymbol{\theta}}))=\frac{2n}{\phi^{\prime}(1)}d_{\phi
}(\boldsymbol{p}(\widehat{\boldsymbol{\theta}}(S)),\boldsymbol{p}%
(\widehat{\boldsymbol{\theta}})). \label{5bB}%
\end{equation}

\subsection{Proposition\label{prop}}

Under $H_{0}$,
\begin{equation}
S_{\phi}(\boldsymbol{p}(\widehat{\boldsymbol{\theta}}(S)),\boldsymbol{p}%
(\widehat{\boldsymbol{\theta}}))=T_{\phi}(\overline{\boldsymbol{p}%
},\boldsymbol{p}(\widehat{\boldsymbol{\theta}}(S)),\boldsymbol{p}%
(\widehat{\boldsymbol{\theta}}))+\mathrm{o}_{p}(1), \label{D}%
\end{equation}
the asymptotic distribution of (\ref{5aB})\ and (\ref{5bB}) is $\chi_{df}^{2}$
with $df=(I-1)(J-1)-$\textrm{$card$}$(S)$.

\begin{proof}
The second order Taylor expansion of function $\mathrm{d}_{\phi}%
(\boldsymbol{\theta})=\mathrm{d}_{\phi}(\boldsymbol{p}(\boldsymbol{\theta
}),\boldsymbol{p}(\widehat{\boldsymbol{\theta}}))$ about
$\widehat{\boldsymbol{\theta}}$ is%
\begin{equation}
\mathrm{d}_{\phi}(\boldsymbol{\theta})=\mathrm{d}_{\phi}%
(\widehat{\boldsymbol{\theta}})+(\boldsymbol{\theta}%
-\widehat{\boldsymbol{\theta}})^{T}\left.  \frac{\partial}{\partial
\boldsymbol{\theta}}\mathrm{d}_{\phi}(\boldsymbol{\theta})\right\vert
_{\boldsymbol{\theta=}\widehat{\boldsymbol{\theta}}}+\frac{1}{2}%
(\boldsymbol{\theta}-\widehat{\boldsymbol{\theta}})^{T}\left.  \frac
{\partial^{2}}{\partial\boldsymbol{\theta}\partial\boldsymbol{\theta}^{T}%
}\mathrm{d}_{\phi}(\boldsymbol{\theta})\right\vert _{\boldsymbol{\theta
=}\widehat{\boldsymbol{\theta}}}(\boldsymbol{\theta}%
-\widehat{\boldsymbol{\theta}})+\mathrm{o}\left(  \left\Vert
\boldsymbol{\theta}-\widehat{\boldsymbol{\theta}}\right\Vert ^{2}\right)  ,
\label{eq16}%
\end{equation}
where%
\begin{align*}
\left.  \frac{\partial}{\partial\boldsymbol{\theta}}\mathrm{d}_{\phi
}(\boldsymbol{\theta})\right\vert _{\boldsymbol{\theta=}%
\widehat{\boldsymbol{\theta}}}  &  =\boldsymbol{0}_{(I-1)(J-1)},\\
\left.  \frac{\partial^{2}}{\partial\boldsymbol{\theta}\partial
\boldsymbol{\theta}^{T}}\mathrm{d}_{\phi}(\boldsymbol{\theta})\right\vert
_{\boldsymbol{\theta=}\widehat{\boldsymbol{\theta}}}  &  =\phi^{\prime\prime
}\left(  1\right)  \mathcal{I}_{F}^{(n_{1},...,n_{I})}%
(\widehat{\boldsymbol{\theta}}),
\end{align*}
and $\mathcal{I}_{F}^{(n_{1},n_{2})}(\boldsymbol{\theta})$ was defined at the
beginning of Section \ref{Sec2}. Let $\overline{\boldsymbol{\theta}}$ such
that $\overline{\boldsymbol{p}}=\boldsymbol{p}(\overline{\boldsymbol{\theta}%
})$, where $\boldsymbol{p}(\overline{\boldsymbol{\theta}})=\exp
\{\boldsymbol{1}_{IJ}\overline{u}+\boldsymbol{W}_{1}\overline
{\boldsymbol{\theta}}_{1}+\boldsymbol{W}\overline{\boldsymbol{\theta}%
}\boldsymbol{\}}$, with $\overline{u}=u(\overline{\boldsymbol{\theta}})$,
$\overline{\theta}_{1(i)}=\theta_{1(i)}(\overline{\boldsymbol{\theta}})$, is
the the saturated log-linear model. In particular, for $\boldsymbol{\theta
=}\overline{\boldsymbol{\theta}}$ we have%
\[
\mathrm{d}_{\phi}(\boldsymbol{p}(\overline{\boldsymbol{\theta}}%
),\boldsymbol{p}(\widehat{\boldsymbol{\theta}}))=\frac{\phi^{\prime\prime
}\left(  1\right)  }{2}(\overline{\boldsymbol{\theta}}%
-\widehat{\boldsymbol{\theta}})^{T}\mathcal{I}_{F}^{(n_{1},...,n_{I}%
)}(\widehat{\boldsymbol{\theta}})(\overline{\boldsymbol{\theta}}%
-\widehat{\boldsymbol{\theta}})+\mathrm{o}\left(  \left\Vert \overline
{\boldsymbol{\theta}}-\widehat{\boldsymbol{\theta}}\right\Vert ^{2}\right)  .
\]
In a similar way it is obtained%
\[
\mathrm{d}_{\phi}(\boldsymbol{p}(\overline{\boldsymbol{\theta}}%
),\boldsymbol{p}(\widehat{\boldsymbol{\theta}}(S)))=\frac{\phi^{\prime\prime
}\left(  1\right)  }{2}(\overline{\boldsymbol{\theta}}%
-\widehat{\boldsymbol{\theta}}(S))^{T}\mathcal{I}_{F}^{(n_{1},...,n_{I}%
)}(\widehat{\boldsymbol{\theta}}(S))(\overline{\boldsymbol{\theta}%
}-\widehat{\boldsymbol{\theta}}(S))+\mathrm{o}\left(  \left\Vert
\overline{\boldsymbol{\theta}}-\widehat{\boldsymbol{\theta}}(S)\right\Vert
^{2}\right)  .
\]
Multiplying both sides of the equality by $\frac{2n}{\phi^{\prime\prime
}\left(  1\right)  }$ and taking the difference in both sides of the equality%
\begin{align*}
T_{\phi}(\overline{\boldsymbol{p}},\boldsymbol{p}(\widehat{\boldsymbol{\theta
}}(S)),\boldsymbol{p}(\widehat{\boldsymbol{\theta}}))  &  =\frac{2n}%
{\phi^{\prime\prime}(1)}\left(  \mathrm{d}_{\phi}(\boldsymbol{p}%
(\overline{\boldsymbol{\theta}}),\boldsymbol{p}(\widehat{\boldsymbol{\theta}%
}))-\mathrm{d}_{\phi}(\boldsymbol{p}(\overline{\boldsymbol{\theta}%
}),\boldsymbol{p}(\widehat{\boldsymbol{\theta}}(S)))\right) \\
&  =\sqrt{n}(\overline{\boldsymbol{\theta}}-\widehat{\boldsymbol{\theta}}%
)^{T}\mathcal{I}_{F}^{(n_{1},...,n_{I})}(\widehat{\boldsymbol{\theta}}%
)\sqrt{n}(\overline{\boldsymbol{\theta}}-\widehat{\boldsymbol{\theta}%
})+\mathrm{o}\left(  \left\Vert \sqrt{n}\left(  \overline{\boldsymbol{\theta}%
}-\widehat{\boldsymbol{\theta}}\right)  \right\Vert ^{2}\right) \\
&  -\sqrt{n}(\overline{\boldsymbol{\theta}}-\widehat{\boldsymbol{\theta}%
}(S))^{T}\mathcal{I}_{F}^{(n_{1},...,n_{I})}(\widehat{\boldsymbol{\theta}%
}(S))\sqrt{n}(\overline{\boldsymbol{\theta}}-\widehat{\boldsymbol{\theta}%
}(S))+\mathrm{o}\left(  \left\Vert \sqrt{n}\left(  \overline
{\boldsymbol{\theta}}-\widehat{\boldsymbol{\theta}}(S)\right)  \right\Vert
^{2}\right)  .
\end{align*}
Now we are going to generalize the three types of estimators by
$\widehat{\boldsymbol{\theta}}(\bullet)$, understanding that for
$\bullet=\varnothing$, $\widehat{\boldsymbol{\theta}}(\varnothing
)=\overline{\boldsymbol{\theta}}$, $\boldsymbol{R}(\varnothing\mathbf{)=0}%
_{(I-1)(J-1)\times(IJ-1)}$, for $\bullet=E$, $\widehat{\boldsymbol{\theta}%
}(E)=\widehat{\boldsymbol{\theta}}$, $\boldsymbol{R}(E\mathbf{)=}%
\boldsymbol{R}$, and $\bullet=S$, $\widehat{\boldsymbol{\theta}}(S)$ and
$\boldsymbol{R}(S\mathbf{)}$ as originally defined. It is well-known that%
\begin{equation}
\sqrt{n}(\widehat{\boldsymbol{\theta}}(\bullet)-\boldsymbol{\theta}%
_{0})=\boldsymbol{P}(\boldsymbol{\theta}_{0},\bullet\mathbf{)}\frac{1}%
{\sqrt{n}}\left.  \frac{\partial}{\partial\boldsymbol{\theta}}\ell
(\boldsymbol{N};\boldsymbol{\theta})\right\vert _{\boldsymbol{\theta
}=\boldsymbol{\theta}_{0}}+\mathrm{o}_{p}(\boldsymbol{1}_{k}), \label{eq14}%
\end{equation}
where $\boldsymbol{\theta}_{0}$ is the true and unknown value of the
parameter,%
\[
\boldsymbol{P}(\boldsymbol{\theta}_{0},\bullet\mathbf{)}\mathbf{=}%
\mathcal{I}_{F}^{-1}(\boldsymbol{\theta}_{0})-\mathcal{I}_{F}^{-1}%
(\boldsymbol{\theta}_{0})\boldsymbol{R}^{T}(\bullet\mathbf{)}\left(
\boldsymbol{R}(\bullet\mathbf{)}\mathcal{I}_{F}^{-1}(\boldsymbol{\theta}%
_{0})\boldsymbol{R}^{T}(\bullet\mathbf{)}\right)  ^{-1}\boldsymbol{R}%
(\bullet\mathbf{)}\mathcal{I}_{F}^{-1}(\boldsymbol{\theta}_{0}),
\]
is the variance covariance matrix of $\widehat{\boldsymbol{\theta}}(\bullet
)$,\ and $\frac{1}{\sqrt{n}}\left.  \frac{\partial}{\partial\boldsymbol{\theta
}}\ell(\boldsymbol{N};\boldsymbol{\theta})\right\vert _{\boldsymbol{\theta
}=\boldsymbol{\theta}_{0}}\underset{n\rightarrow\infty}{\overset{\mathcal{L}%
}{\longrightarrow}}\mathcal{N}(\boldsymbol{0}_{k},\mathcal{I}_{F}%
(\boldsymbol{\theta}_{0}))$ by the Central Limit Theorem. We shall denote%
\[
\boldsymbol{P}(\boldsymbol{\theta}_{0}\mathbf{)=}\boldsymbol{P}%
(\boldsymbol{\theta}_{0},E\mathbf{)=}\mathcal{I}_{F}^{-1}(\boldsymbol{\theta
}_{0})-\mathcal{I}_{F}^{-1}(\boldsymbol{\theta}_{0})\boldsymbol{R}^{T}\left(
\boldsymbol{R}\mathcal{I}_{F}^{-1}(\boldsymbol{\theta}_{0})\boldsymbol{R}%
^{T}\right)  ^{-1}\boldsymbol{R}\mathcal{I}_{F}^{-1}(\boldsymbol{\theta}%
_{0}).
\]
Taking the differences of both sides of the equality in (\ref{eq14}) with
cases $\bullet=\varnothing$\ and $\bullet=E$, we obtain%
\begin{equation}
\sqrt{n}(\overline{\boldsymbol{\theta}}-\widehat{\boldsymbol{\theta}})=\left(
\mathcal{I}_{F}^{-1}(\boldsymbol{\theta}_{0})-\boldsymbol{P}%
(\boldsymbol{\theta}_{0}\mathbf{)}\right)  \frac{1}{\sqrt{n}}\left.
\frac{\partial}{\partial\boldsymbol{\theta}}\ell(\boldsymbol{N}%
;\boldsymbol{\theta})\right\vert _{\boldsymbol{\theta}=\boldsymbol{\theta}%
_{0}}+\mathrm{o}_{p}(\boldsymbol{1}_{k}), \label{A}%
\end{equation}
with cases $\bullet=\varnothing$\ and $\bullet=S$,%
\begin{equation}
\sqrt{n}(\overline{\boldsymbol{\theta}}-\widehat{\boldsymbol{\theta}%
}(S))=\left(  \mathcal{I}_{F}^{-1}(\boldsymbol{\theta}_{0})-\boldsymbol{P}%
(\boldsymbol{\theta}_{0},S\mathbf{)}\right)  \frac{1}{\sqrt{n}}\left.
\frac{\partial}{\partial\boldsymbol{\theta}}\ell(\boldsymbol{N}%
;\boldsymbol{\theta})\right\vert _{\boldsymbol{\theta}=\boldsymbol{\theta}%
_{0}}+\mathrm{o}_{p}(\boldsymbol{1}_{k}), \label{B}%
\end{equation}
and taking into account $\mathcal{I}_{F}(\widehat{\boldsymbol{\theta}%
})\underset{n\rightarrow\infty}{\overset{P}{\longrightarrow}}\mathcal{I}%
_{F}(\boldsymbol{\theta}_{0})$,%
\begin{align}
&  T_{\phi}(\overline{\boldsymbol{p}},\boldsymbol{p}%
(\widehat{\boldsymbol{\theta}}(S)),\boldsymbol{p}(\widehat{\boldsymbol{\theta
}}))\nonumber\\
&  =\frac{1}{\sqrt{n}}\left.  \frac{\partial}{\partial\boldsymbol{\theta}^{T}%
}\ell(\boldsymbol{N};\boldsymbol{\theta})\right\vert _{\boldsymbol{\theta
}=\boldsymbol{\theta}_{0}}\left(  \boldsymbol{P}(\boldsymbol{\theta}%
_{0},S\mathbf{)}-\boldsymbol{P}(\boldsymbol{\theta}_{0}\mathbf{)}\right)
^{T}\mathcal{I}_{F}(\boldsymbol{\theta}_{0})\left(  \boldsymbol{P}%
(\boldsymbol{\theta}_{0},S\mathbf{)}-\boldsymbol{P}(\boldsymbol{\theta}%
_{0}\mathbf{)}\right)  \frac{1}{\sqrt{n}}\left.  \frac{\partial}%
{\partial\boldsymbol{\theta}}\ell(\boldsymbol{N};\boldsymbol{\theta
})\right\vert _{\boldsymbol{\theta}=\boldsymbol{\theta}_{0}}+\mathrm{o}%
_{p}(1)\nonumber\\
&  =\boldsymbol{Y}^{T}\boldsymbol{Y}+\mathrm{o}_{p}(1), \label{C}%
\end{align}
where%
\[
\boldsymbol{Y}=\boldsymbol{A}(\boldsymbol{\theta}_{0}\mathbf{)}\left(
\boldsymbol{P}(\boldsymbol{\theta}_{0},S\mathbf{)}-\boldsymbol{P}%
(\boldsymbol{\theta}_{0}\mathbf{)}\right)  \boldsymbol{A}(\boldsymbol{\theta
}_{0}\mathbf{)}^{T}\boldsymbol{Z}\text{,}%
\]
with $\boldsymbol{Z}\sim\mathcal{N}(\boldsymbol{0}_{(I-1)(J-1)},\boldsymbol{I}%
_{(I-1)(J-1)}\mathbf{)}$ and $\boldsymbol{A}(\boldsymbol{\theta}_{0}%
\mathbf{)}$\ is the Cholesky's factorization matrix for a non singular matrix
such a Fisher information matrix, that is $\mathcal{I}_{F}(\boldsymbol{\theta
}_{0})=\boldsymbol{A}(\boldsymbol{\theta}_{0}\mathbf{)}^{T}\boldsymbol{A}%
(\boldsymbol{\theta}_{0}\mathbf{)}$. In other words%
\[
\boldsymbol{Y}\sim\mathcal{N}(\boldsymbol{0}_{k},\boldsymbol{A}%
(\boldsymbol{\theta}_{0}\mathbf{)}\left(  \boldsymbol{P}(\boldsymbol{\theta
}_{0},S\mathbf{)}-\boldsymbol{P}(\boldsymbol{\theta}_{0}\mathbf{)}\right)
\boldsymbol{A}(\boldsymbol{\theta}_{0}\mathbf{)}^{T})\text{,}%
\]
where the variance covariance matrix is idempotent and symmetric. Following
Lemma 3 in Ferguson (1996, page 57), $\boldsymbol{A}(\boldsymbol{\theta}%
_{0}\mathbf{)}\left(  \boldsymbol{P}(\boldsymbol{\theta}_{0},S\mathbf{)}%
-\boldsymbol{P}(\boldsymbol{\theta}_{0}\mathbf{)}\right)  \boldsymbol{A}%
(\boldsymbol{\theta}_{0}\mathbf{)}^{T}$ is idempotent and symmetric, if only
if $T_{\phi}(\overline{\boldsymbol{p}},\boldsymbol{p}%
(\widehat{\boldsymbol{\theta}}(S)),\boldsymbol{p}(\widehat{\boldsymbol{\theta
}}))$ is a chi-square random variable with degrees of freedom
\[
df=\mathrm{rank}(\boldsymbol{A}(\boldsymbol{\theta}_{0}\mathbf{)}\left(
\boldsymbol{P}(\boldsymbol{\theta}_{0},S\mathbf{)}-\boldsymbol{P}%
(\boldsymbol{\theta}_{0}\mathbf{)}\right)  \boldsymbol{A}(\boldsymbol{\theta
}_{0}\mathbf{)}^{T})=\mathrm{trace}(\boldsymbol{A}(\boldsymbol{\theta}%
_{0}\mathbf{)}\left(  \boldsymbol{P}(\boldsymbol{\theta}_{0},S\mathbf{)}%
-\boldsymbol{P}(\boldsymbol{\theta}_{0}\mathbf{)}\right)  \boldsymbol{A}%
(\boldsymbol{\theta}_{0}\mathbf{)}^{T}).
\]
Since%
\[
\left(  \boldsymbol{P}(\boldsymbol{\theta}_{0},S\mathbf{)}-\boldsymbol{P}%
(\boldsymbol{\theta}_{0}\mathbf{)}\right)  ^{T}\mathcal{I}_{F}%
(\boldsymbol{\theta}_{0})\left(  \boldsymbol{P}(\boldsymbol{\theta}%
_{0},S\mathbf{)}-\boldsymbol{P}(\boldsymbol{\theta}_{0}\mathbf{)}\right)
=\boldsymbol{P}(\boldsymbol{\theta}_{0},S\mathbf{)}-\boldsymbol{P}%
(\boldsymbol{\theta}_{0}\mathbf{),}%
\]
the condition is reached. The effective degrees of freedom are given by%
\begin{align*}
df  &  =\mathrm{trace}(\boldsymbol{P}(\boldsymbol{\theta}_{0},S\mathbf{)}%
\boldsymbol{A}(\boldsymbol{\theta}_{0}\mathbf{)}^{T}\boldsymbol{A}%
(\boldsymbol{\theta}_{0}\mathbf{)})-\mathrm{trace}(\boldsymbol{P}%
(\boldsymbol{\theta}_{0}\mathbf{)}\boldsymbol{A}(\boldsymbol{\theta}%
_{0}\mathbf{)}^{T}\boldsymbol{A}(\boldsymbol{\theta}_{0}\mathbf{)}%
)=\mathrm{trace}(\boldsymbol{P}(\boldsymbol{\theta}_{0},S\mathbf{)}%
\mathcal{I}_{F}(\boldsymbol{\theta}_{0}))-\mathrm{trace}(\boldsymbol{P}%
(\boldsymbol{\theta}_{0}\mathbf{)}\mathcal{I}_{F}(\boldsymbol{\theta}_{0}))\\
&  =\mathrm{trace}(-\left(  \boldsymbol{R}(S\mathbf{)}\mathcal{I}_{F}%
^{-1}(\boldsymbol{\theta}_{0})\boldsymbol{R}^{T}(S\mathbf{)}\right)
^{-1}\boldsymbol{R}(S\mathbf{)}\mathcal{I}_{F}^{-1}(\boldsymbol{\theta}%
_{0})\boldsymbol{R}^{T}(S\mathbf{)})\\
&  -\mathrm{trace}(-\left(  \boldsymbol{R}\mathcal{I}_{F}^{-1}%
(\boldsymbol{\theta}_{0})\boldsymbol{R}^{T}\right)  ^{-1}\boldsymbol{R}%
\mathcal{I}_{F}^{-1}(\boldsymbol{\theta}_{0})\boldsymbol{R}^{T})\\
&  =(I-1)(J-1)-\mathrm{card}(S).
\end{align*}
Regarding the other test-statistic $S_{\phi}(\boldsymbol{p}%
(\widehat{\boldsymbol{\theta}}(S)),\boldsymbol{p}(\widehat{\boldsymbol{\theta
}}))$, observe that if we take (\ref{eq16}), in particular for
$\boldsymbol{\theta=}\widehat{\boldsymbol{\theta}}(S)$ it is obtained%
\[
\mathrm{d}_{\phi}(\widehat{\boldsymbol{\theta}}(S))=\frac{\phi^{\prime\prime
}\left(  1\right)  }{2}(\widehat{\boldsymbol{\theta}}%
(S)-\widehat{\boldsymbol{\theta}})^{T}\mathcal{I}_{F}%
(\widehat{\boldsymbol{\theta}})(\widehat{\boldsymbol{\theta}}%
(S)-\widehat{\boldsymbol{\theta}})+\mathrm{o}\left(  \left\Vert
\widehat{\boldsymbol{\theta}}(S)-\widehat{\boldsymbol{\theta}}\right\Vert
^{2}\right)  .
\]
In addition, (\ref{A})$-$(\ref{B}) is%
\[
\sqrt{n}(\widehat{\boldsymbol{\theta}}(S)-\widehat{\boldsymbol{\theta}%
})=\left(  \boldsymbol{P}(\boldsymbol{\theta}_{0},S\mathbf{)}-\boldsymbol{P}%
(\boldsymbol{\theta}_{0}\mathbf{)}\right)  \frac{1}{\sqrt{n}}\left.
\frac{\partial}{\partial\boldsymbol{\theta}}\ell(\boldsymbol{N}%
;\boldsymbol{\theta})\right\vert _{\boldsymbol{\theta}=\boldsymbol{\theta}%
_{0}}+\mathrm{o}_{p}(\boldsymbol{1}_{k}),
\]
and taking into account $\mathcal{I}_{F}(\widehat{\boldsymbol{\theta}%
})\underset{n\rightarrow\infty}{\overset{P}{\longrightarrow}}\mathcal{I}%
_{F}(\boldsymbol{\theta}_{0})$ and (\ref{C}), it follows (\ref{D}), which
means from Slutsky's Theorem that both test-statistics have the same
asymptotic distribution.
\end{proof}

\subsection{Lemma \label{LemContrA}}

Let $\boldsymbol{Y}$ be a $k$-dimensional random variable with normal
distribution $\mathcal{N}\left(  \boldsymbol{0}_{k},\boldsymbol{I}\right)  $
with $\boldsymbol{Q}$ being a projection matrix, that is idempotent and
symmetric, and let fixed $k$-dimensional vectors $\boldsymbol{d}_{i}$ such
that for them either $\boldsymbol{Qd}_{i}=\boldsymbol{0}_{k}$ or
$\boldsymbol{Qd}_{i}=\boldsymbol{d}_{i}$, $i=1,...,k$, is true. Then $\left(
\boldsymbol{Y}^{T}\boldsymbol{Q}\boldsymbol{Y}\left\vert \boldsymbol{d}%
_{i}^{T}\boldsymbol{Y}\geq0,i=1,...,k\right.  \right)  \sim\chi_{df}^{2}$,
where $df=\mathrm{rank}(\boldsymbol{Q})$.

\begin{proof}
This result can be found in several sources, for instance in Kud\^{o} (1963,
page 414), Barlow et al. (1972, page 128) and Shapiro (1985, page 139).
\end{proof}

\subsection{Proof of Theorem \ref{Th1}\label{ProofTh1ContrA}}

We shall perform the proof for $S_{\phi}(\boldsymbol{p}%
(\widetilde{\boldsymbol{\theta}}),\boldsymbol{p}(\widehat{\boldsymbol{\theta}%
}))$. It suppose that it is true $\boldsymbol{R\theta}\geq\boldsymbol{0}%
_{(I-1)(J-1)}$ and we want to test $\boldsymbol{R\theta}=\boldsymbol{0}%
_{(I-1)(J-1)}$ ($H_{0}$). It is clear that if $H_{0}$ is not true is because
there exists some index $i\in E$ such that $\boldsymbol{R}(\{i\}\mathbf{)}%
\boldsymbol{\theta}>0$. Let us consider the family of all possible subsets in
$E$, denoted by $\mathcal{F}(E)$, then\ we shall specify more thoroughly
$\widetilde{\boldsymbol{\theta}}$ by $\widetilde{\boldsymbol{\theta}}(S)$ when
there exists $S\in\mathcal{F}(E)$ such that%
\[
\boldsymbol{R}(S)\widetilde{\boldsymbol{\theta}}=\boldsymbol{0}_{\mathrm{card}%
(S)}\qquad\text{and}\qquad\boldsymbol{R}(S^{C})\widetilde{\boldsymbol{\theta}%
}>\boldsymbol{0}_{(I-1)(J-1)-\mathrm{card}(S)}.
\]
It is clear that for a sample $\widetilde{\boldsymbol{\theta}}%
=\widetilde{\boldsymbol{\theta}}(S)$ can be true only for a unique set of
indices $S\in\mathcal{F}(E)$, and thus by applying the Theorem of Total
Probability%
\[
\Pr\left(  S_{\phi}(\boldsymbol{p}(\widetilde{\boldsymbol{\theta}%
}),\boldsymbol{p}(\widehat{\boldsymbol{\theta}}))\leq x\right)  =\sum
_{S\in\mathcal{F}(E)}\Pr\left(  S_{\phi}(\boldsymbol{p}%
(\widetilde{\boldsymbol{\theta}}),\boldsymbol{p}(\widehat{\boldsymbol{\theta}%
}))\leq x,\widetilde{\boldsymbol{\theta}}=\widetilde{\boldsymbol{\theta}%
}(S)\right)  .
\]
From the Karush-Khun-Tucker necessary conditions (see for instance Theorem
4.2.13 in Bazaraa et al. (2006)) to solve the optimization problem $\max
\ell(\boldsymbol{N};\boldsymbol{\theta})$ s.t. $\boldsymbol{R\theta}%
\geq\boldsymbol{0}_{(I-1)(J-1)}$, associated with
$\widetilde{\boldsymbol{\theta}}$,%
\begin{subequations}
\begin{align}
\frac{\partial}{\partial\boldsymbol{\theta}}\ell(\boldsymbol{N}%
;\boldsymbol{\theta})+\sum_{i=1}^{(I-1)(J-1)}\lambda_{i}\boldsymbol{R}%
^{T}(\{i\}\mathbf{)}  &  =0\text{, }i=1,...,(I-1)(J-1),\label{KKT1}\\
\lambda_{i}\boldsymbol{R}(\{i\}\mathbf{)}\boldsymbol{\theta}  &  =0\text{,
}i=1,...,(I-1)(J-1),\label{KKT2}\\
\lambda_{i}  &  \leq0\text{, }i=1,...,(I-1)(J-1), \label{KKT3}%
\end{align}
the only conditions which characterize the MLE $\widetilde{\boldsymbol{\theta
}}=\widetilde{\boldsymbol{\theta}}(S)$ with a specific $S\in\mathcal{F}(E)$,
are the complementary slackness conditions $\boldsymbol{R}(\{i\}\mathbf{)}%
\boldsymbol{\theta}>0$, for $i\in S$ and $\lambda_{i}<0$, for $i\in S^{C}$,
since $\frac{\partial}{\partial\boldsymbol{\theta}}\ell(\boldsymbol{N}%
;\boldsymbol{\theta})+\lambda_{i}\boldsymbol{R}^{T}(\{i\}\mathbf{)}=0$,
$i=1,...,(I-1)(J-1)$,$\ \boldsymbol{R}(\{i\}\mathbf{)}\boldsymbol{\theta}=0$,
for $i\in S^{C}$ and $\lambda_{i}=0$, for $i\in S$ are redundant conditions
once we know that the Karush-Khun-Tucker necessary conditions are true for all
the possible sets $S\in\mathcal{F}(E)$ which define
$\widetilde{\boldsymbol{\theta}}=\widetilde{\boldsymbol{\theta}}(S)$. For this
reason we can consider%
\end{subequations}
\begin{align*}
&  \Pr\left(  S_{\phi}(\boldsymbol{p}(\widetilde{\boldsymbol{\theta}%
}),\boldsymbol{p}(\widehat{\boldsymbol{\theta}}))\leq
x,\widetilde{\boldsymbol{\theta}}=\widetilde{\boldsymbol{\theta}}(S)\right)
=\\
&  \Pr\left(  S_{\phi}(\boldsymbol{p}(\widetilde{\boldsymbol{\theta}%
}),\boldsymbol{p}(\widehat{\boldsymbol{\theta}}))\leq
x,\widetilde{\boldsymbol{\lambda}}(S)<\boldsymbol{0}_{\mathrm{card}%
(S)},\boldsymbol{R}(S^{C})\widetilde{\boldsymbol{\theta}}(S)>\boldsymbol{0}%
_{(I-1)(J-1)-\mathrm{card}(S)}\right)  ,
\end{align*}
where $\widetilde{\boldsymbol{\lambda}}(S)$ is the vector of the vector of
Karush-Khun-Tucker multipliers associated with estimator
$\widetilde{\boldsymbol{\theta}}(S)$. Furthermore, under $H_{0}$,
$\boldsymbol{R}\widetilde{\boldsymbol{\theta}}(S)=\boldsymbol{R}%
\widetilde{\boldsymbol{\theta}}(S)-\boldsymbol{R\theta}_{0}$, because
$\boldsymbol{R\theta}_{0}=\boldsymbol{0}_{(I-1)(J-1)}$, hence%
\[
\Pr\left(  S_{\phi}(\boldsymbol{p}(\widetilde{\boldsymbol{\theta}%
}),\boldsymbol{p}(\widehat{\boldsymbol{\theta}}))\leq x\right)  =\sum
_{S\in\mathcal{F}(E)}\Pr\left(  S_{\phi}(\boldsymbol{p}%
(\widetilde{\boldsymbol{\theta}}),\boldsymbol{p}(\widehat{\boldsymbol{\theta}%
}))\leq x,\widetilde{\boldsymbol{\lambda}}(S)<\boldsymbol{0}_{\mathrm{card}%
(S)},\boldsymbol{R}(S^{C})\widetilde{\boldsymbol{\theta}}(S)-\boldsymbol{R}%
(S^{C})\boldsymbol{\theta}_{0}>\boldsymbol{0}_{\mathrm{card}(S^{C})}\right)
,
\]
where $\mathrm{card}(S^{C})=(I-1)(J-1)-\mathrm{card}(S)$. On the other hand,
(\ref{KKT1}) and (\ref{KKT2}) are also true for $(\widehat{\boldsymbol{\theta
}}^{T}(S),\widehat{\boldsymbol{\lambda}}^{T}(S))^{T}$ according to the
Lagrange multipliers method. Hence, $\widetilde{\boldsymbol{\theta}%
}(S)=\widehat{\boldsymbol{\theta}}(S)$ and $\widetilde{\boldsymbol{\lambda}%
}(S)=\widehat{\boldsymbol{\lambda}}(S)$. It follows that:\newline$\bullet$
under $\widetilde{\boldsymbol{\theta}}=\widehat{\boldsymbol{\theta}}(S)$,
$S_{\phi}(\boldsymbol{p}(\widetilde{\boldsymbol{\theta}}),\boldsymbol{p}%
(\widehat{\boldsymbol{\theta}}))=S_{\phi}(\boldsymbol{p}%
(\widehat{\boldsymbol{\theta}}(S)),\boldsymbol{p}(\widehat{\boldsymbol{\theta
}}))$ and taking into account the Proposition given in Section \ref{prop}
\begin{align*}
&  S_{\phi}(\boldsymbol{p}(\widetilde{\boldsymbol{\theta}}),\boldsymbol{p}%
(\widehat{\boldsymbol{\theta}}))=T_{\phi}(\overline{\boldsymbol{p}%
},\boldsymbol{p}(\widehat{\boldsymbol{\theta}}(S)),\boldsymbol{p}%
(\widehat{\boldsymbol{\theta}}))+\mathrm{o}_{p}(1)\\
&  =\left(  \boldsymbol{A}(\boldsymbol{\theta}_{0}\mathbf{)}\left(
\boldsymbol{P}(\boldsymbol{\theta}_{0},S\mathbf{)}-\boldsymbol{P}%
(\boldsymbol{\theta}_{0}\mathbf{)}\right)  \boldsymbol{A}(\boldsymbol{\theta
}_{0}\mathbf{)}^{T}\boldsymbol{Z}\right)  ^{T}\left(  \boldsymbol{A}%
(\boldsymbol{\theta}_{0}\mathbf{)}\left(  \boldsymbol{P}(\boldsymbol{\theta
}_{0},S\mathbf{)}-\boldsymbol{P}(\boldsymbol{\theta}_{0}\mathbf{)}\right)
\boldsymbol{A}(\boldsymbol{\theta}_{0}\mathbf{)}^{T}\boldsymbol{Z}\right)
+\mathrm{o}_{p}(1),\\
&  =\boldsymbol{Z}^{T}\boldsymbol{A}(\boldsymbol{\theta}_{0}\mathbf{)}\left(
\boldsymbol{P}(\boldsymbol{\theta}_{0},S\mathbf{)}-\boldsymbol{P}%
(\boldsymbol{\theta}_{0}\mathbf{)}\right)  \boldsymbol{A}(\boldsymbol{\theta
}_{0}\mathbf{)}^{T}\boldsymbol{Z}+\mathrm{o}_{p}(1).
\end{align*}
where $\boldsymbol{Z}\sim\mathcal{N}\left(  \boldsymbol{0}_{k},\boldsymbol{I}%
_{k}\right)  $.$\newline\bullet$ under $\widetilde{\boldsymbol{\lambda}%
}(S)=\widehat{\boldsymbol{\lambda}}(S)$ and from Sen et al. (2010, page 267
formula (8.6.28))%
\begin{align*}
\frac{1}{\sqrt{n}}\widetilde{\boldsymbol{\lambda}}(S)  &  =\sqrt
{n}\boldsymbol{Q}^{T}(\boldsymbol{\theta}_{0},S\mathbf{)}\frac{1}{\sqrt{n}%
}\left.  \frac{\partial}{\partial\boldsymbol{\theta}}\ell(\boldsymbol{N}%
;\boldsymbol{\theta})\right\vert _{\boldsymbol{\theta}=\boldsymbol{\theta}%
_{0}}+\mathrm{o}_{p}(\boldsymbol{1}_{\mathrm{card}(S)})\\
&  =\boldsymbol{Q}^{T}(\boldsymbol{\theta}_{0},S\mathbf{)}\boldsymbol{A}%
(\boldsymbol{\theta}_{0}\mathbf{)}^{T}\boldsymbol{Z}+\mathrm{o}_{p}%
(\boldsymbol{1}_{\mathrm{card}(S)}),
\end{align*}
where%
\[
\boldsymbol{Q}(\boldsymbol{\theta}_{0},S\mathbf{)}\mathbf{=}-\mathcal{I}%
_{F}^{-1}(\boldsymbol{\theta}_{0})\boldsymbol{R}^{T}(S\mathbf{)}%
\boldsymbol{L}(\boldsymbol{\theta}_{0},S\mathbf{)}\left(  \boldsymbol{R}%
(S\mathbf{)}\mathcal{I}_{F}^{-1}(\boldsymbol{\theta}_{0})\boldsymbol{R}%
^{T}(S\mathbf{)}\right)  ^{-1};
\]
$\bullet$ under $\widetilde{\boldsymbol{\theta}}=\widehat{\boldsymbol{\theta}%
}(S)$ and from (\ref{eq14})%
\begin{align*}
\sqrt{n}\left(  \boldsymbol{R}(S^{C})\widetilde{\boldsymbol{\theta}%
}(S)-\boldsymbol{R}(S^{C})\boldsymbol{\theta}_{0}\right)   &  =\sqrt
{n}\boldsymbol{R}(S^{C})\boldsymbol{P}(\boldsymbol{\theta}_{0},S\mathbf{)}%
\frac{1}{\sqrt{n}}\left.  \frac{\partial}{\partial\boldsymbol{\theta}}%
\ell(\boldsymbol{N};\boldsymbol{\theta})\right\vert _{\boldsymbol{\theta
}=\boldsymbol{\theta}_{0}}+\mathrm{o}_{p}(\boldsymbol{1}_{\mathrm{card}%
(S^{C})})\\
&  =\boldsymbol{R}(S^{C})\boldsymbol{P}(\boldsymbol{\theta}_{0},S\mathbf{)}%
\boldsymbol{A}(\boldsymbol{\theta}_{0}\mathbf{)}^{T}\boldsymbol{Z}%
+\mathrm{o}_{p}(\boldsymbol{1}_{\mathrm{card}(S^{C})}).
\end{align*}
That is,%
\begin{align*}
&  \lim_{n\rightarrow\infty}\Pr\left(  S_{\phi}(\boldsymbol{p}%
(\widetilde{\boldsymbol{\theta}}),\boldsymbol{p}(\widehat{\boldsymbol{\theta}%
}))\leq x\right)  =\sum_{S\in\mathcal{F}(E)}\Pr\left(  \boldsymbol{Z}_{3}%
^{T}(S)\boldsymbol{Z}_{3}(S)\leq x,\boldsymbol{Z}_{1}(S)\geq\boldsymbol{0}%
_{\mathrm{card}(S)},\boldsymbol{Z}_{2}(S)\geq\boldsymbol{0}_{\mathrm{card}%
(S^{C})}\right) \\
&  =\sum_{S\in\mathcal{F}(E)}\Pr\left(  \boldsymbol{Z}_{3}^{T}%
(S)\boldsymbol{Z}_{3}(S)\leq x|\boldsymbol{Z}_{1}(S)\geq\boldsymbol{0}%
_{\mathrm{card}(S)},\boldsymbol{Z}_{2}(S)\geq\boldsymbol{0}_{\mathrm{card}%
(S^{C})}\right)  \Pr\left(  \boldsymbol{Z}_{1}(S)\geq\boldsymbol{0}%
_{\mathrm{card}(S)},\boldsymbol{Z}_{2}(S)\geq\boldsymbol{0}_{\mathrm{card}%
(S^{C})}\right) \\
&  =\sum_{S\in\mathcal{F}(E)}\Pr\left(  \boldsymbol{Z}_{3}^{T}%
(S)\boldsymbol{Z}_{3}(S)\leq x\left\vert \left(  \boldsymbol{Z}_{1}%
^{T}(S),\boldsymbol{Z}_{2}^{T}(S)\right)  ^{T}\geq\boldsymbol{0}%
_{(I-1)(J-1)}\right.  \right)  \Pr\left(  \boldsymbol{Z}_{1}(S)\geq
\boldsymbol{0}_{\mathrm{card}(S)},\boldsymbol{Z}_{2}(S)\geq\boldsymbol{0}%
_{\mathrm{card}(S^{C})}\right)  ,
\end{align*}
where%
\begin{align*}
\boldsymbol{Z}_{3}(S)  &  =\boldsymbol{M}_{3}(\boldsymbol{\theta}%
_{0},S\mathbf{)}\boldsymbol{Z,}\qquad\boldsymbol{M}_{3}(\boldsymbol{\theta
}_{0},S\mathbf{)=}\boldsymbol{A}(\boldsymbol{\theta}_{0}\mathbf{)}\left(
\boldsymbol{P}(\boldsymbol{\theta}_{0},S\mathbf{)}-\boldsymbol{P}%
(\boldsymbol{\theta}_{0}\mathbf{)}\right)  \boldsymbol{A}(\boldsymbol{\theta
}_{0}\mathbf{)}^{T},\\
\boldsymbol{Z}_{1}(S)  &  =\boldsymbol{M}_{1}(\boldsymbol{\theta}%
_{0},S\mathbf{)}\boldsymbol{Z},\qquad\boldsymbol{M}_{1}(\boldsymbol{\theta
}_{0},S\mathbf{)=-}\boldsymbol{Q}^{T}(\boldsymbol{\theta}_{0},S\mathbf{)}%
\boldsymbol{A}(\boldsymbol{\theta}_{0}\mathbf{)}^{T},\\
\boldsymbol{Z}_{2}(S)  &  =\boldsymbol{M}_{2}(\boldsymbol{\theta}%
_{0},S\mathbf{)}\boldsymbol{Z},\qquad\boldsymbol{M}_{2}(\boldsymbol{\theta
}_{0},S\mathbf{)=}\boldsymbol{R}(S^{C})\boldsymbol{P}(\boldsymbol{\theta}%
_{0},S\mathbf{)}\boldsymbol{A}(\boldsymbol{\theta}_{0}\mathbf{)}^{T}.
\end{align*}
Taking into account that $\boldsymbol{M}_{3}(\boldsymbol{\theta}%
_{0},S\mathbf{)}\boldsymbol{M}_{2}^{T}(\boldsymbol{\theta}_{0},S\mathbf{)=}%
\boldsymbol{M}_{2}^{T}(\boldsymbol{\theta}_{0},S\mathbf{)}$ and
$\boldsymbol{M}_{3}(\boldsymbol{\theta}_{0},S\mathbf{)}\boldsymbol{M}_{1}%
^{T}(\boldsymbol{\theta}_{0},S\mathbf{)=}\boldsymbol{0}_{(I-1)(J-1)\times
\mathrm{card}(S)}$, by applying the lemma given in Section \ref{LemContrA}%
\[
\Pr\left(  \boldsymbol{Z}_{3}^{T}(S)\boldsymbol{Z}_{3}(S)\leq x\left\vert
\left(  \boldsymbol{Z}_{1}^{T}(S),\boldsymbol{Z}_{2}^{T}(S)\right)  ^{T}%
\geq\boldsymbol{0}_{(I-1)(J-1)}\right.  \right)  =\Pr\left(  \chi_{df}^{2}\leq
x\right)
\]
where%
\begin{align*}
df  &  =\mathrm{rank}\left(  \boldsymbol{A}(\boldsymbol{\theta}_{0}%
\mathbf{)}\left(  \boldsymbol{P}(\boldsymbol{\theta}_{0},S\mathbf{)}%
-\boldsymbol{P}(\boldsymbol{\theta}_{0}\mathbf{)}\right)  \boldsymbol{A}%
(\boldsymbol{\theta}_{0}\mathbf{)}^{T}\right)  =\mathrm{trace}\left(
\boldsymbol{A}(\boldsymbol{\theta}_{0}\mathbf{)}\left(  \boldsymbol{P}%
(\boldsymbol{\theta}_{0},S\mathbf{)}-\boldsymbol{P}(\boldsymbol{\theta}%
_{0}\mathbf{)}\right)  \boldsymbol{A}(\boldsymbol{\theta}_{0}\mathbf{)}%
^{T}\right) \\
&  =(I-1)(J-1)-\mathrm{card}(S).
\end{align*}
Finally,%
\begin{align*}
&  \lim_{n\rightarrow\infty}\Pr\left(  S_{\phi}(\boldsymbol{p}%
(\widetilde{\boldsymbol{\theta}}),\boldsymbol{p}(\widehat{\boldsymbol{\theta}%
}))\leq x\right) \\
&  =\sum_{S\in\mathcal{F}(E)}\Pr\left(  \chi_{(I-1)(J-1)-\mathrm{card}(S)}%
^{2}\leq x\right)  \Pr\left(  \boldsymbol{Z}_{1}(S)\geq\boldsymbol{0}%
_{\mathrm{card}(S)},\boldsymbol{Z}_{2}(S)\geq\boldsymbol{0}_{\mathrm{card}%
(S^{C})}\right) \\
&  =\sum_{j=0}^{(I-1)(J-1)}\Pr\left(  \chi_{(I-1)(J-1)-j}^{2}\leq x\right)
\sum_{S\in\mathcal{F}(E),\mathrm{card}(S)=j}\Pr\left(  \boldsymbol{Z}%
_{1}(S)\geq\boldsymbol{0}_{\mathrm{card}(S)},\boldsymbol{Z}_{2}(S)\geq
\boldsymbol{0}_{\mathrm{card}(S^{C})}\right)  ,
\end{align*}
and since $\boldsymbol{Q}^{T}(\boldsymbol{\theta}_{0},S\mathbf{)}%
\mathcal{I}_{F}(\boldsymbol{\theta}_{0})\boldsymbol{P}(\boldsymbol{\theta}%
_{0},S\mathbf{)=}\boldsymbol{0}_{\mathrm{card}(S)\times(I-1)(J-1)}$, it holds
$\boldsymbol{M}_{1}(\boldsymbol{\theta}_{0},S\mathbf{)}\boldsymbol{M}_{2}%
^{T}(\boldsymbol{\theta}_{0},S\mathbf{)}=\boldsymbol{0}_{\mathrm{card}%
(S)\times\mathrm{card}(S^{C})}$ which means that $\boldsymbol{Z}_{1}(S)$ and
$\boldsymbol{Z}_{2}(S)$ are independent, that is%
\[
\lim_{n\rightarrow\infty}\Pr\left(  S_{\phi}(\boldsymbol{p}%
(\widetilde{\boldsymbol{\theta}}),\boldsymbol{p}(\widehat{\boldsymbol{\theta}%
}))\leq x\right)  =\sum_{j=0}^{(I-1)(J-1)}\Pr\left(  \chi_{(I-1)(J-1)-j}%
^{2}\leq x\right)  w_{j}(\boldsymbol{\theta}_{0})
\]
where the expression of $w_{j}(\boldsymbol{\theta}_{0})$ is (\ref{eqw}). We
have also,
\[
\mathrm{Var}(\boldsymbol{Z}_{1}(S))=\boldsymbol{M}_{1}(\boldsymbol{\theta}%
_{0},S\mathbf{)}\boldsymbol{M}_{1}^{T}(\boldsymbol{\theta}_{0},S\mathbf{)}%
=\boldsymbol{Q}^{T}(\boldsymbol{\theta}_{0},S\mathbf{)}\mathcal{I}%
_{F}(\boldsymbol{\theta}_{0})\boldsymbol{Q}(\boldsymbol{\theta}_{0}%
,S\mathbf{)}=\left(  \boldsymbol{R}(S\mathbf{)}\mathcal{I}_{F}^{-1}%
(\boldsymbol{\theta}_{0})\boldsymbol{R}^{T}(S\mathbf{)}\right)  ^{-1}%
=\boldsymbol{\Sigma}_{1}(\boldsymbol{\theta}_{0},S),
\]%
\begin{align*}
\mathrm{Var}(\boldsymbol{Z}_{2}(S))  &  =\boldsymbol{M}_{2}(\boldsymbol{\theta
}_{0},S\mathbf{)}\boldsymbol{M}_{2}^{T}(\boldsymbol{\theta}_{0},S\mathbf{)}%
=\boldsymbol{R}(S^{C})\boldsymbol{P}(\boldsymbol{\theta}_{0},S\mathbf{)}%
\mathcal{I}_{F}(\boldsymbol{\theta}_{0})\boldsymbol{P}^{T}(\boldsymbol{\theta
}_{0},S\mathbf{)}\boldsymbol{R}^{T}(S^{C})\\
&  =\boldsymbol{R}(S^{C})\boldsymbol{P}(\boldsymbol{\theta}_{0},S\mathbf{)}%
\boldsymbol{R}^{T}(S^{C})=\boldsymbol{\Sigma}_{2}(\boldsymbol{\theta}_{0},S).
\end{align*}
The proof of $T_{\phi}(\overline{\boldsymbol{p}},\boldsymbol{p}%
(\widetilde{\boldsymbol{\theta}}),\boldsymbol{p}(\widehat{\boldsymbol{\theta}%
}))$ is almost immediate from the proof for $S_{\phi}(\boldsymbol{p}%
(\widetilde{\boldsymbol{\theta}}),\boldsymbol{p}(\widehat{\boldsymbol{\theta}%
}))$ and taking into account that for some $S\in\mathcal{F}(E)$%
\[
T_{\phi}(\overline{\boldsymbol{p}},\boldsymbol{p}%
(\widetilde{\boldsymbol{\theta}}),\boldsymbol{p}(\widehat{\boldsymbol{\theta}%
}))=T_{\phi}(\overline{\boldsymbol{p}},\boldsymbol{p}%
(\widehat{\boldsymbol{\theta}}(S)),\boldsymbol{p}(\widehat{\boldsymbol{\theta
}}))+\mathrm{o}_{p}(1)=S_{\phi}(\boldsymbol{p}(\widetilde{\boldsymbol{\theta}%
}),\boldsymbol{p}(\widehat{\boldsymbol{\theta}})).
\]

\end{document}